\theoremstyle{plain}
\newtheorem{theorem}{Theorem}[section]
\newtheorem{lemma}[theorem]{Lemma}
\newtheorem{corollary}[theorem]{Corollary}
\newtheorem{proposition}[theorem]{Proposition}
\newtheorem{example}[theorem]{Example}
\numberwithin{equation}{section}
\theoremstyle{definition}
\newtheorem{definition}[theorem]{Definition}
\theoremstyle{remark}
\newtheorem{remark}[theorem]{Remark}
\newcommand{\R}{{\mathbb R}}
\newcommand{\Prob}{\mathop{\mathbb{P}}\nolimits}
\newcommand{\E}{\mathop{\mathbb{E}}\nolimits}
\newcommand{\sM}{{\mathcal M}}
\newcommand{\fpr}{{\mathit f}}
\newcommand{\pseq}{{\mathcal P}}
\newcommand{\bbarkap}{{\mathcal K}_c}
\newcommand{\Hb}{\overline{H}}
\title{Model independent hedging strategies for variance swaps}
\author{David Hobson\thanks{
D.Hobson@warwick.ac.uk} \ 
and Martin Klimmek\thanks{M.Klimmek@warwick.ac.uk} \\
Department of Statistics, University of Warwick}
\date{\today}
\begin{document}
\maketitle

\begin{abstract}

A variance swap is a derivative with a path-dependent payoff which allows 
investors to take positions on the future variability of an asset. In the 
idealised setting of a continuously monitored variance swap written on an 
asset with continuous paths it is well known that the variance swap payoff can 
be replicated exactly using a portfolio of puts and calls and a dynamic 
position in the asset. This fact forms the basis of the VIX contract.

But what if we are in the more realistic setting where the contract is based 
on discrete monitoring, and the underlying asset may have jumps? We show that 
it is possible to derive model-independent, no-arbitrage bounds on the price 
of the variance swap, and corresponding sub- and super-replicating strategies. 
Further, we characterise the optimal bounds. The form of the hedges depends 
crucially on the kernel used to define the variance swap.

\end{abstract}

\section{Introduction}
\label{sec:intro}

The purpose of this article is to construct hedging strategies which 
super-replicate the payoff of a variance swap for {\em any} price path 
of the underlying asset, including price paths with jumps. The idea is 
that at initiation time $0$, an agent purchases a portfolio of puts and 
calls which she holds until time $T$. In addition, she follows a simple, 
dynamic investment strategy in the underlying over $[0,T]$. Then, for 
every possible path of the underlying, the sum of the payoff from the 
vanilla portfolio plus the gains from trade from the dynamic strategy is 
(more than) sufficient to cover the obligation from the variance swap. 
Implicit in this set-up is the idea that the super-hedge does not rely 
on any modelling assumptions. Instead, the super-hedge is robust 
even in the presence of jumps.

The problem of finding the cheapest super-hedging strategy can be seen as 
the dual of a primal problem which is to bound the prices for variance 
swaps over the class of all models for the asset price process which are 
consistent with the traded prices of puts and calls. If the variance 
swap is sold for the price upper-bound and hedged with the corresponding 
super-replicating strategy then the seller will not lose money under any 
scenario.

The model-independent approach should be contrasted with the standard 
methodology which begins with a stochastic model for asset prices, and 
then infers the price of the variation swap by calculating the expected 
payoff. However, in markets where vanilla instruments are liquidly 
traded, the prices of puts and calls contain information about the 
market's expectations of the future behaviour of asset prices. The 
existence of this information removes the need to model the future, and this 
fact forms the basis of the model-independent approach.

In addition to super-hedges and upper bounds on the price of the variance 
swap we also give sub-hedges and lower bounds. Moreover, our analysis is 
not restricted to any particular definition of the variance swap, nor is 
it based on a mathematical idealisation of a continuous time limit of 
the swap contract, but rather on a discrete set of observations. We 
define variance swaps through their kernels; bivariate functions with 
regularity properties making them suitable to measure variance 
properties of the price path. Examples of kernels include squared simple 
returns, squared $\log$ returns and squared price differences. 
Furthermore, the sub- and super-replicating hedges work for discretely 
sampled variance swaps and continue to work in the continuous time 
limit. As long as the price path has a quadratic variation, these limits 
exist by F\"{o}llmer's path-wise It\^{o} formula \cite{Follmer:79}.  The 
standard approach to variance swap pricing is to assume a stochastic 
model and that the underlying paths are generated from a semi-martingale 
process with respect to this model. In this article, a model is 
specified only when it is necessary to show that the cheapest 
super-replicating hedge is tight.

Under some minimal restrictions on the form of the variance swap kernel 
we find a family of super-hedging strategies. This family is 
parameterised by a set of monotone functions. Then, given that the 
prices of call options for the expiry date of the variance swap are 
known (or equivalently the marginal law of the underlying price process 
at maturity is known) we show that there exists a cheapest 
super-replicating hedge from the given family. This hedge is associated 
with a monotone function, and we use this function to describe a 
stochastic model for the forward price of the asset in which the price 
process is continuous, except perhaps for a single jump, after which the 
process remains constant. In the continuous time limit, the super-hedge 
replicates the payoff of the variance swap if the asset price follows 
this one-jump model. This shows that the bounds we produce are best 
possible and justifies the restriction of our search to hedging 
strategies within the given family.

This article shares the model-independent ethos for the pricing of 
variance swaps implicit in Neuberger \cite{Neuberger:1994} and Dupire 
\cite{Dupire:1992} in the setting of continuous price processes. In 
those articles, it was shown that if we assume that the asset price 
process is a continuous forward price, then the continuously monitored 
variance swap based on either squared $\log$ returns or squared simple 
returns is perfectly replicated by the following strategy: synthesise 
$-2 \log$ contracts using put and call options and trade continuously in 
the asset to hold a number of shares equal to twice the reciprocal of 
the current asset price at all times. We will refer to this strategy as 
the classical continuous hedge. By results due to Breeden and 
Litzenberger \cite{BreedenLitzenberger:78}, it is possible to 
approximate any sufficiently regular payoff with vanilla options.  As a 
special case Demeterfi et. al. \cite{DemeterfiDerman:992} show how to 
approximate the $\log$ contract with a finite range of vanilla options. 
It follows that in the setting of a continuous forward price, the unique 
no-arbitrage price for the variance swap is equal to the price of the 
contract with payoff equal to $-2 \log$ contracts. This result holds 
independently of any modelling assumptions beyond path continuity. The 
hedging strategies in this paper are of the same character, consisting 
of a static position in calls and puts and dynamic trading in the 
underlying. However, the underlying setup is considerably more general, 
and the results more powerful since the hedges continue to 
super-replicate the variance swap for discontinuous price-paths and 
discrete monitoring over arbitrary time partitions. Nonetheless, this 
increase in generality comes at a cost in that instead of a replicating 
strategy we get sub- and super-replicating strategies and instead of a 
unique no-arbitrage price we get a no-arbitrage interval of prices.

As is well known, the model-independent analysis of derivative prices is 
related to the construction of extremal solutions for the Skorokhod 
embedding problem. This relationship was first developed in 
Hobson~\cite{Hobson:98}, see Hobson~\cite{Hobson:10} for a recent 
survey, and exploits the idea that the classification of martingales with 
a given terminal law is equivalent to the classification of stopping 
times for Brownian motion, such that the stopped process has that given 
law. As we shall see, the monotone function which is associated with the 
cheapest super-hedging strategy arises in the Perkins solution 
\cite{Perkins:86} of the Skorokhod embedding problem 
\cite{Skorokhod:65}. For another example of model independent pricing and the 
connection between derivatives and the Skorokhod embedding problem in the 
context of variance options, see Cox and Wang~\cite{CoxWang:11}. In the 
setting of continuous price paths Cox and Wang~\cite{CoxWang:11} give bounds 
on the prices of call options on realised variance by exploiting a connection 
with the Root solution of the Skorokhod embedding problem.

In a recent paper \cite{Kahale:11}, Kahal\'e shows how to derive a tight 
sub-replicating strategy and corresponding model-independent lower bound for 
the price of a variance swap based on the squared $\log$ return kernel. The 
paper by Kahal\'e was an inspiration for our study which grew from an attempt 
to relate his work to the previous literature on model-independent bounds and 
the Skorokhod embedding problem. By framing the problem in this way we extend 
the results of Kahal\'e~\cite{Kahale:11} to other kernels, and give upper 
bounds as well as lower bounds. Moreover, in the case of squared returns where 
the connection is particularly explicit, we explain the origin of the extremal 
models, and we give a natural interpretation for some of the quantities 
appearing in \cite{Kahale:11} in terms of the Perkins embedding of the 
Skorokhod embedding problem. The analysis of the squared returns kernel 
motivates our general approach to variance swap bounds and links this work to 
previous results of the authors (Hobson and Klimmek~\cite{HobsonKlimmek:11a})
on characterising solutions of the Skorokhod embedding problem with particular 
optimality properties.

Also, we give an interpretation of the continuous 
time limit of the bounding strategies using F\"{o}llmer's path-wise It\^{o} 
calculus \cite{Follmer:79}. F\"{o}llmer's non-probabilistic It\^{o} calculus 
has been used elsewhere in mathematical finance, most notably by Bick and 
Willinger \cite{BickWillinger:1994}, and helps emphasise the fact that the 
gains from trade have an interpretation as (the limit of) Riemann sums.

One of the features of our analysis is that we study the variance swap 
under a variety of definitions for the contract. Early definitions of 
the variance swap were based on squared simple daily returns. 
Accordingly, the first analysis of the discrepancy between the classical 
continuous hedge and realised variance in the presence of jumps, which 
is due to Demeterfi et. al. \cite{DemeterfiDerman:99}, focused on this 
kernel. Later, the finance industry switched to a standardised 
definition based on $\log$-returns. (These contracts are typically sold 
OTC, and therefore any specification of the contract, and any 
observation frequency is possible.) In their comprehensive survey of the 
literature on variance derivatives Carr and Lee~\cite{CarrLee:20092} 
give a plausible reason for this change based on the fact that banks 
tended to be buyers of variance swaps. Conventional wisdom states that 
downward jumps are more frequent than upward jumps and, in contrast to 
the situation for squared simple returns, for the squared $\log$-return the 
contribution of downward jumps to the value of the variance swap is 
positive. Hence a switch to the $\log$ return definition was profitable 
to the banks.

This conjecture about the history of the variance swap illustrates the 
idea that in the presence of discrete monitoring or jumps (but not in 
the case of continuous monitoring and continuous price processes) each 
kernel lends different characteristics to variance swap values. Partly 
for this reason a variety of kernels have been proposed in the 
literature. Bondarenko~\cite{Bondarenko:07} introduces a kernel which 
lies between the squared $\log$ return and squared simple return 
definitions. Bondarenko's proposal is motivated by the fact that 
variance swaps based on this kernel can be replicated perfectly in the 
presence of jumps and in discrete time. In a recent working paper, 
Neuberger \cite{Neuberger:10} provides an alternative analysis for this 
type of payoff, introducing the so-called aggregation property. 
Neuberger also shows that kernels with this property have a 
model-independent price. The kernel proposed by Carr and Corso 
\cite{CarrCorso:01} in the context of commodity markets, which is based 
on squared price differences, belongs to the same class. Recently Martin 
\cite{Martin:11} has proposed yet another definition which is similar to 
the squared-return kernel but involves both the forward and the asset 
price. Our analysis covers all these kernels (though the kernel in 
\cite{Martin:11} is only covered for the case of zero interest rates), 
and emphasises that the impact of jumps depends crucially on the nature 
of the kernel. We find that kernels split into two classes - below we 
name them increasing and decreasing kernels --- and the special 
properties of the Bondarenko kernel come from the fact that it lies in 
the intersection of these classes.

Apart from asset price jumps, a further issue in the pricing and hedging 
of variance swaps is that the idealised continuous time limit may be a 
poor approximation to the traded contract which is based on discrete 
monitoring. For example, in \cite{BroadJain:08} Broadie and Jain show 
that when the price path has negative jumps the value of the 
discretely monitored ($\log$-return) variance swap can differ 
significantly from the value of the continuously monitored variance 
swap. Similarly, Bondarenko \cite{Bondarenko:07} investigates the 
hedging error that develops if the strategy of the classical continuous 
approach is approximated discretely, and reports replication errors of 
around 30 percent. From a theoretical perspective, Jarrow et. 
al.~\cite{JarrowProtter:10} show that we may have that the price of the 
continuously monitored variance swap is finite, whilst simultaneously the 
discretely 
sampled analogue may has an infinite price, an observation which raises 
fundamental questions about the validity of using the continuous time 
integrated variance as an approximation for the discretely monitored 
quantity. These previous studies underscore the importance of a 
model-independent analysis, especially one based on a finite number of 
monitoring points. Again in the continuous set-up, Platen and 
Chen~\cite{Platen:2010} show that variance swap values are infinite 
under realistic modelling assumptions and argue that this implies a risk 
of liquidity crises in financial markets. This article helps to quantify 
that risk: if call prices are such that the model independent upper 
bound for the variance swap is finite, then for all models which are 
consistent with the market data the variance swap value is finite.

Recognising the importance of the jump contribution to variance swap 
values, Carr, Lee and Wu \cite{CarrLeeWu:2009} 
show how it is possible to price and 
hedge a variance swap based on $\log$ returns if the asset price follows 
a L\'evy model. The analysis is extended to a more general class of 
variation swaps in \cite{CarrLee:10}. Given a particular L\'evy model 
for the dynamics of the price path, Carr and Lee show that there exists 
a model-dependent adjustment to the multiplier $2$ appearing in the 
classical continuous hedge such that the value of the variance swap is 
given by the new multiplier times the price of a $\log$-contract. In 
general, this price is not enforceable through a hedging strategy. 
Moreover, since all models are wrong and since the adjustment of the 
multiplier depends on specifying a particular model, this approach may 
still significantly mis-price realised variance, even if the L\'evy 
model calibrates well to options prices.

The appeal of the classical continuous hedge of Neuberger and Dupire is 
that, apart from price-path continuity, the only necessary assumption is 
that a $\log$ contract can be synthesised from put and call options, and 
then the option payoff can be replicated perfectly along each path. In 
this article, we continue to assume that regular payoffs can be 
replicated with vanilla options, but relax the continuity assumption. 
The prices of variance swaps are highly sensitive to the presence of 
jumps, and so this is an important advance.

The remainder of the paper is structured as follows.
In the next section we introduce the variance swap, and show how the 
definition depends on the form of the kernel. In 
Section~\ref{sec:motivation} we study the problem in the setting of 
continuous monitoring for a process with jumps. The understanding we 
develop in this section will motivate much of the subsequent analysis.
Section~\ref{sec:bds} contains the main result, and shows how to 
construct a class of sub-hedging strategies. In 
Sections~\ref{sec:expensive} and \ref{sec:cts} we find the most 
expensive sub-hedge of this class for a given set of call prices, and 
thus we derive a model independent bound on 
the price of a variance swap, and then we show this bound is best 
possible, by showing that in the continuous time limit it can be 
attained. 
In Section~\ref{sec:nzir} we extend our results from contracts written on 
forwards to include the case of contracts written on undiscounted prices.
The penultimate section gives some numerical results and 
concluding remarks are given in 
Section~\ref{sec:remarks}.

\section{Variance Swap Kernels and Model-Independent Hedging}
\label{sec:defs}

\subsection{Variation swaps}
We begin by defining the payoff of a variance swap on a path-wise 
basis. The payoff will depend on a kernel, on the times at which the 
kernel is evaluated and on the asset price at these times.

\begin{definition} \label{d:varswapkernel}
A {\it variation swap kernel} is a continuously differentiable bi-variate 
function 
$H:(0,\infty) \times (0,\infty) \rightarrow [0,\infty)$ such that for 
all $x \in (0,\infty)$, $H(x,x)=0=H_{y}(x,x)$. We say that the swap kernel is 
{\it regular} if it is 
twice continuously differentiable.

{\it A variance swap kernel} is
a regular variation swap kernel $H$ such that 
$H_{yy}(x,x) = x^{-2}$.
\end{definition}

Our main focus in this article is on variance swap kernels but we will 
discuss variation swap kernels $H^S(x,y) = (y-x)^3$ and 
$H^Q(x,y)=(y-x)^2$ briefly, see Remark~\ref{r:skew} and Example 
\ref{ex:oneprice}. (Strictly speaking $H^S$ is not a variation swap kernel 
since it is not non-negative, but most of our analysis still apllies in this 
case.)
A regular variation swap kernel is a variance swap kernel if
$H(x,x(1+\delta))= \delta^2 + o(\delta^2)$ for $\delta$ small.
Examples of variance swap kernels include 
$H^R(x,y)=\left(\frac{y-x}{x}\right)^2$, $H^L(x,y)=(\log(y)-\log(x))^2$ 
and $H^B(x,y)=-2\left(\log(y/x)-\left(\frac{y-x}{x}\right)\right)$.

\begin{definition} \label{d:partition}
A {\it partition} $P$ on $[0,T]$ is a set of times $0 = t_0 < t_1 < ... < t_N 
= T$. A partition is {\it uniform} if 
$t_k=\frac{kT}{N}$, $k=0,1,...N$. A sequence of partitions 
$\pseq=(P^{(n)})_{n \geq 1} = (\{t_k^{(n)}; 0 \leq k \leq N^{(n)}\})_{n 
\geq 1}$ is {\it dense} if $\displaystyle \lim_{n \uparrow \infty} \sup_{k 
\in \{0,...,N^{(n)}-1\}} |t^{(n)}_{k+1}-t^{(n)}_k|=0$.
\end{definition}

\begin{definition} \label{d:price}
A {\it price realisation} $\fpr=(\fpr(t))_{0 \leq t \leq T}$ 
is a c\`adl\`ag function $\fpr:[0,T] \rightarrow (0,\infty)$.
\end{definition}

\begin{definition} \label{d:payoff}
The {\it payoff of a variation swap} with kernel $H$ for a 
partition $P$ and a
price realisation 
$\fpr$ is 
\begin{equation}
\label{eq:varswapdef}
V_H(\fpr,P)=\sum_{k=0}^{N-1} 
H(\fpr(t_k),\fpr(t_{k+1})).
\end{equation}
\end{definition}

\begin{remark}
\begin{enumerate}
\item[(i)]
The price realisations $f$ should be interpreted as realisations of the 
forward price of the asset with maturity $T$. Later we will extend the 
analysis to cover un-discounted price processes, rather than forward 
prices.
\item[(ii)]
Large parts of the subsequent analysis can be extended to allow for 
price processes which can take the value zero,
provided we also define $H(0,0)=0$, or equivalently
truncate the sum in (\ref{eq:varswapdef}) at the first time in the
partition that $f$ hits 0.
In this case we must have 
that zero is absorbing, so that if $f(s)=0$, then $f(t)=0$ for all $s 
\leq t \leq T$.
\item[(iii)] 
In practice the variance swap contract is an exchange of the 
quantity $V = V_H(\fpr,P)$ for a fixed amount $K$. However, since 
there is no optionality to the contract, and since the contract paying 
$K$ can trivially be priced and hedged, we concentrate 
solely on 
the floating leg.
\item[(iv)]
In many of the earliest academic papers, and in particular in Demeterfi
et. al~\cite{DemeterfiDerman:99,DemeterfiDerman:992}, but also in some very 
recent 
papers, e.g. Zhu and Lian~\cite{ZhuLian:11},
the variance swap is defined in terms of the kernel $H^R$. 
However, it 
has become market practice to trade variance swaps based on the kernel 
$H^L$.  Nonetheless these contracts are traded over-the-counter and in 
principle it is possible to agree any reasonable definition for the 
kernel. Variance swaps defined using the variance kernel $H^B$ were 
introduced by Bondarenko~\cite{Bondarenko:07}, see also 
Neuberger~\cite{Neuberger:10}. As we shall see, the contract based on 
this kernel
has various desirable features. For continuous paths then in the  
limit of a dense partition 
the contract does not depend on the chosen kernel, 
see Example \ref{ex:oneprice} and Lemma~\ref{c:oneprice}, 
but this is not the case in general.
\item[(v)]
The labels $\{S,Q,R,L,B\}$ on the variation swap kernels denote
\{Skew, Quadratic, Returns, Logarithmic returns, Bondarenko\}
respectively.
\end{enumerate}
\end{remark}

Let $\pseq = (P^{(n)})_{n \geq 1}$ be a dense sequence of partitions.
If $\lim_{n \uparrow \infty} V_H(\fpr,P^{(n)})$ exists then the limit is
denoted $V_H(\fpr,P_\infty)$ and is called the continuous time limit of
$V_H(\fpr,P^{(n)})$ on $\pseq$.

An important concept will be the quadratic variation of a path. 
For a dense sequence of partitions $\pseq$, the {\it quadratic
variation} $[f]$ of $\fpr$ on $\pseq$ is defined to be $[\fpr]_t=\lim_{n
\uparrow \infty} \sum_{t^{(n)}_k \leq t}
(\fpr({t^{(n)}_{k+1}})-\fpr({t^{(n)}_k}))^2$, provided the limit exists.
We split the function into its continuous and discontinuous parts,
$[\fpr]_t=[\fpr]^c_t+\sum_{u \leq t} (\Delta \fpr(u))^2$.
Later we will relate this definition to that 
introduced by F\"{o}llmer~\cite{Follmer:79}, which is used to develop a 
path-wise version of It\^{o} calculus.

\subsection{Model independent pricing}
Our goal is to discuss how to price the variance swap contract, or more 
generally any path-dependent claim, under an 
assumption 
that European call and put (vanilla) options with maturity $T$ are traded and 
can be used for hedging, but without any assumption that a proposed model is a 
true reflection of the real dynamics. In this sense the strategies and prices 
we derive are model independent and robust.

Let call prices be given by $C(K)$, expressed in units of cash at time 
$T$. We assume that a continuum of calls are traded, and to preclude 
arbitrage we assume that $C$ is a decreasing convex function such that 
$C(0) = f(0)$, $C(K) \geq (f(0)-K)^+$ and $\lim_{K \uparrow \infty} C(K) 
= 0$, see e.g. Davis and Hobson~\cite{DavisHobson:07}. We exclude the 
case where $C(f(0))=0$ for then $C(K) = (f(0)-K)^+$ and the situation is 
degenerate: the forward price must remain constant and upper and lower 
bounds on the price of the variance swap are zero. Although we assume 
that calls are traded today (time 0), we do not make any assumption on 
how call prices will behave over time, except that they will respect 
no-arbitrage conditions and that on expiry they will be worth the 
intrinsic value.

\begin{definition}
A {\it synthesisable payoff} is a function $\psi : 
(0,\infty) \mapsto \R$ which 
can be represented as the difference of two convex functions (so that 
$\psi''(x)$ exists as a measure).
\end{definition}

Let $\Psi = \{ \psi: \psi \in \Psi \}$ be the set of synthesisable payoffs 
$\psi : (0,\infty) \mapsto \R$. Then we have
\begin{equation}
\label{eqn:staticpayoff}
\psi(f) 
= \psi(f(0)) 
+ \psi'_+(f(0))(f- f(0)) 
+\int_{(0,f(0)]} (x-f)^+ \psi''(x) dx
+\int_{(f(0),\infty)} (f-x)^+ \psi''(x) dx .
\end{equation}
where $\psi'_+$ denotes the right-derivative.
Thus we can represent the payoff of any sufficiently regular European 
contingent claim as a constant plus the gains from trade from holding a 
fixed quantity of forwards, plus the payoff of a static portfolio of 
vanilla calls and puts.

Let $D[0,t]$ denote the space of c\'adl\'ag functions on $[0,t]$.

\begin{definition}
A {\it dynamic strategy} for a fixed partition $P$ is a
collection of functions $\Delta = (\delta_{t_0}, \ldots, 
\delta_{t_{N-1}})$, 
where $\delta_{t_j} : D[0,t_j] \rightarrow \R$. 
The {\it payoff of a dynamic strategy} along a price realisation $f$ 
is
 \begin{equation} 
\label{eq:gainsfromtradecomplex}
\sum_{k=0}^{N-1} \delta_{t_k}((f(t))_{0 \leq t \leq t_k}) 
(f(t_{k+1})-f(t_k)). 
\end{equation}
Let $\bar{\Delta}(P)$ be the set of dynamic strategies.
\end{definition}

\begin{definition}
$\Delta = \bar{\Delta}(P)$ is a {\it Markov dynamic strategy} 
if $\delta_{t_j}(f(t)_{0 \leq t \leq t_j})=\delta_{t_j}(f(t_j))$ for all $j$. 
A Markov dynamic strategy is a 
{\it time homogeneous Markov dynamic strategy} (THMD-strategy) 
if $\delta_{t_j}(f(t_j))=\delta(f(t_j))$ for all $j$.
\end{definition}

In the sequel we will concentrate mainly on THMD-strategies. The 
quantity $\delta_{t_j}$ represents the quantity of forwards to be held 
over the interval $(t_j, t_{j+1}]$. In principle this quantity may 
depend on the current time and on the price history $(f(t))_{0 \leq t 
\leq t_j}$. However, as we shall see, for our purposes it is sufficient 
to work with a much simpler set of strategies where the quantity does 
not explicitly depend on time, nor on the price history except through 
the current value. We call this the Markov property, but note there are 
no probabilities involved here yet.

\begin{definition}  
A {\it semi-static hedging strategy} $(\psi,\Delta)$ is a function 
$\psi \in \Psi$ and a dynamic strategy $\Delta \in \bar{\Delta}(P)$. 
The terminal payoff of a semi-static hedging strategy for a price 
realisation $f$ is
\begin{equation}  
\label{eqn:payoff}
\psi(f(T)) + \sum_{k=0}^{N-1} \delta_{t_k}((f(t))_{0 \leq t \leq t_k}) 
(f(t_{k+1})-f(t_k)). 
\end{equation}
\end{definition}

Without loss of generality we may assume that $\psi'(f(0))=0$.
If not then we simply adjust each 
$\delta_{t_k}$ by the quantity $\psi'(f(0))$ and the payoff in 
(\ref{eqn:payoff}) is unchanged.  In the sequel, we will concentrate on 
the case when $\Delta$ is a THMD strategy. Then we identify $\Delta \in 
\bar{\Delta}(P)$ with $\delta:(0,\infty) \rightarrow \R$ and write 
$(\psi,\delta)$ instead of $(\psi,\Delta)$.

Given that investments in the forward market may be assumed to be costless, 
the dynamic strategy has zero price. Thus, in order to define the price of a 
semi-static hedging strategy it is sufficient to focus on the price associated 
with the payoff function $\psi$. The last two terms in 
(\ref{eqn:staticpayoff}) are expressed in terms of the payoffs of calls and 
puts. Thus we can identify the price of $\psi(f(T))$ with the price of a 
corresponding portfolio of vanilla objects. We also use put-call 
parity\footnote{This means that we do not need to introduce a notation for the 
put price, which is convenient since $P$ is already in use for the partition. 
Put-call parity for the forward says that the price of a put with strike $x$ 
is the price of a call with the same strike plus $f(0)-x$} to express the cost 
of the penultimate term in (\ref{eqn:staticpayoff}) in terms of call prices. 
Let $\Psi_0 = \{ \psi \in \Psi : \psi'_+(f(0))=0 \}$.

\begin{definition} \label{d:2.9semistatic}
The {\it price of a semi-static hedging strategy} $(\psi \in \Psi_0,\Delta 
\in \bar{\Delta}(P) )$ is \[ 
\psi(f(0)) + \int_{(0,f(0)]} \psi''(x) 
(C(x) + f(0) - x) dx + \int_{(f(0),\infty)} \psi''(x)C(x) dx .
\]
\end{definition}

The idea we wish to capture is that the agent holds a static position in calls
together with a
dynamic position in the underlying such that in combination they provide
sub- and super-hedges for the claim.

\begin{definition} \label{d:2.10subhedge}
Let $G = G((f(t_k))_{k = 0, \ldots N})$ 
be the payoff of a path-dependent option. Suppose that there 
exists a semi-static hedging strategy $(\psi, \Delta)$ 
such that on the partition $P$
\[ G \leq 
\mbox{(respectively $\geq$) } \; 
\psi(f(T)) + \sum_{k=0}^{N-1} 
\delta_{t_k}((f(t))_{0 \leq t \leq t_k}) (f(t_{k+1})-f(t_k))
.\]
Then $(\psi,\Delta)$ is called a {\it semi-static super-hedge} 
(respectively 
{\it semi-static sub-hedge}) for $G$.
\end{definition}

Given a semi-static sub-hedge (respectively super-hedge) we say that the price
of the sub-hedge (respectively super-hedge) is a {\it model independent lower 
(respectively upper) bound} on the price of the 
path-dependent claim $G$. 

\subsection{Consistent models}
The aim of the agent is to construct a hedge which works path-wise, and does 
not depend on an underlying model.
Nonetheless, sometimes it is convenient to introduce a probabilistic 
model and a stochastic process, and to interpret $f(t)$ as a realisation 
of that stochastic process.  In that case we work with a probability space 
$(\Omega, \mathcal{F}, \mathbb{F}, \Prob)$ supporting the stochastic 
process $X=(X_t)_{0 \leq t \leq T}$.

\begin{definition}
\label{d2.15}
A model $(\Omega, \mathcal{F}, \mathbb{F}, \Prob)$ and associated 
stochastic process $X=(X_t)_{0 \leq t \leq T}$ is {\it consistent} with the 
call prices $(C(K))_{K \geq 0}$ if $(X_t)_{t \geq 0}$ is a non-negative 
$(\mathbb{F}, 
\Prob)$-martingale and if
$\E[(X_T-K)^+] = C(K)$
for all $K>0$.
\end{definition}

In the setting of a stochastic model $V_{H}(X,P):\Omega \rightarrow 
\R^+$ is a random variable, and for $\omega \in \Omega$, 
$V_{H}(X(\omega),P)$ is a realised value of a variance swap. From a 
pricing perspective we are interested in getting upper and lower bounds 
on $\E[V_{H}(X(\omega),P)]$ as we range over consistent models. 
Knowledge of call prices is equivalent to knowledge of the marginal law 
of $X_T$ under a consistent model (Breeden and 
Litzenberger~\cite{BreedenLitzenberger:78}). If we write $\mu$ for the 
law of $X_T$ and if $C_{\mu}(K) = \E[(Z_{\mu}-K)^+]$ where $Z_\mu$ is a 
random variable with law $\mu$, then $X$ is consistent for the call prices 
$C$ if $C_\mu(K) = C(K)$.  
We write $m= 
\int_{0}^\infty x \mu(dx)$ 
and we assume, using the martingale property, that $f(0)=m$. 
Then the problem of characterising consistent models is equivalent to 
the problem of characterising all martingales with a given distribution 
at time $T$.

\section{Motivation}
\label{sec:motivation}

\subsection{The continuous case}
In the situation where both the monitoring and the price-realisations are 
continuous the theory 
for the pricing of variance swaps is complete and elegant. We will use 
this setting to develop intuition for the jump case.

Suppose that the price realisation $\fpr$ is continuous, 
and possesses a quadratic variation $[\fpr]:[0,T] \rightarrow \R^+$ on a 
dense sequence of partitions $\pseq$. Dupire \cite{Dupire:1992} and  
Neuberger \cite{Neuberger:1994} independently made the observation that 
the continuity assumption implies that a variance swap with payoff 
$\int_0^T \fpr(t)^{-2}d[\fpr]_t$ can be replicated perfectly 
by 
holding a static portfolio of $\log$ contracts and trading 
dynamically in the underlying asset. Both Dupire and Neuberger assume $\fpr 
\equiv X$ is a realisation of a semi-martingale, but
in our setting, the observation 
follows from a path-wise application of It\^{o}'s formula in the sense of 
F\"{o}llmer \cite{Follmer:79}, see Section \ref{sec:cts}. Applying 
It\^{o}'s formula to 
$-2\log(\fpr(t))$ we have
\begin{equation}
-2 \log(\fpr(T))+ 2 \log(\fpr(0)) = -2 \int_0^T \frac{1}{\fpr(t)} 
d\fpr(t) + \int_0^T \frac{1}{\fpr(t)^2}d[\fpr]_t.
\end{equation}
Then, as we show in Section \ref{sec:cts} below, down a 
dense 
sequence of partitions
\begin{equation} \label{eq:sec1:continuous}
V_{H}(\fpr,P_\infty)=
\int_0^T \frac{1}{\fpr(t)^2}d[\fpr]_t =
-2 \log(\fpr(T)) + 2 \log 
(\fpr(0)) + \int_0^T 
\frac{2}{\fpr(t)} 
d\fpr(t).
\end{equation}
Provided it is possible to trade continuously and without 
transaction costs, 
the right-hand-side of this identity has a clear interpretation as the 
sum of a European contingent claim with maturity $T$ and payoff $- 2 
\log ( \fpr(T)/\fpr(0))$ and the gains from trade from a dynamic 
investment
of $2/\fpr(t)$ in the underlying. 
Alternatively, the right-hand-side of (\ref{eq:sec1:continuous}) can be 
viewed as the payoff of a semi-static hedging strategy in the continuous 
time limit for the choice $\psi(x) = - 2 \log (x/f(0)) + 
2(x-f(0))/f(0)$ and $\Delta = (\delta_t)_{0 \leq t \leq T}$ where
$\delta_t( (f(u))_{0 \leq u \leq t})= (2/f(t)) - (2 /f(0))$. Note that 
there is equality in  
(\ref{eq:sec1:continuous}) so that $(\psi,\delta)$ is both a sub- and 
super-hedge for $V_H(f, P_\infty)$. 
In particular, under a price continuity assumption, the variance swap 
has a model-independent price and an associated riskless hedge.

\subsection{The effect of jumps on hedging with the classical continuous 
hedge}

Even if the continuity assumption cannot be justified, 
the associated replication strategy is nevertheless a reasonable 
candidate for a 
hedging strategy in the general case. Let us focus on the discrepancy between 
the payoff of the variance swap and the gains from trade resulting from 
using the hedge derived in the continuous case. 
The path-by-path It\^{o} 
formula continues to apply in the case with jumps, see 
\cite{Follmer:79} and Section \ref{sec:cts} below. Hence
\begin{eqnarray*} -2\log(\fpr(T))+ 2\log(\fpr(0)) & = & -2 \int_0^T 
\frac{1}{\fpr(t-)} 
d\fpr(t)+ \int_0^T \frac{1}{\fpr(t-)^2} d[\fpr]^c_t  \\
&& \hspace{5mm} + \sum_{0 \leq t \leq T} 
2 \left\{ \left(\frac{\Delta \fpr(t)}{\fpr(t-)}\right) 
- \log\left(1+\frac{\Delta \fpr(t)}{\fpr(t-)}\right) \right\}.
\end{eqnarray*}
Note that $d[\log(\fpr)]_t={d[\fpr]^c_t}/{\fpr(t-)^2}+
 (\Delta \log(\fpr(t)))^2$. By adding and subtracting the 
discontinuous part of the quadratic variation of $\log(\fpr)$ on the 
right-hand-side of the above expression, we find
\begin{equation} \label{eq:rearrange}
-2\log(\fpr(T)) + 2 \log f(0) = - 2 \int_0^T \frac{1}{\fpr(t-)} 
d\fpr(t)+ 
[\log(\fpr)]_T  
- \sum_{0 \leq t \leq T} J_L( {\Delta 
\fpr(t)}/{\fpr(t-)}) 
\end{equation}
where \[J_L(\eta)=-2\eta + 2\log(1+\eta)+\log(1+\eta)^2.\] 
It is intuitively clear, but see also Corollary~\ref{cor:ctslimit}, that 
$V_{H^L}(\fpr,P_\infty) \equiv [\log(\fpr)]_T$. Then it 
follows by re-arrangement of equation (\ref{eq:rearrange}) that the 
discrepancy between the realised value of the variance swap 
$V_{H^L}(\fpr,P_\infty)$ and the return 
generated by the classical continuous hedging strategy can be 
represented as 
the sum of the jump contributions:
\begin{equation*}
V_{H^L}(\fpr,P_\infty) - \left( - 2 \log(\fpr(T)) +2 \log \fpr(0) + 2 
\int_0^T 
\frac{1}{\fpr(t-)} 
d\fpr(t) \right) = \sum_{0 \leq t \leq T} 
J_L\left(\frac{\Delta \fpr(t)}{\fpr(t-)}\right).
\end{equation*}
We call this the hedging error with the convention that if 
the hedge sub-replicates the variance swap then the 
hedging error is positive.

Now consider the kernel $H^R$ and define $V_{H^R}(\fpr,P_\infty) = \int_0^T d 
[f]_t/f(t-)^2$, again, see Corollary~\ref{cor:ctslimit} for justification.
By a similar analysis, but adding and subtracting $\left(\frac{\Delta 
\fpr(t)}{\fpr(t-)}\right)^2$ instead of the discontinuous part of the 
quadratic variation of $\log(\fpr)$, 
we have
\begin{equation*}
V_{H^R}(\fpr,P_\infty) - \left( -2 \log(\fpr(T)) + 2 \log(\fpr(0)) + 2 
\int_0^T 
\frac{1}{\fpr(t-)}
d\fpr(t) \right) = \sum_{0 \leq t \leq T} J_R\left(\frac{\Delta
\fpr(t)}{\fpr(t-)}\right). \end{equation*}
where 
\[J_R(\eta)=- 2 \eta + 2 \log(1+\eta)+ \eta^2.\] 
In the continuous case, under some 
mild regularity conditions on $\fpr$ and $\pseq$, the variance swap 
value is independent of the chosen kernel. In contrast, the value of a 
variance swap in the general case is highly dependent on the chosen 
kernel. 

To see that this is the case, and to examine the impact of jumps on the 
hedging error for the kernels $H^L$ and $H^R$ we consider the shapes 
of the functions 
$J_R$ and $J_L$, see Figure 1.
For the kernel $H^L$, a downward jump results in a positive contribution
to the hedging error. Thus, if all jumps are downwards, then the 
classical 
continuous hedging
strategy sub-replicates $V_{H^L}(\fpr,P_\infty)$. Conversely, upward
jumps result in a negative  contribution
to the hedging error.
The story is reversed for the kernel
$H^R$.

\begin{figure}[H]\label{Fig.1}
\begin{center}
\includegraphics[height=5cm,width=7cm]{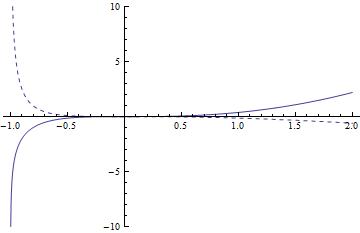}
\caption{$J_L$ (as represented by the dashed line) is convex decreasing for $x 
\leq 0$ 
and concave decreasing for $x \geq 0$. In contrast $J_R$ (solid line) 
is first concave increasing and then convex increasing. The different shapes 
of these two curves explains the different nature of the dependence of the 
payoff of the variance swap on upward and downward jumps for different 
kernels.}
\end{center}
\end{figure}

It follows from the argument in the previous paragraph that for the 
kernel $H^L$ the hedging error will be maximised under scenarios for which the 
price 
realisation has downward jumps, but no upward jumps. Paths with this 
feature might arise as realisations of $-N$ where $N=(N_t)_{t \geq 0}$ 
is a compensated Poisson process.
Moreover, from the convexity of $J_L$ on $(-1,0)$, it is plausible that 
the scenarios in which the hedging error is maximised are those in which price realisations have a single large downward jump, rather than a series of small jumps. Again 
if we wish to minimise the hedging error we should expect a single 
large upward jump, and the story is reversed for the kernel $H^R$.  
 
In summary, we find that, under a continuity assumption on $\fpr$, and 
for a dense sequence of partitions, the value of a variance swap is 
independent of the kernel and can be replicated with a static hedge in a 
forward contract and a dynamic hedging strategy. In the presence of 
jumps, however, the value of the variance swap depends on the kernel. An 
agent who holds a variance swap and hedges under the assumption of 
continuity, may super-replicate or sub-replicate the payoff depending on 
the form of the jumps. For example, for the kernel $H^L$ an agent who 
acts as if the price realisation can be assumed to be continuous will 
sub-replicate the variance swap if there are downward jumps and no upward 
jumps. Such an agent will underprice the swap.

We will use the analysis of this section to give us intuition about the 
extremal models which will lead to the price bounds on variance 
swaps derived in the Section~\ref{sec:bds}. 
The bounds will depend crucially on the kernel. Models under which the 
variance swap with kernel $H^L$ has highest price (assuming consistency 
with a given set of call prices) will be characterised by a single 
downward jump and no upward jumps. 

\begin{remark} \label{r:skew}
We will see later that the model which minimises the price for variance swaps with kernel $H^R$ also minimises the price for variation swaps with kernel $H^S$. If $\fpr$ has a quadratic variation, then in the continuous limit
$V_{H^S}(\fpr,P_\infty) = \sum_{0 < t \leq T} (\Delta \fpr(t))^3$. This payoff will be smallest if all jumps are downwards and we will see that if the call prices are given for expiry time $T$, then the model that produces the lowest price is one under which the price path has a single downward jump. 
\end{remark}

\subsection{A related Skorokhod embedding problem}
\label{sec:skorokhod}

In this section we relate the problem of finding extremal prices for the 
variance swap to a Skorokhod embedding problem. Again the aim is to 
develop intuition which will guide the derivation of the optimal 
model-free hedges in the next section.

Let $\mu$ be a measure on 
$\R^+$ with mean $m$ and
let $\sM_\mu$ be the class of all martingales such that for $X \in 
\sM_\mu$, $X_0=m$ and
$X_T \sim \mu$. For
each $X \in \sM_\mu$ there exists time-change $t \rightarrow A_t$, null 
at 0, such that $X_t=B_{A_t}$ for a Brownian motion $B$ started at $m$. 
Suppose that we have a filtered probability space 
$(\Omega, {\mathcal G}, {\mathbb G}, \Prob)$ such that $B$ is a 
${\mathbb G}$-Brownian motion with $B_0=m$. Then $X$ is adapted to the 
filtration ${\mathbb F}=({\mathcal F}_t)_{t \geq 0}$ where ${\mathcal 
F}_t = 
{\mathcal 
G}_{A_t}$.

Let $A^c$ be the continuous part of $A$. Note that 
$dA^c_t=(dX_t^c)^2=d[X]^c_t$.  Let $S^X=(S^X_t)_{t \geq 0}$ 
(respectively $S$) be the process of the running maximum of $X$ 
(respectively $B$) so that $S^X_t = \sup_{u \leq t} X_u$. Note that $X_t \leq 
S^X_t \leq S_{A_t}$ and then, 
path-by-path 
with $\Delta B_{A_t} = B_{A_t} - 
B_{A_{t-}}$, we have
\begin{eqnarray}
V_{H^R}(X,P_\infty) \;  =  \; \int_0^T \frac{d[X]^c_t}{(X_{t-})^2}
     + \sum_{0 \leq t \leq T} \left(\frac{\Delta 
X_t}{X_{t-}}\right)^2 
& \geq &  \int_0^T \frac{d[X]^c_t}{(S^X_{t-})^2} 
     + \sum_{0 \leq t \leq T} \left(\frac{\Delta X_t}{S^X_{t-}}\right)^2 
\label{eq:perkinslb1}\\ 
&\geq& \int_0^T \frac{d A_t^c}{(S_{A_{t-}})^2} + \sum_{0 \leq t \leq 
T} \left(\frac{\Delta B_{A_t}}{S_{A_{t-}}}\right)^2. \ \ 
 \label{eq:perkinslb2}
\end{eqnarray} 
We suppose, for the moment, that $\mu$ has a second moment. 
Then $(X_{t})_{0 \leq t \leq T}$ 
is a square-integrable martingale and we find that, \begin{eqnarray*} 
\E\left[\int_0^T \frac{dA_t^c}{(S_{A_{t-}})^2}+\sum_{0 \leq t \leq T} 
\left(\frac{\Delta B_{A_t}}{S_{A_{t-}}}\right)^2\right] &=& 
\E\left[\int_0^T \frac{dA^c_{t}+\Delta A_t}{(S_{A_{t-}})^2}\right] \\ 
&=& \E\left[\int_0^T \frac{dA_t}{(S_{A_{t-}})^2}\right] \\ &\geq& 
\E\left[\int_0^{A_T} \frac{du}{(S_u)^2}\right]. \end{eqnarray*} This 
motivates looking at the following problem: \begin{equation} \min_{\tau 
\in UI(\mu)} \E\left[\int_0^\tau \frac{du}{S_u^2}\right], \end{equation} 
where $UI(\mu)$ is the class of stopping times such that $B_\tau \sim 
\mu$ and $B_{t \wedge \tau}$ is uniformly integrable. This problem is a 
special case of a problem considered in Hobson and 
Klimmek~\cite{HobsonKlimmek:11a}, where it is proved that the minimum 
is attained by the Perkins embedding, which we will denote $\tau_\mu^P$.
Note that the Perkins solution of the Skorokhod embedding problem is 
generally defined for centred probability measures, but the translation
to measures with non-zero mean equal to the non-zero starting point is 
trivial.

Let $I = (I_t)_{t \geq 0}$ denote the infimum 
process $I_t = \inf_{u \leq t} B_u$.

\begin{theorem} \label{t:hobsonpedersen} [Perkins~\cite{Perkins:86}, Hobson and 
Pedersen~\cite{HobsonPedersen:02}]
Given $\nu$ a probability measure with support on $\R^+$, with mean $m$ 
let $Z_\nu$ denote a random variable with law $\nu$ and 
define 
$C_\nu(z) = \E[(Z_\nu - z)^+]$ and $P_\nu(z) = \E[(z - Z_\nu)^+]$.
Define also $\alpha=\alpha_{\nu}: (m,\infty) \mapsto [0,m)$ and 
$\beta=\beta_{\nu} : 
(0,m) 
\mapsto 
(m,\infty)$ 
by
\begin{equation} \label{eq:perkinsdef}
\alpha(z) = \arg \min_{y < m} \frac{C_{\nu}(z) - P_\nu(y)}{z-y},  
\hspace{10mm}
\beta(z) = \arg \min_{y > m} \frac{P_\nu(z) - C_\nu(y)}{y-z}.
\end{equation}

Let $B$ be Brownian motion started at $m$, with maximum process $S$ and 
minimum process $I$. Suppose $\mu$ has no atom at $m$. Then $\tau^P_\nu 
:= \inf \{ 
u> 0 : B_u < \alpha_\nu(S_u) \mbox{ or } B_u > \beta_\nu (I_u) \}$ 
solves the Skorokhod embedding problem for $\nu$ in the sense that 
$B_{\tau^P_\nu} \sim \nu$ and $(B_{t \wedge \tau^P_\nu})_{t \geq 0}$ is 
uniformly integrable.

If $\nu$ has an atom at $m$ then we assume ${\mathcal F}_0$ is 
sufficiently rich as to support a uniform random variable $\tilde{Z}_U$, 
which 
is 
independent of $B$. Then
\[ \tau^P_\nu := \left\{ \begin{array}{ll}
0 & \tilde{Z}_U \leq  \nu(\{m\}) \\
\inf \{ u> 0 : B_u < 
\alpha_\nu(S_u) 
\mbox{ or }
B_u >
\beta_\nu (I_u) \}  \hspace{5mm} \;& \tilde{Z}_U> \nu(\{m\}) \end{array} 
\right.
\]  
solves the Skorokhod embedding for $\nu$.
\end{theorem}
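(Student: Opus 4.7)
My plan is to follow the classical line of Perkins and of Hobson-Pedersen, organised in four steps.

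\textbf{Step 1 (Regularity of $\alpha$ and $\beta$).} First I would establish that the functions $\alpha_{\nu},\beta_{\nu}$ are well-defined, monotone, and satisfy a tangency property that drives the rest of the proof. Because $\nu$ is a probability measure on $\R^+$ with mean $m$, the call and put functions $C_\nu$ and $P_\nu$ are convex, decreasing (resp. increasing), with $C_\nu'\in[-1,0]$ and $P_\nu'\in[0,1]$. Differentiating $(C_\nu(z)-P_\nu(y))/(z-y)$ in $y$ and setting the derivative to zero at the minimiser $y=\alpha(z)$ produces the tangency identity
\begin{equation*}
C_\nu(z)=P_\nu(\alpha(z))+P_\nu'(\alpha(z))\,(z-\alpha(z)),
\end{equation*}
so geometrically $\alpha(z)$ is the abscissa where the tangent to the graph of $P_\nu$ passes through $(z,C_\nu(z))$. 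The analogous identity holds for $\beta$. Convexity of $C_\nu, P_\nu$ then gives that $\alpha$ is non-increasing on $(m,\infty)$ and $\beta$ is non-increasing on $(0,m)$, and the pair $(\alpha,\beta)$ is, up to flat portions arising from atoms of $\nu$, inverse to each other. Using right-/left-derivatives throughout handles the non-smooth case.

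\textbf{Step 2 (Finiteness and uniform integrability).} Next I would show $\tau^P_\nu<\infty$ a.s.\ and that $(B_{t\wedge\tau^P_\nu})_{t\ge 0}$ is UI. Finiteness is immediate since $S$ and $-I$ both tend to $\infty$ and the barriers $\alpha(S_u)$, $\beta(I_u)$ take values in the support of $\nu$, so eventually $B$ is forced to cross one of them. For UI, I would invoke the tangency identity to build convex majorants of $|B_{t\wedge\tau}|$ controlled by $C_\nu$ and $P_\nu$, and apply Doob's $L^1$-inequality (equivalently, use that on $\{\tau>t\}$ we have $\alpha(S_t)\le B_t\le\beta(I_t)$, and $\alpha,\beta$ are in the support of a probability measure with finite mean).

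\textbf{Step 3 (Embedding $B_{\tau^P_\nu}\sim\nu$).} This is the substantive step and I expect it to be the main obstacle. The strategy is to verify $\E[(B_{\tau^P_\nu}-K)^+]=C_\nu(K)$ for every $K>0$; by Breeden-Litzenberger this pins down the law. Fix $K\ge m$. On the event $\{B_{\tau^P_\nu}>K\}$ the upper barrier is triggered, so $B_{\tau^P_\nu}=\beta(I_{\tau^P_\nu})$. Using the tangency relation satisfied by $\beta$ at the random point $I_{\tau^P_\nu}$ together with optional sampling of the martingale $B_{t\wedge\tau}$ and the martingale $(B-I)_{t\wedge\tau}$ gives an integral identity for $\E[(B_{\tau}-K)^+]$ expressed through the law of $I_\tau$; the tangency identity then collapses this to $C_\nu(K)$. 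An analogous argument with the roles of $S$ and upper/lower barriers reversed handles $K<m$. The cleanest implementation uses excursion theory of $B$ away from $S$ on $\{B_\tau<m\}$ and away from $I$ on $\{B_\tau>m\}$; the tangency condition is exactly what makes the excursion-measure computation reproduce the densities of $\nu$ to the left and right of $m$, respectively.

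\textbf{Step 4 (Atom at $m$).} Finally, if $\nu(\{m\})>0$, I would split the probability space at time $0$ using the independent uniform $\tilde{Z}_U$. With probability $\nu(\{m\})$ set $\tau=0$ (contributing the required atom), and on the complement rerun Steps 1--3 with $\nu$ replaced by the renormalised measure $\nu|_{\R^+\setminus\{m\}}/(1-\nu(\{m\}))$, noting that the definitions of $\alpha,\beta$ are insensitive to rescaling of $\nu$ so the same barriers can be used. UI is preserved by the trivial randomisation. The main technical care in this step is to check that the tangency identity continues to hold when $P_\nu$ has a corner at $m$ (which it does precisely when there is an atom there), and this is where using one-sided derivatives in Step 1 pays off.
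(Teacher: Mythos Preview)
The paper does not contain a proof of this theorem. Theorem~\ref{t:hobsonpedersen} is stated with attribution to Perkins~\cite{Perkins:86} and Hobson and Pedersen~\cite{HobsonPedersen:02} and is used as a black box; the authors neither prove it nor sketch an argument. There is therefore nothing in the paper to compare your proposal against.

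That said, your outline is broadly in line with how the result is established in the cited references: the tangency characterisation of $\alpha,\beta$ via the convex functions $C_\nu,P_\nu$, the excursion-theoretic computation identifying the law of $B_{\tau^P_\nu}$ on each side of $m$, and the independent randomisation to place an atom at $m$ are all standard ingredients of the Perkins/Hobson--Pedersen construction. Your observation in Step~4 that removing the atom at $m$ and renormalising leaves $\alpha$ and $\beta$ unchanged is correct (for $z>m$ the atom at $m$ contributes nothing to $C_\nu(z)$, and for $y<m$ nothing to $P_\nu(y)$, so both scale by the same factor $1/(1-\nu(\{m\}))$ and the argmins are preserved; the conditioned measure also retains mean $m$). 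The step you rightly flag as the main obstacle, Step~3, is indeed where the work lies, and your description of it is a sketch rather than a proof: turning the tangency identities into the equality $\E[(B_\tau-K)^+]=C_\nu(K)$ requires a careful excursion or change-of-variable computation that you have not written out.
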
 

The Perkins embedding has a minimality property in that for increasing 
functions $F$ it minimises $\E[F(S_\tau)]$ over embeddings $\tau$ of 
$\nu$. Moreover, as shown in \cite{HobsonKlimmek:11a} it also minimises 
the expected value of functionals of the joint law of the running 
maximum and terminal value $F(B_\tau,S_\tau)$ over stopping times $\tau$ 
in $UI(\nu)$, provided $F$ satisfies some consistency conditions. The 
salient characteristic of the Perkins embedding which results in 
optimality is that either $B_{\tau_\nu^P}=S_{\tau_\nu^P}$ or 
$B_{\tau_\nu^P}=\alpha_\nu(S_{\tau^P_\nu})$.

Now consider the problem of finding the consistent model for which 
$V_{H^R}(X,P_\infty)$ has lowest possible price, and recall that 
knowledge of call prices is equivalent to knowledge of the marginal law 
$\mu$ of $X_T$. 
To obtain the lowest possible price we
might expect
equality in each of (\ref{eq:perkinslb1})-(\ref{eq:perkinslb2}), and 
thus
that just before a jump, the process is at its current 
maximum. Moreover,  
the model 
should be related to the Perkins embedding. 
 
\begin{lemma} \label{l:qmartingale} Let $B$ be Brownian motion 
started at $m$. Let $\overline{H}_b = \inf \{ u \geq 0 : B_u = b \}$ be the 
first
hitting time of level $b$ by Brownian motion.
Let $\Lambda(t)$ be a strictly increasing, continuous function such that 
$\Lambda(0)=m$ and $\lim_{t \uparrow T} \Lambda(t)$ is infinite.

Define the process
$\tilde{Q}^\mu=(\tilde{Q}^\mu_t)_{0 \leq t \leq T}$ by
\begin{equation} \label{eq:qmartingale}
\tilde{Q}^\mu_t=B_{\Hb_{\Lambda(t)} \wedge \tau^P_\mu},
\end{equation}
and let ${Q}^\mu$ be the right-continuous modification of 
$\tilde{Q}^\mu$.

Then, ${Q}^{\mu}$ is a martingale such that ${Q}^{\mu}_T \sim \mu$. 
Moreover, the paths of $Q^\mu$ are continuous and increasing, except 
possibly at a single jump time. Finally, either $Q^{\mu}_T \equiv 
B_{\tau^P_\mu}=S_{\tau^P_\mu}$ or $Q^{\mu}_T \equiv 
B_{\tau^P_\mu}=\alpha_{\mu}(S_{\tau^P_\mu})$. 
\end{lemma}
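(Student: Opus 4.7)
The plan is to work with the family of $\mathbb{G}$-stopping times $\sigma_t := \overline{H}_{\Lambda(t)} \wedge \tau^P_\mu$, which is nondecreasing in $t$ and bounded above by $\tau^P_\mu$. Setting $\mathcal{F}_t := \mathcal{G}_{\sigma_t}$, the process $\tilde{Q}^\mu_t = B_{\sigma_t}$ is adapted to $\mathbb{F} = (\mathcal{F}_t)_{0 \leq t \leq T}$. Theorem~\ref{t:hobsonpedersen} tells us that $(B_{u \wedge \tau^P_\mu})_{u \geq 0}$ is a uniformly integrable $\mathbb{G}$-martingale with terminal law $\mu$, so the optional sampling theorem yields $\E[\tilde{Q}^\mu_t \mid \mathcal{F}_s] = \tilde{Q}^\mu_s$ for $s \leq t$; passing to the right-continuous modification $Q^\mu$ then produces a c\`adl\`ag $\mathbb{F}$-martingale. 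Since $\Lambda(t) \uparrow \infty$ as $t \uparrow T$ and $\tau^P_\mu < \infty$ almost surely, $\sigma_t = \tau^P_\mu$ for $t$ sufficiently close to $T$, which gives $Q^\mu_T = B_{\tau^P_\mu} \sim \mu$.

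For the path structure, I would introduce the critical time $t^* := \inf\{t \in [0,T] : \overline{H}_{\Lambda(t)} \geq \tau^P_\mu\}$. Continuity and strict monotonicity of $\Lambda$ pin $t^*$ down as the unique value satisfying $\Lambda(t^*) = S_{\tau^P_\mu}$. For $t < t^*$, path-continuity of $B$ forces $B_{\overline{H}_{\Lambda(t)}} = \Lambda(t)$, so $\tilde{Q}^\mu_t = \Lambda(t)$ is continuous and strictly increasing on $[0,t^*)$ with left-limit $S_{\tau^P_\mu}$ at $t^*$; for $t \geq t^*$, $\tilde{Q}^\mu_t = B_{\tau^P_\mu}$ is constant. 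Passing to the right-continuous modification places any discontinuity at $t^*$, so $Q^\mu$ has at most a single jump, from $S_{\tau^P_\mu}$ down to $B_{\tau^P_\mu}$, at time $t^*$.

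It remains to verify the dichotomy for $B_{\tau^P_\mu}$. By continuity of $B$ and the definition of $\tau^P_\mu$, either $B_{\tau^P_\mu} = \alpha_\mu(S_{\tau^P_\mu})$ (the $\alpha$-boundary is hit, producing a genuine downward jump of $Q^\mu$) or $B_{\tau^P_\mu} = \beta_\mu(I_{\tau^P_\mu})$. The main technical obstacle is to show that in this second case $B_{\tau^P_\mu} = S_{\tau^P_\mu}$, so that no jump actually occurs. The variational definition in (\ref{eq:perkinsdef}), combined with the convexity of $P_\mu$ and $C_\mu$, makes $\beta_\mu$ nonincreasing; hence $u \mapsto \beta_\mu(I_u)$ is nondecreasing on $[0,\tau^P_\mu)$, and the constraint $B_u \leq \beta_\mu(I_u)$ together with continuity of $I$ at $\tau^P_\mu$ yields $S_{\tau^P_\mu -} \leq \beta_\mu(I_{\tau^P_\mu}) = B_{\tau^P_\mu}$; since $B$ has no jump, $S_{\tau^P_\mu} = B_{\tau^P_\mu}$. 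The atom case $\mu(\{m\}) > 0$ is covered by the randomisation in Theorem~\ref{t:hobsonpedersen}, under which $\tau^P_\mu = 0$ with probability $\mu(\{m\})$ and $Q^\mu \equiv m$ on that event.
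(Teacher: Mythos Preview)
Your proposal is correct and follows the same approach as the paper, only with substantially more detail. The paper's own proof is a two-sentence sketch: it observes that $\tau^P_\mu < \infty$ a.s.\ gives $Q^\mu_T = B_{\tau^P_\mu} \sim \mu$, and that while $\Hb_{\Lambda(t)} < \tau^P_\mu$ one has $Q^\mu_t = \Lambda(t) = B_{\Hb_{\Lambda(t)}} = S_{\Hb_{\Lambda(t)}}$. The martingale property is left implicit, and the dichotomy $B_{\tau^P_\mu} \in \{S_{\tau^P_\mu}, \alpha_\mu(S_{\tau^P_\mu})\}$ is not argued at all in the proof --- it is quoted just before the lemma as the ``salient characteristic of the Perkins embedding'' and taken as known. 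Your optional-sampling argument for the martingale property and your introduction of $t^* = \Lambda^{-1}(S_{\tau^P_\mu})$ simply make explicit what the paper asserts, and your verification that $B_{\tau^P_\mu} = \beta_\mu(I_{\tau^P_\mu})$ forces $B_{\tau^P_\mu} = S_{\tau^P_\mu}$ (via monotonicity of $\beta_\mu$ and the bound $B_u \leq \beta_\mu(I_u)$ for $u < \tau^P_\mu$) supplies a self-contained justification of a fact the paper imports from the Perkins literature. One minor remark: in that last step you can bypass the continuity of $\beta_\mu$ entirely by noting directly that $I_u \geq I_{\tau^P_\mu}$ and $\beta_\mu$ nonincreasing give $B_u \leq \beta_\mu(I_u) \leq \beta_\mu(I_{\tau^P_\mu}) = B_{\tau^P_\mu}$ for every $u < \tau^P_\mu$, so $S_{\tau^P_\mu} = \max(S_{\tau^P_\mu-}, B_{\tau^P_\mu}) = B_{\tau^P_\mu}$.
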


\begin{proof} Since $\tau^P_\mu$ is finite almost surely
we have that $Q^{\mu}_T \equiv B_{\tau^P_\mu} \sim \mu$. Moreover, 
for 
$\Lambda(t) < \tau^P_\mu$, $Q^\mu_t = 
\Lambda(t)=B_{\Hb_\Lambda(t)}=S_{\Hb_{\Lambda(t)}}$. 
\end{proof}

The martingale $Q^\mu$ will be used in Section~\ref{sec:cts} to 
show that in the continuous-time limit, the bounds we obtain are tight. 
The martingale $Q^\mu$ is the related to the Perkins embedding in the 
same way that the Dubins-Gilat~\cite{DubinsGilat:1978} martingale is 
related to the Az\'ema-Yor~\cite{AzemaYor:79a} embedding.

We can also consider a reflected version of the martingale $Q^\mu$ based 
on the infimum process rather than the maximum process.

\begin{lemma} \label{l:rmartingale} 
Let $\lambda(t)$ be a strictly decreasing, continuous function such that
$\lambda(0)=m$ and $\lim_{t \uparrow T} \lambda(t)$ is zero.

Define the process
$\tilde{R}^\mu=(\tilde{R}^\mu_t)_{0 \leq t \leq T}$ by
\begin{equation} \label{eq:rmartingale}
\tilde{R}^\mu_t=B_{\Hb_{\lambda(t)} \wedge \tau^P_\mu},
\end{equation}
and let ${R}^\mu$ be the right-continuous modification of
$\tilde{R}^\mu$.

Then, ${R}^{\mu}$ is a martingale such that ${R}^{\mu}_T \sim \mu$.
Moreover, the paths of $R^\mu$ are continuous and decreasing, except
possibly at a single jump time. Finally, either $R^{\mu}_T \equiv
B_{\tau^P_\mu}=I_{\tau^P_\mu}$ or $R^{\mu}_T \equiv
B_{\tau^P_\mu}=\beta_{\mu}(I_{\tau^P_\mu})$.
\end{lemma}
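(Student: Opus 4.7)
The plan is to mirror the proof of Lemma~\ref{l:qmartingale}, swapping the running maximum $S$ for the running infimum $I$ and $\alpha_\mu$ for $\beta_\mu$. The key symmetric observation, implicit in Theorem~\ref{t:hobsonpedersen}, is that the monotonicity of $\alpha_\mu$ and $\beta_\mu$ forces a dual characterisation of the Perkins embedding: at $\tau^P_\mu$ either $B_{\tau^P_\mu}=\alpha_\mu(S_{\tau^P_\mu})=I_{\tau^P_\mu}$ (the BM is pinned at its running infimum when stopped by the $\alpha$-rule) or $B_{\tau^P_\mu}=\beta_\mu(I_{\tau^P_\mu})=S_{\tau^P_\mu}$ (pinned at its running maximum when stopped by the $\beta$-rule).

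First I would verify the martingale property by optional sampling. Since $(B_{t \wedge \tau^P_\mu})_{t \geq 0}$ is a uniformly integrable martingale by Theorem~\ref{t:hobsonpedersen}, and $\Hb_{\lambda(t)} \wedge \tau^P_\mu$ is an increasing family of stopping times dominated by $\tau^P_\mu$, the time-indexed family $(\tilde R^\mu_t)_{0 \leq t < T}$ is a martingale in the time-changed filtration, and the property passes to the right-continuous modification $R^\mu$. For the marginal law at $T$, observe that $\lambda(t) \downarrow 0$: on the event $\{I_{\tau^P_\mu}>0\}$, eventually $\lambda(t) < I_{\tau^P_\mu}$, so $\Hb_{\lambda(t)} > \tau^P_\mu$ and $\tilde R^\mu_t = B_{\tau^P_\mu}$; on the event $\{I_{\tau^P_\mu}=0\}$, the dual Perkins characterisation forces $B_{\tau^P_\mu}=0$, while $\tilde R^\mu_t = \lambda(t) \downarrow 0$. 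The right-continuous limit thus matches $B_{\tau^P_\mu}$ on both events, so $R^\mu_T \sim \mu$.

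For the path structure, introduce the (random) critical time $t^* = \lambda^{-1}(I_{\tau^P_\mu})$, which is well defined by the strict monotonicity and range hypotheses on $\lambda$. For $t < t^*$ the Brownian path attains $\lambda(t)$ strictly before $\tau^P_\mu$, giving $R^\mu_t = \lambda(t)$, a continuous strictly decreasing trace along $\lambda$; for $t \geq t^*$ we have $\Hb_{\lambda(t)} \wedge \tau^P_\mu = \tau^P_\mu$, so $R^\mu_t = B_{\tau^P_\mu}$ and the process is constant on $[t^*,T]$. The dichotomy at $t^*$ is precisely that of the dual characterisation: either $B_{\tau^P_\mu}=I_{\tau^P_\mu}=\lambda(t^*)$ and the path is globally continuous, or $B_{\tau^P_\mu}=\beta_\mu(I_{\tau^P_\mu})>\lambda(t^*)$ and $R^\mu$ has a single upward jump at $t^*$, after which it is constant.

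The main obstacle is the careful treatment of the boundary event $\{I_{\tau^P_\mu}=0\}$: here $\lambda(t) > I_{\tau^P_\mu}$ throughout $[0,T)$, so there is no regime in which the approximating sequence $\tilde R^\mu_t$ becomes eventually constant, and one must rely on the dual Perkins characterisation (together with the non-increasing property of $\alpha_\mu$ and the strict inequalities in the stopping rule) to identify $B_{\tau^P_\mu}=0$ on this event. The familiar technicality of a possible atom of $\mu$ at $m$, which requires the modified Perkins construction using the auxiliary uniform $\tilde Z_U$ of Theorem~\ref{t:hobsonpedersen}, is a purely notational addition and does not change the argument.
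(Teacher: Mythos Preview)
Your proposal is correct and follows exactly the approach the paper intends: the lemma is stated without proof as the ``reflected version'' of Lemma~\ref{l:qmartingale}, whose own proof is a two-line sketch, so the symmetry argument you give (swapping $S$ for $I$, $\alpha_\mu$ for $\beta_\mu$, and $\Lambda$ for $\lambda$) is precisely what the authors have in mind. Your treatment is in fact considerably more detailed than the paper's---in particular your explicit verification of the dual dichotomy $B_{\tau^P_\mu}\in\{I_{\tau^P_\mu},\beta_\mu(I_{\tau^P_\mu})\}$ and your handling of the boundary event $\{I_{\tau^P_\mu}=0\}$ go beyond what the paper spells out.
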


\begin{remark} In this section we have exploited a connection between the 
problem of finding bounds on the prices of variance swaps and the Skorokhod 
embedding problem. This link is one of the recurring themes of the literature 
on the model-independent bounds, see Hobson~\cite{Hobson:10}. We exhibit this 
link for the kernel $H^R$, and in this sense at least, it seems that variance 
swaps defined via $H^R$ are the more natural mathematical object. Nonetheless, 
the intuition developed via $H^R$ and the Skorokhod embedding 
problem is valid more widely.
\end{remark}

\section{Path-wise Bounds for Variance Swaps} 
\label{sec:bds}

Previous sections have defined notation and developed intuition for the 
problem. Now we begin the construction of path-wise hedging strategies. 
We do this by defining a class of synthesisable payoffs with a useful 
extra property which can be exploited to give sub-hedges. Then, motivated 
by the results of Section~\ref{sec:skorokhod}, we define a further 
class of payoffs which are based on decreasing functions. Finally we 
show that for the kernel $H^R$, members of this new class belong to the 
former class also, and thus yield sub-hedges.

To construct a sub-hedge for a variation swap with kernel 
$H$ for any price realisation $\fpr$, suppose that there exists a 
pair of functions $(\psi,\delta)$ such that for $x,y \in \R$
\begin{equation} \label{eq:Hbound}
H(x,y) \geq \psi(y)-\psi(x)+\delta(x)(y-x).
\end{equation}
Then we may interpret $(\psi,\delta)$ as a semi-static hedging strategy 
(for a Markov and time-homogeneous dynamic strategy) and 
then
for any price realisation $\fpr$ and partition $P$, 
\[V_H(\fpr,P) \geq \psi(\fpr(T))-\psi(\fpr(0))-\sum_k 
\delta(\fpr(t_k))(\fpr(t_{k+1})-\fpr(t_k)).\]
By Definition~\ref{d:2.10subhedge} we have constructed a sub-hedge for the 
variation swap with kernel $H$.

Suppose now that $H$ is a variance swap kernel, and that $\psi$ is 
differentiable. Recall that $H_y(x,x)=0$.
Dividing both sides of (\ref{eq:Hbound}) 
by $y-x$ and letting $y \downarrow x$, we find that $\delta(x) \leq 
-\psi'(x)$. Similarly letting $y \uparrow x$, $\delta(x) \geq 
-\psi'(x)$. 
Thus if (\ref{eq:Hbound}) is to hold we must have that
$\delta \equiv -\psi'$ and our search for pairs of functions satisfying
(\ref{eq:Hbound}) is reduced to finding differentiable functions $\psi$ 
satisfying
\begin{equation} \label{eq:Hbound2}
H(x,y) \geq \psi(y)-\psi(x) - \psi'(x)(y-x).
\end{equation}
or equivalently,
$\psi(y) \leq H(x,y) + \psi(x) + \psi'(x)(y-x)$. Note that there is 
equality in this last expression at $y=x$.

\begin{definition} \label{d:psimin}
$\psi \in \Psi_0$ is a {\it candidate 
sub-hedge payoff} if for all 
$y 
\in (0,\infty)$,
\begin{equation} \label{eq:psimin}
\psi(y)=\inf_x \left\{H(x,y)+\psi'(x)(y-x)+\psi(x)\right\}.
\end{equation}
\end{definition}

Given a candidate sub-hedge payoff $\psi$ we can generate a candidate 
semi-static hedge $(\psi,\delta)$ by taking $\delta=-\psi'$. We will say that 
$\psi$ is the root of the semi-static sub-hedge $(\psi,- \psi')$.

It remains to show how to choose candidate sub-hedge payoffs 
and especially those which have good properties. Using the intuition 
developed in the previous section for the kernel $H^R$
we expect optimal sub-hedging strategies to be associated with the  
martingale $Q$ defined in (\ref{eq:qmartingale}).
For realisations of $Q$, either the path has no jump, or there is a 
single jump, and if the jump occurs when the process is at $x$ then the 
jump is to $\alpha(x)$.

With this in mind let ${\mathcal K}={\mathcal K}(\fpr(0))$ be the set of 
monotone decreasing right-continuous
functions $\kappa: [\fpr(0),\infty) \rightarrow (0,\fpr(0)]$, with 
$\kappa(\fpr(0))=\fpr(0)$. Let $k$ denote the inverse of $\kappa$.
For $y < f(0)$ we want the infimum in (\ref{eq:psimin}) 
to be attained at $x=k(y)$. 
Then $\psi$ must satisfy
\begin{equation}
\label{eq:psiz}
\psi(y)=H(k(y),y)+\psi(k(y))+\psi'(k(y))(y-k(y)).
\end{equation}     
Moreover, if $\psi'$ is differentiable, then for $x=k(y)$ to be the 
argument of the infimum in (\ref{eq:psimin}) we must have that $k$ 
satisfies $H_x(k(y),y)+\psi''(k(y))(y-k(y))=0$ or equivalently 
\begin{equation} \label{eq:secondderiv}
H_x(x,\kappa(x))=\psi''(x) (x-\kappa(x)).
\end{equation}
This suggests that we can define candidate sub-hedge payoffs
$\psi$ via (\ref{eq:secondderiv}) on $(f(0),\infty)$ and via
(\ref{eq:psiz}) on $(0,f(0))$.

If $\psi$ satisfies (\ref{eq:Hbound2}) then so does $\psi+a+b(y-x)$ for 
any $a$, $b$. Earlier we argued that without loss of generality for a 
semi-static hedging strategy we could assume $\psi'(f(0))=0$. Now we may 
restrict attention further to $\psi$ with $\psi(f(0))=0$.

Define $\Phi(u,y)=H_x(u,y)/(u-y)$. Write $\Phi^R(u,y)=H^R_x(u,y)/(u-y)$, and 
similarly for other kernels.

\begin{definition} \label{d:psi}
For $\kappa \in {\mathcal K}$ with inverse $k$, define 
$\psi_{\kappa,H} \equiv \psi_\kappa: 
(0,\infty) \mapsto \R^+$, by 
$\psi_\kappa(f(0))=0$ and
\[ \psi_\kappa = \left\{ \begin{array}{c}
\psi_\kappa(x) \\ \psi_\kappa(z) \end{array} \right\} = \left\{ 
\begin{array}{ll}
 \int_{f(0)}^x (x-u) \Phi(u,\kappa(u)) du  & x > f(0) \\
 \psi_\kappa(k(z)) + \psi_\kappa'(k(z)) (z-k(z)) + H(k(z),z) & z < f(0) 
\end{array}\right. \]
We call such a function a {\it candidate payoff of Class $\mathcal K$}. 
\end{definition} 
By convention we use the variable $x$ on $(f(0),\infty)$ and $z$ on 
$(0,f(0))$, to reflect the fact that $\psi$ is defined explicitly on the 
former set, but only implicitly on the latter. 

For the present we fix 
$\kappa$ and we write simply $\psi$ for $\psi_\kappa$. Note that the 
value of $\psi(x)$ does not depend on the right-continuity assumption 
for $\kappa$. Further, observe that if $\kappa$ is not injective and there is 
an interval $A_z \equiv \{x : \kappa(x)=z \} \subseteq (m,\infty)$
over which $\kappa$ takes the value $z$
then $k$ has a jump at $z$. Nonetheless, the 
value of $\psi(z)$ does not depend on the choice of $k(z)$. To see this, 
for $x \in A_z$ consider $\Psi(x) := \psi(x) 
+ \psi'(x)(z-x) +H(x,z)$. Then, on $A_z$, ${d \Psi}/{dx} = 
\psi''(x)(z-x) 
+ H_x(x,z) \equiv 0$, using (\ref{eq:secondderiv}).

Motivated by the results of Section 3.3 we have defined 
$\psi$ relative to the set of decreasing functions $\mathcal K$ with 
the aim of constructing a sub-hedge. However, there are 
analogous definitions based on constructing super-hedges or using the 
martingale $R$ or both.

\begin{definition} \label{d:psimax}
$\psi: (0,\infty) \rightarrow (0,\infty)$ is a {\it candidate
super-hedge payoff} if for all
$y
\in (0,\infty)$,
\begin{equation} \label{eq:psimax}
\psi(y)=\sup_x \left\{H(x,y)+\psi'(x)(y-x)+\psi(x)\right\}.
\end{equation}
\end{definition}

Define ${\mathcal L}={\mathcal L}(f(0))$ be the set of
monotone increasing functions $\ell: (0,f(0)) \rightarrow (f(0),\infty)$, with
$\ell(\fpr(0))=\fpr(0)$. Let $l$ be inverse to $\ell$.
\begin{definition}
For $\ell \in {\mathcal L}$ with inverse $l$, define $\psi_\ell:
(0,\infty) \mapsto
\R^+$, {\it the candidate payoff of Class $\mathcal L$}
by $\psi_\ell(f(0))=0$ and
\[ \psi_\ell = \left\{ \begin{array}{c}
\psi_\ell(x) \\ \psi_\ell(z) \end{array} \right\} = \left\{
\begin{array}{ll}
 \int^{f(0)}_x (u-x) \Phi(u,\ell(u)) du  & x < f(0) \\
 \psi_\ell(l(z)) + \psi_\ell'(l(z)) (z-l(z)) + H(l(z),z) & z > f(0)
\end{array}\right. \]
\end{definition}

Our next aim is to give conditions which guarantee that the semi-static 
strategy $(\psi,-\psi')$  
satisfies equation (\ref{eq:Hbound}). 
\begin{definition} \label{d:A1}
A variation swap kernel $H$ is an {\em increasing} (a {\em decreasing}) kernel 
if it 
is a regular variation swap kernel and
\begin{enumerate}
\item[(i)] $\Phi(u,y)$ is monotone increasing (decreasing) in 
$y$,
\item[(ii)] $H(a,b)+H_y(a,b)(c-b) \geq (\leq) H(a,c)-H(b,c)$ for all 
$a>b$.
\end{enumerate}
\end{definition}
The second condition in Definition~\ref{d:A1} is equivalent to the fact that
$H_{yy}(x,y)$ is increasing (decreasing) in its first argument.

\begin{example} \label{ex:indepkernel}
$H^R$ and $H^S$ are increasing kernels and $H^L$ is a decreasing kernel. The 
kernels $H^B$ and $H^Q$ are simultaneously both increasing and 
decreasing since $\Phi^B(u,y)=2 u^{-2}$ and 
$\Phi^Q(u,y)=2$ do not depend on $y$ and Condition (ii) in 
Definition~\ref{d:A1} is 
satisfied with equality in both cases.
\end{example}

\begin{example} \label{ex:gammakernel}
Consider the kernels $H^{G-}(u,y)=u H^R(u,y)$ and $H^{G+}(u,y)=y H^R(u,y)$. In 
the first case, variance is weighted by the pre-jump value of the price realisation 
and in the second case the variance is weighted by the post-jump value. 
Swaps of this type are known as Gamma swaps, 
see, for example, Carr and Lee \cite{CarrLee:10}. 
Both $H_{G-}$ and $H_{G+}$ are increasing kernels.
\end{example}

\begin{theorem} \label{t:mainthm}
\begin{enumerate}
\item[(i)]
\begin{enumerate}
\item[(a)]
If $H$ is an increasing kernel then every candidate payoff of 
Class ${\mathcal K}$ is the root of a semi-static sub-hedge for the 
kernel 
$H$.
\item[(b)]
If $H$ is an increasing kernel then every candidate payoff of 
Class ${\mathcal L}$ is the root of a semi-static super-hedge for the 
kernel 
$H$.
\end{enumerate}
\item[(ii)]
\begin{enumerate}
\item[(a)]
If $H$ is a decreasing kernel then every candidate payoff of
Class ${\mathcal L}$ is the root of a semi-static sub-hedge for the 
kernel
$H$.
\item[(b)]
If $H$ is an decreasing kernel then every candidate payoff of
Class ${\mathcal K}$ is the root of a semi-static super-hedge for the 
kernel
$H$.
\end{enumerate}
\end{enumerate}

\end{theorem}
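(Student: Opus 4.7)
My plan is to verify, for each of the four cases, that the candidate payoff $\psi_\kappa$ or $\psi_\ell$ satisfies the corresponding extremal identity from Definition \ref{d:psimin} or \ref{d:psimax}, namely $\psi(y) = \inf_x G_y(x)$ (respectively $\sup_x G_y(x)$), where $G_y(x) := H(x,y) + \psi'(x)(y-x) + \psi(x)$. Once this is shown, the sub-hedge (resp.\ super-hedge) property of $(\psi, -\psi')$ is immediate from the discussion preceding Definition \ref{d:psimin}. The four statements are structurally identical under the symmetries min $\leftrightarrow$ max and $\mathcal K \leftrightarrow \mathcal L$, so I would present case (i)(a) in full and indicate the modifications for the others.

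For (i)(a), the preparatory step is to record two derivative identities for $\psi = \psi_\kappa$. From the explicit formula in Definition \ref{d:psi}, $\psi''(x) = \Phi(x, \kappa(x))$ on $(f(0), \infty)$; and by differentiating the implicit relation $\psi(z) = \psi(k(z)) + \psi'(k(z))(z - k(z)) + H(k(z), z)$ on $(0, f(0))$ and using the previous display to cancel the chain-rule contributions proportional to $k'(z)$, one obtains $\psi'(z) = \psi'(k(z)) + H_y(k(z), z)$. This second identity remains valid across flat pieces of $\kappa$ (on which $k$ is multi-valued), since the value of $\psi$ there is independent of the choice of $k$, as noted after Definition \ref{d:psi}. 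Moreover, $G_y(y) = \psi(y)$ trivially from $H(y,y) = 0$, and for $y < f(0)$ the identity $G_y(k(y)) = \psi(y)$ is precisely the defining relation, so $\inf_x G_y(x) \leq \psi(y)$ automatically.

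The substantive task is to show $G_y(x) \geq \psi(y)$ for every $x > 0$. On the explicit domain $x > f(0)$ the first identity gives $G_y'(x) = (x-y)[\Phi(x,y) - \Phi(x, \kappa(x))]$. Condition (i) of an increasing kernel (monotonicity of $\Phi(x, \cdot)$) together with $\kappa(x) \leq f(0)$ pins down the unique sign change of $G_y'$ at $x = y$ when $y \geq f(0)$ and at $x = k(y)$ when $y < f(0)$; in either subcase the minimum of $G_y$ on $(f(0), \infty)$ equals $\psi(y)$. For the implicit domain $x < f(0)$, substituting the two derivative identities into $G_y(x)$ and simplifying yields $G_y(x) - G_y(k(x)) = H(x,y) + H(k(x), x) + H_y(k(x), x)(y - x) - H(k(x), y)$, and since $k(x) > f(0) > x$, condition (ii) of an increasing kernel applied with $a = k(x)$, $b = x$, $c = y$ states precisely that this quantity is nonnegative. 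Combining the two steps gives $G_y(x) \geq G_y(k(x)) \geq \psi(y)$, completing (i)(a).

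The remaining three cases follow by the same template. For (i)(b), with $\psi_\ell$ of class $\mathcal L$, the explicit domain becomes $(0, f(0))$, the sign analysis of $G_y'$ now locates a maximum of $G_y$ at $x = l(y)$ or $x = y$, and the implicit domain reduction calls on the swapped inequality $H(l(x), x) + H_y(l(x), x)(y - x) \leq H(l(x), y) - H(x, y)$ for $l(x) < x$. This swapped form is equivalent to condition (ii) itself because both are characterisations of a single monotonicity of $H_{yy}(\cdot, y)$ in its first argument. Cases (ii)(a) and (ii)(b) for decreasing kernels are handled by reversing the inequalities in conditions (i) and (ii). The main obstacle throughout is the careful case analysis across the boundary $x = f(0)$ and at discontinuities of $\kappa$ or $\ell$, but both are absorbed cleanly by the derivative identities in the preparatory step.
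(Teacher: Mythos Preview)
Your proof is correct and follows essentially the same approach as the paper: both split into the explicit domain $x>f(0)$ (handled via condition~(i) on $\Phi$) and the implicit domain $x<f(0)$ (handled via condition~(ii) by comparing $G_y(x)$ with $G_y(k(x))$), and both rely on the same derivative identity $\psi'(z)=\psi'(k(z))+H_y(k(z),z)$. The only cosmetic difference is that you analyse the sign of $G_y'(x)=(x-y)[\Phi(x,y)-\Phi(x,\kappa(x))]$ directly, whereas the paper writes the equivalent integral $L(x,y)-L(z,y)=\int_z^x(u-y)[\Phi(u,y)-\Phi(u,\kappa(u))]\,du$ and chooses $z$ appropriately.
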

\begin{proof} We will prove the theorem in the case (i)(a).
The proofs in the other cases are similar.

Fix $\kappa \in {\mathcal K}$ let 
$L_\kappa(x,y)=\psi_\kappa(x)+\psi'_\kappa(x)(y-x)+H(x,y)-\psi_\kappa(y)$. 
The result will follow if we can show that $L_\kappa(x,y) \geq 0$ for 
all $(x,y) \in (0,\infty)^2$. Since $\kappa$ is fixed we drop the 
subscript $\kappa$ in what follows.

Suppose that $x,z>f(0)$ and $y \in (0,\infty)$. Since 
$\psi(x)+\psi'(x)(y-x) = \int_{f(0)}^x (y-u) \Phi(u,\kappa(u)) du$ we 
have that 
\begin{eqnarray*}
L(x,y)-L(z,y) &=& \psi(x)+\psi'(x)(y-x)+H(x,y)-\psi(z)-\psi'(z)(y-z)-H(z,y) \\
&=&\int_z^x \left\{(y-u) \Phi(u,\kappa(u)) + H_x(u,y) 
\right\} du \\
&=& \int_z^x \left\{\Phi(u,y)- 
\Phi(u,\kappa(u)) \right\} (u-y) du . 
\end{eqnarray*}
If $y \geq f(0)$, then set $z=y$ to find that 
\[L(x,y)=\int_y^x \left\{\Phi(u,y) - 
\Phi(u,\kappa(u)) \right\} (u-y) du.\]
Since $y \geq f(0) \geq \kappa(u)$, $\Phi(u,y) \geq 
\Phi(u,\kappa(u))$ for all $u$. Hence 
$L(x,y) \geq 0$ with equality at $y=x$.

If $y<f(0)$ and $k$ is continuous at $y$ set $z=k(y)$. Otherwise, for 
definiteness set $z=k(y+)$. Then $L(k(y+),y)=0$ and
\[L(x,y)=\int_{k(y+)}^x \left\{ \Phi(u,y) - 
\Phi(u,\kappa(u)) \right\} (u-y) du.\]

If $k(y+) \leq x$ then $y \geq \hat{x}$, for all $\hat{x} \in 
[\kappa(x+),\kappa(x-)]$. Then for $u \in (k(y+),x)$, $\kappa(u) \leq y$ and 
since $\Phi(u,z)$ is increasing in $z$, the integrand is 
positive.

If $x < k(y+)$, then $y<\hat{x}$ for all $\hat{x} \in 
[\kappa(x+),\kappa(x-)]$. Then for $u \in (x,k(y+))$ we have $\kappa(u) 
> y$. Then again $L(x,y) \geq 0$.

Finally, we show that $L(x,y) \geq 0$ when $x<f(0)$. Note that since, by 
what we have shown above, $L(k(x),y) \geq 0$ it will suffice to show 
that $L(x,y) \geq L(k(x),y)$. But,
\begin{eqnarray*}
L(x,y)-L(k(x),y) &=& \psi(x)+\psi '(x)(y-x) + H(x,y) \\
&& -\psi(k(x))-\psi'(k(x))(y-k(x))-H(k(x),y) \\
&=& \psi(k(x))+\psi'(k(x))(x-k(x))+H(k(x),x)+\psi'(k(x))(y-x) \\&&+H_y(k(x),x)(y-x) +H(x,y)-\psi(k(x))-\psi'(k(x))(y-k(x))-H(k(x),y) \\
&=& H(k(x),x)+H(x,y)+H_y(k(x),x)(y-x)-H(k(x),y) \\
&\geq& 0,
\end{eqnarray*}
where the last inequality follows from Definition (\ref{d:A1}). 
\end{proof}

\section{The most expensive sub-hedge}
\label{sec:expensive}

In the next three sections we concentrate on lower bounds and increasing 
variance kernels, but there are equivalent results for upper bounds 
and/or decreasing variance kernels.

In this section we fix the call prices and attempt to identify the most 
expensive sub-hedge from the set of sub-hedges generated by candidate 
payoffs of Class ${\mathcal K}$. The price of this sub-hedge provides a 
highest model-independent lower bound on the price of the variance swap 
in a sense which we will explain in the section on continuous limits.

Associated with the set of call prices $C(k)$ (and put prices 
$C(k)+f(0)-k$ given by put-call parity) there is a measure $\mu$ on 
$\R^+$ with mean $m$. 
Since $f$ is a forward price we must 
have $f(0)=m$.
Write $C=C_\mu$ to emphasise the connection 
between these quantities. Then $C(k)=C_\mu(k)=\int_k^\infty (x-k) 
\mu(dx)$. Recall that $C_\mu$ is convex so that $\mu(dx) = C_\mu''(x)dx$ 
with the right-hand-side to be interpreted in a distributional sense as 
necessary. We wish to calculate the cost of the European claim which 
forms part of the semi-static sub-hedge. By construction this 
is equal to 
$\int_{\R^+} \psi(x) \mu(dx)= \int_0^m \psi''(z) (C_\mu(z)+m-z) dz + 
\int_m^\infty \psi''(x)C_\mu(x)dx$.

\begin{proposition} For $H$ a variance swap kernel and 
$\kappa \in {\mathcal K}(m)$,
\begin{equation} \label{eq:5.1}
\int_0^\infty \psi_\kappa(x) \mu(dx)  = 
\int_0^m \mu(dz) H(m,z) + \int_m^\infty du \Sigma^{(u)}_{\mu}(\kappa(u))
\end{equation}
where, for $v < m < u$,
\[ \Sigma^{(u)}_{\mu}(v) =
\Phi(u,v)
C_\mu(u) +
\int_{(0,v]}  \mu(dz) (u-z)
\left\{ \Phi(u,z) - \Phi(u,v) \right\}. \]
\end{proposition}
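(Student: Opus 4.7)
The plan is to split $\int_0^\infty \psi_\kappa(x)\,\mu(dx)$ according to the two regimes in Definition \ref{d:psi} and then apply Fubini to each piece.

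First I would handle the region $x>m$. Here $\psi_\kappa(x)=\int_m^x(x-u)\Phi(u,\kappa(u))\,du$, so by Tonelli
\[
\int_m^\infty \psi_\kappa(x)\,\mu(dx) = \int_m^\infty\!\!du\,\Phi(u,\kappa(u))\!\int_{[u,\infty)}(x-u)\,\mu(dx) = \int_m^\infty\!\!du\,\Phi(u,\kappa(u))\,C_\mu(u),
\]
which already produces the $\Phi(u,\kappa(u))C_\mu(u)$ piece of $\Sigma^{(u)}_\mu(\kappa(u))$.

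Next, for the region $z<m$, I would first simplify the definition. Using $\psi_\kappa(k(z))=\int_m^{k(z)}(k(z)-u)\Phi(u,\kappa(u))\,du$ and $\psi_\kappa'(k(z))=\int_m^{k(z)}\Phi(u,\kappa(u))\,du$, the two terms combine to
\[
\psi_\kappa(z)=\int_m^{k(z)}(z-u)\Phi(u,\kappa(u))\,du+H(k(z),z).
\]
Swapping the order of integration in the first summand (for fixed $u>m$, the set $\{z<m:k(z)>u\}$ coincides with $(0,\kappa(u))$ up to a set of $\mu$-measure zero, since $\kappa$ is decreasing and right-continuous and $\mu$ has at most countably many atoms) gives
\[
\int_0^m\!\!\mu(dz)\!\int_m^{k(z)}(z-u)\Phi(u,\kappa(u))\,du = -\int_m^\infty\!\!du\,\Phi(u,\kappa(u))\!\int_{(0,\kappa(u)]}(u-z)\,\mu(dz).
\]

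The crux is now to reconcile the leftover $H(k(z),z)$ with the $H(m,z)$ in the claimed expression. For this I would use the defining identity $(u-z)\Phi(u,z)=H_x(u,z)$ from the definition of $\Phi$, which yields
\[
\int_m^{k(z)}H_x(u,z)\,du = H(k(z),z) - H(m,z).
\]
Reading this from right to left lets me trade $H(k(z),z)$ for $H(m,z)$ plus a double integral. Swapping the order of integration once more converts that double integral into $\int_m^\infty du\int_{(0,\kappa(u)]}\mu(dz)(u-z)\Phi(u,z)$. Combining this with the piece from $x>m$ and the piece just computed, the $\Phi(u,\kappa(u))$-terms organize exactly into $\Phi(u,\kappa(u))\bigl[C_\mu(u)-\int_{(0,\kappa(u)]}(u-z)\mu(dz)\bigr]$ and the $\Phi(u,z)$-term supplies the remaining part of $\Sigma^{(u)}_\mu(\kappa(u))$, while the $H(m,z)$ term stands alone as $\int_0^m \mu(dz)H(m,z)$.

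The main obstacle is purely bookkeeping: tracking the boundary behaviour of the generalised inverse $k$ when $\kappa$ is not strictly monotone (intervals where $\kappa$ is constant produce jumps in $k$) and ensuring that the Fubini swaps are justified. Both are handled by the observation used in the proof of Theorem \ref{t:mainthm} that $\psi_\kappa(z)$ is insensitive to the choice of $k(z)$ on level sets of $\kappa$, together with the fact that $\mu$ charges the countable set of such level values with total mass zero in the relevant integrals.
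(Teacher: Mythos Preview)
Your proof is correct and follows essentially the same route as the paper's: split at $m$, Fubini on $(m,\infty)$ to produce $\Phi(u,\kappa(u))C_\mu(u)$, and on $(0,m)$ use $H(k(z),z)=H(m,z)+\int_m^{k(z)}H_x(u,z)\,du$ together with the tangent-line identity $\psi_\kappa(k(z))+\psi_\kappa'(k(z))(z-k(z))=\int_m^{k(z)}(z-u)\Phi(u,\kappa(u))\,du$ before swapping. The only cosmetic difference is that the paper first combines the two inner integrands into $(u-z)\{\Phi(u,z)-\Phi(u,\kappa(u))\}$ and then swaps once, which makes the flat-spot bookkeeping transparent (the combined integrand vanishes on any interval of constancy of $\kappa$); your appeal to ``$\mu$ has at most countably many atoms'' is not by itself the right justification, but the insensitivity-to-$k$ observation you invoke at the end is exactly what the paper uses.
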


\begin{proof}
Let $\psi=\psi_\kappa$. Note that by definition $\psi(m)=0$, so there is no contribution from mass at $m$ and we can divide the integral on the left of (\ref{eq:5.1}) into intervals 
$(0,m)$ and $(m,\infty)$. 
For the latter,
\begin{eqnarray*}
\int_m^\infty \psi(x) \mu(dx) & = & \int_m^\infty  \mu(dx)  
\int_m^x (x-u) \Phi(u,\kappa(u)) du \\
& = &  \int_{u=m}^\infty du \Phi(u,\kappa(u)) \int_u^\infty (x-u) 
\mu(dx) \\
& = & \int_{u=m}^\infty du \Phi(u,\kappa(u)) C_{\mu}(u) =: I_1 .
\end{eqnarray*}

Now consider $\int_0^m \psi(z) \mu(dz)$. For this, using $H(k,z) = 
H(m,z) + \int_m^k H_x(u,z) du$ and
$\psi(x) + \psi'(x)(z-x) = \int_m^x du (z-u) \Phi(u,\kappa(u))$ 
we have
\begin{eqnarray*} 
\int_0^m \psi(z) \mu(dz) 
& = & \int_0^m \mu(dz) H(m,z) + \int_0^m \mu(dz)\int_m^{k(z)} du (u-z) \left\{
\Phi(u,z)- \Phi(u,\kappa(u)) \right\} \\
& =: & I_2 + I_3
\end{eqnarray*}
Note that $I_2$ depends on $H$ but not on $\kappa$.
Moreover, $I_3$ does not depend on the particular values 
chosen for the inverse 
taken over intervals of constancy of 
$\kappa$. (If $x<\tilde{x}$ are a pair of possible values for $k(z)$ then 
$\int_x^{\tilde{x}} du (u-z) \{\Phi(u,z) - \Phi(u, \kappa(u)) \} = 0$
since over this range $\kappa(u)=z$.)
Changing the order of integration we have
\[ I_3  =  \int_m^\infty du \int_{(0,\kappa(u)]} \mu(dz) 
(u-z)  \left\{
\Phi(u,z) - \Phi(u, \kappa(u)) \right\}, \]
and then
\( 
I_1 + I_3 
= \int_m^\infty du \Sigma^{(u)}_{\mu}(\kappa(u)).
\)
\end{proof}

Our goal is
to maximise the expression 
(\ref{eq:5.1})
over decreasing functions $\kappa \in 
{\mathcal K}$. 
As noted above, $I_2$ is independent of $\kappa$, and to
maximise $\int_m^\infty du \Sigma^{(u)}_\mu(\kappa(u))$ we can maximise
$\Sigma^{(u)}_\mu(\kappa)$
separately for each $u>m$, and then check that the minimiser is a 
decreasing function of $u$.

\begin{proposition} \label{p:bestkappa} Suppose $H$ is an increasing variance swap kernel. Then
$\int_0^\infty \psi_\kappa(x) \mu(dx)$ is maximised over 
$\kappa \in {\mathcal K}$ by $\kappa = \alpha$ where $\alpha$ is the 
quantity 
which arises in (\ref{eq:perkinsdef}) in the definition of the Perkins 
solution to the Skorokhod embedding problem.
\end{proposition}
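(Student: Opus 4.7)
The plan is to exploit the pointwise decoupling in the expression $\int \psi_\kappa \, d\mu = I_2 + \int_m^\infty \Sigma^{(u)}_\mu(\kappa(u))\,du$ obtained in the preceding proposition. Since $I_2 = \int_0^m H(m,z)\,\mu(dz)$ does not involve $\kappa$, and since $\Sigma^{(u)}_\mu(\kappa(u))$ depends on $\kappa$ only through its value at $u$, the problem reduces to maximising $v \mapsto \Sigma^{(u)}_\mu(v)$ over $v \in (0,m)$ separately for each $u>m$, and then checking that the pointwise maximiser is monotone decreasing in $u$ so that it qualifies as an element of $\mathcal{K}$.

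First I would differentiate $\Sigma^{(u)}_\mu(v)$ in $v$. A direct calculation, using the fact that the term $\mu(\{v\})(u-v)\{\Phi(u,v)-\Phi(u,v)\}$ vanishes, gives
\begin{equation*}
\partial_v \Sigma^{(u)}_\mu(v)
= \Phi_y(u,v)\Bigl[\,C_\mu(u) - (u-v)F_\mu(v) - P_\mu(v)\,\Bigr],
\end{equation*}
where $F_\mu(v)=\mu((0,v])$ and I have used $\int_{(0,v]}(u-z)\mu(dz) = (u-v)F_\mu(v) + P_\mu(v)$. Because $H$ is an \emph{increasing} kernel, Definition~\ref{d:A1}(i) gives $\Phi_y(u,v)\geq 0$, so the sign of $\partial_v \Sigma^{(u)}_\mu(v)$ is governed by the bracket $h_u(v) := C_\mu(u)-(u-v)F_\mu(v)-P_\mu(v)$. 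A second differentiation yields $h_u'(v) = -(u-v)F_\mu'(v)\leq 0$, so $h_u$ is nonincreasing in $v$; consequently $\Sigma^{(u)}_\mu$ is unimodal in $v$ and the stationary point $v^\ast(u)$ solving $h_u(v^\ast)=0$ is indeed a global maximiser.

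Next I would identify $v^\ast(u)$ with the Perkins function $\alpha(u)$ from Theorem~\ref{t:hobsonpedersen}. Rewriting $h_u(v)=0$ as $P'_\mu(v) = (C_\mu(u)-P_\mu(v))/(u-v)$ is exactly the first-order tangency condition for the argmin defining $\alpha(u)$ in (\ref{eq:perkinsdef}); geometrically, it expresses that the secant from $(u,C_\mu(u))$ to $(v,P_\mu(v))$ is tangent to the graph of $P_\mu$ at $v$. This identifies the optimiser for each fixed $u$ with $\alpha(u)$, modulo a careful treatment of the points where $P_\mu$ is non-differentiable (i.e.\ atoms of $\mu$); in that case I would use left/right derivatives and subdifferentials of the convex function $P_\mu$, and take the right-continuous representative, which matches the conventions made in the definition of $\mathcal{K}$ and of the Perkins embedding.

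Finally, I would verify $\alpha \in \mathcal{K}(m)$. The required monotonicity, namely that $u\mapsto \alpha(u)$ is nonincreasing in $u$ with values in $(0,m]$, is a classical property of the Perkins embedding: moving $u$ to the right, the tangent point $\alpha(u)$ from below to the graph of $P_\mu$ must move left, since otherwise convexity of $P_\mu$ would be violated; this argument gives monotonicity in full generality and the right-continuous version belongs to $\mathcal{K}$. I would also note the limiting normalisation $\alpha(m+)=m$ coming from the tangency condition as $u\downarrow m$. Combining these, $\alpha$ is the pointwise maximiser \emph{and} admissible, hence the global maximiser in $\mathcal{K}$. The main obstacle in the argument is the measure-theoretic bookkeeping at atoms of $\mu$ and at flat pieces of $P_\mu$, where $h_u$ may vanish on an interval; there one selects the right-continuous branch of $\alpha$ and verifies that on any such interval the value of $\Sigma^{(u)}_\mu$ is constant, so the choice does not affect the objective.
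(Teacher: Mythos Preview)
Your approach is correct and follows the same high-level strategy as the paper: use the decomposition $\int \psi_\kappa\,d\mu = I_2 + \int_m^\infty \Sigma^{(u)}_\mu(\kappa(u))\,du$, optimise $v\mapsto\Sigma^{(u)}_\mu(v)$ separately for each $u>m$, and then verify that the pointwise maximiser coincides with the Perkins function $\alpha$ and hence lies in $\mathcal K$.

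The tactical difference is that you differentiate $\Sigma^{(u)}_\mu$ in $v$ and read off the first-order condition, whereas the paper compares $\Sigma^{(u)}_\mu(v)$ with $\Sigma^{(u)}_\mu(\kappa)$ directly via a finite-difference identity
\[
\Sigma^{(u)}_\mu(v)-\Sigma^{(u)}_\mu(\kappa)
=\int_{\kappa}^{v}\mu(dz)(u-z)\{\Phi(u,z)-\Phi(u,v)\}
+\bigl[\Phi(u,v)-\Phi(u,\kappa)\bigr]\,\Theta^{(u)}_\mu(\kappa),
\]
where $\Theta^{(u)}_\mu(\kappa)=C_\mu(u)-\int_{(0,\kappa]}(u-z)\,\mu(dz)$ is exactly your $h_u(\kappa)$. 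The paper then argues signs term by term using only the monotonicity of $\Phi(u,\cdot)$ and the sign change of $\Theta^{(u)}_\mu$. The advantage of the paper's route is robustness: it never differentiates in $v$, so atoms of $\mu$ and flat pieces of $P_\mu$ require no special pleading, whereas your argument leans on $\Phi_y$ and $F_\mu'$ and then has to patch in subdifferentials at non-smooth points (which you correctly flag). Conversely, your derivative computation makes the link to the tangency characterisation $P_\mu'(v)=\tfrac{C_\mu(u)-P_\mu(v)}{u-v}$ of $\alpha(u)$ in \eqref{eq:perkinsdef} completely transparent, which the paper only asserts. Both arguments are short; yours is more illuminating about \emph{why} the Perkins function appears, the paper's is cleaner on the measure-theoretic edge cases.
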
 

\begin{proof}
For $u>m$ consider 
$\Theta^{(u)}_\mu(v):= C_{\mu}(v) - \int_{(0,v]} \mu(dz) (u-z)$ defined 
for $v \in (0,u)$. Then for each $u$, $\Theta^{(u)}_\mu$ is a strictly decreasing 
right-continuous function taking both positive and negative values on $(0,m)$. 
Let $\overline{\kappa} = \overline{\kappa}(u) = \sup \{ v 
: \Theta^{(u)}_\mu(v) \geq 0 \}$.
We have
$\Theta^{(u)}_\mu(\overline{\kappa}-) \geq 0 \geq 
\Theta^{(u)}_\mu(\overline{\kappa}+)$.

Suppose $H$ is an increasing variance swap kernel so that $\Phi(u,y)$ 
is increasing in $y$. We want to show that
$\Sigma^{(u)}_\mu(v)$ is maximised by 
$v=\overline{\kappa}(u)$.

Suppose $m> v > \overline{\kappa}(u)$. We aim to show that for all
$\kappa \in (\overline{\kappa}(u),v)$
we have $\Sigma^{(u)}_\mu(v) \leq \Sigma^{(u)}_\mu(\kappa)$. We have 
\begin{eqnarray}
\Sigma^{(u)}_\mu(v) - \Sigma_\mu^{(u)}(\kappa) & = &
\Phi(u,v) C_{\mu}(u) +
\int_0^{v}  \mu(dz) (u-z)
\left\{ \Phi(u,z) - \Phi(u,v) \right\} \nonumber \\
&& \hspace{5mm}
- \Phi(u,\kappa) C_{\mu}(u) -
\int_0^{\kappa}  \mu(dz) (u-z)
\left\{ \Phi(u,z) - \Phi(u,\kappa) \right\} \nonumber \\
& = & \int_{\kappa}^{v} \mu(dz) (u-z) \left\{ \Phi(u,z) - 
\Phi(u,v)  
\right\} + \left[ \Phi(u,v) - \Phi(u,\kappa) \right] 
\Theta^{(u)}_\mu(\kappa).
\nonumber 
\end{eqnarray}
Since $H$ is an increasing variance kernel, for $z \in (\kappa, v)$,
$\Phi(u,z) \leq \Phi(u,v)$, and the first integral is non-positive. 
Furthermore, $ \Phi(u,v) \geq \Phi(u,\kappa)$ and 
$\Theta^{(u)}(\kappa)<0$. Hence we conclude that
$\Sigma^{(u)}_\mu(v) \leq \Sigma_\mu^{(u)}(\kappa)$. 

Similar arguments 
show that if $v < \overline{\kappa}(u)$ then 
$\Sigma^{(u)}_\mu(v) \leq \Sigma_\mu^{(u)}(\kappa)$ for any $\kappa 
\in
(v, \overline{\kappa}(u))$, and
it follows that $\kappa =\overline{\kappa}(u)$ is a maximiser of
$\Sigma^{(u)}_\mu(v)$. 
 
Note that $\overline{\kappa}(u)$ is precisely the quantity 
$\alpha$ which arises in the Perkins construction. Hence 
$\overline{\kappa}$ is a decreasing function. Moreover, the definition
$\overline{\kappa}(u) = \sup \{ v
: \Theta^{(u)}_\mu(v) \geq 0 \}$ ensures that $\overline{\kappa}$ is 
right continuous.
\end{proof}

\begin{corollary} \label{c:kappapprox}
Suppose $\kappa_n(x)$ is a sequence of elements of ${\mathcal K}$ with
$\kappa_n(x) \downarrow \overline{\kappa}(x)$. Then 
$\int_{[0,\infty)} \psi_{\kappa_n}(x) \mu(dx)$ converges monotonically 
to $\int_{[0,\infty)} \psi_{\overline{\kappa}}(x) \mu(dx)$.
\end{corollary}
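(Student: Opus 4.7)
The plan is to reduce the problem to a pointwise monotone convergence statement for the integrand $\Sigma^{(u)}_\mu(\kappa_n(u))$, exploiting the decomposition supplied by the preceding proposition. By Proposition 5.1 applied separately to $\kappa_n$ and to $\overline{\kappa}$, and since the term $I_2 = \int_0^m H(m,z) \mu(dz)$ does not depend on the choice of decreasing function, it suffices to prove that
\[
\int_m^\infty \Sigma^{(u)}_\mu(\kappa_n(u))\, du \;\uparrow\; \int_m^\infty \Sigma^{(u)}_\mu(\overline{\kappa}(u))\, du.
\]

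Next, I would fix $u > m$ and analyse the map $v \mapsto \Sigma^{(u)}_\mu(v)$. The computation inside the proof of Proposition 5.2 establishes $\Sigma^{(u)}_\mu(v) \leq \Sigma^{(u)}_\mu(\kappa)$ whenever $\overline{\kappa}(u) < \kappa < v < m$; equivalently, $\Sigma^{(u)}_\mu$ is non-increasing on $[\overline{\kappa}(u), m)$. Combined with the chain $\overline{\kappa}(u) \leq \kappa_{n+1}(u) \leq \kappa_n(u)$, this forces the numerical sequence $\Sigma^{(u)}_\mu(\kappa_n(u))$ to be non-decreasing and bounded above by $\Sigma^{(u)}_\mu(\overline{\kappa}(u))$. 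To identify the limit I would invoke right-continuity of $v \mapsto \Sigma^{(u)}_\mu(v)$ at $\overline{\kappa}(u)$: continuity of $\Phi(u,\cdot)$ (from the regularity of $H$) takes care of the first term $\Phi(u,v) C_\mu(u)$, while the integral term is right-continuous because finite measures are continuous from above on half-open intervals of the form $(0,v]$, so any atom of $\mu$ at the limit point is correctly accounted for.

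Finally, I would pass the limit through the outer integral over $(m,\infty)$ via the monotone convergence theorem, with $f_n(u) := \Sigma^{(u)}_\mu(\kappa_n(u))$. The sequence is increasing in $n$, converges pointwise to $f(u) := \Sigma^{(u)}_\mu(\overline{\kappa}(u))$, and is dominated from below by $f_1$ whose integral is controlled using the Proposition 5.1 decomposition of $\int \psi_{\kappa_1} d\mu$. Adding $I_2$ to both sides of the resulting limit yields the claim, and monotone convergence of the original sequence is inherited from the monotonicity established above.

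The step I expect to be the main obstacle is the right-continuity of $\Sigma^{(u)}_\mu$ in $v$ at points that may be atoms of $\mu$: one has to argue carefully that the expression $\int_{(0,v]} \mu(dz)(u-z)\{\Phi(u,z)-\Phi(u,v)\}$ is right-continuous in $v$, handling the possibility that $\kappa_n(u) \downarrow \overline{\kappa}(u)$ with $\mu(\{\overline{\kappa}(u)\}) > 0$. The closed-on-the-right nature of the integration range, combined with continuity of $\Phi(u,\cdot)$, resolves this, but it is the only place where measure-theoretic subtleties genuinely enter the argument.
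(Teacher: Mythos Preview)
Your approach is essentially identical to the paper's: use the Proposition~5.1 decomposition to reduce to $\int_m^\infty \Sigma^{(u)}_\mu(\kappa_n(u))\,du$, invoke the monotonicity of $v\mapsto\Sigma^{(u)}_\mu(v)$ on $[\overline{\kappa}(u),m)$ established inside the proof of Proposition~5.2, and then apply monotone convergence. Your version is in fact more careful than the paper's, which simply writes ``the result follows by monotone convergence'' without explicitly addressing the right-continuity of $\Sigma^{(u)}_\mu$ at $\overline{\kappa}(u)$ or the lower-bound hypothesis; your discussion of atoms and the closed-on-the-right integration range fills a detail the paper leaves implicit.
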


\begin{proof}
Recall that $\int_{[0,\infty)} \psi_\kappa(x) \mu(dx) = \int_0^1 \mu(dz) 
H(1,z) + \int_1^\infty du \Sigma^{(u)}_\mu({\kappa}(u))$. By 
the above 
arguments we have that  $\Sigma^{(u)}_\mu(z)$ is increasing in $z$ for 
$z>\overline{\kappa}(u)$. Hence the result follows by monotone 
convergence.
\end{proof}

\begin{example} 
Let $H=H^R$, an increasing variance kernel. Let $\mu=U[0,2]$ and let 
$\kappa:[1,2]\rightarrow [0,1]$ be given by 
$\kappa(x)=\alpha_\mu(x)=x-2\sqrt{x-1}$. Similarly we define 
$\ell(x)=\beta_\mu(x)=x+2\sqrt{1-x}$. Then $(\psi_{\kappa},-\psi_{\kappa}')$ 
is the most expensive sub-hedge of class ${\mathcal K}$ and 
$(\psi_\ell,-\psi_\ell ')$ is the cheapest super-hedge of class ${\mathcal 
L}$. Although we cannot calculate the functions $\psi_\kappa, \psi_\ell$ 
explicitly, they can be 
evaluated numerically, see the left hand side of Figure 2. Now suppose 
$H=H^L$. The roles of $\psi_\kappa$ and $\psi_\ell$ are reversed (see the 
right hand side of Figure 2) and $(\psi_\kappa,- \psi'_\kappa)$ is the 
root 
of a semi-static
super-hedge and 
$(\psi_\ell, -\psi'_\ell)$ is the root of a semi-static sub-hedge.

\begin{figure}[htb]
\begin{center}
$\begin{array}{c@{\hspace{1in}}c}
\includegraphics[height=5cm,width=6cm]{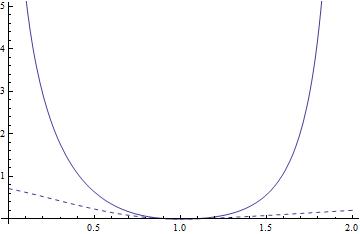} &
\includegraphics[height=5cm,width=6cm]{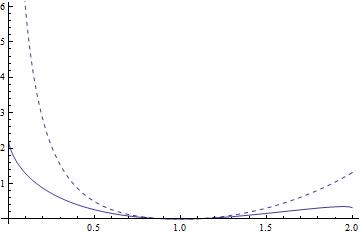} \\
\end{array}$
\end{center}
\caption{For the two kernels $\psi_{\kappa}$ is shown as a 
dashed line and $\psi_{\ell}$ is shown as a solid line. For the kernel $H^R$ 
(left-hand-side), 
$\psi_\kappa$ is associated with a lower bound on the price of the variance 
swap. For the kernel $H^L$ (right-hand-side) $\psi_\kappa$ is associated with 
an upper bound. }
\end{figure}
\end{example}

\section{Continuous limits and the tightness of the bound} 
\label{sec:cts}

The bounds we have constructed based on the functions $\psi_\kappa$ hold 
simultaneously across all paths and all partitions. The purpose of this 
section is to consider the limit as the partition becomes finer. It will turn 
out that in the continuous limit there is a stochastic model which is 
consistent with the observed call prices and for which there is equality in 
the inequality (\ref{eq:Hbound}) from which we derive the lower bound. In this 
sense the model-free bound is optimal, and can be attained.

The analysis of this section justifies restricting attention to 
candidate payoffs of Classes ${\mathcal K}$ and $\mathcal L$.  
Hedges of this type either sub-replicate or super-replicate the payoff 
of the variance swap depending 
on the 
form of the kernel, but there could be other sub- and super-replicating 
strategies which do not take this form. In principle, for a given 
partition one of these other sub-hedges could give a tighter model-independent 
bound than we can derive from our analysis. (As an extreme example, suppose 
the 
partition is trivial ($0=t_0 < t_1=T$). Then $V_H(f,P)= H(f(0),f(T))$ which 
can be 
replicated exactly using call options.) However, in the continuous limit our 
bound is best possible, so that when the partition is finite, but the mesh 
size is small we expect our hedge to be 
close to best possible and
relatively simple to implement.

For a finite partition $P^{(n)}$ in the dense sequence $\pseq = 
(P^{(n)})_{n \geq 1}$ we have
\begin{equation} \label{eq:1a}
V_H(f,P^{(n)}) = \sum_{k=0}^{N^{(n)}-1} H(f(t_k),f(t_{k+1})) \geq 
\psi(f(T)) - \psi(f(0)) - \sum_{k=0}^{N^{(n)}-1}
\psi'(f(t_k)) ( f(t_{k+1})-f(t_k)). 
\end{equation}
We want to conclude that 
the limits 
$V_H(f,P_\infty) = \lim_n V_H(f,P^{(n)})$ and
\begin{equation} \label{eq:limitofsums}
\lim_n \sum_{k=0}^{N^{(n)}-1} \psi'(f(t_k)) ( f(t_{k+1})-f(t_k)) =
\int_0^T \psi'(f(t-))df(t)
\end{equation}
exist for each path under consideration. Our analysis follows the development 
of a path-wise It\^o's formula in 
F\"{o}llmer~\cite{Follmer:79}. Let $\epsilon_t$ denote a point mass at $t$.

\begin{definition} \label{d:fquad}
A path realisation $\fpr$ has a quadratic variation on a dense sequence 
of 
partitions $\pseq=(P^{(n)})_{n \geq 1}$ if, when we define the measure
\[\zeta_n = \sum^{N^{(n)}-1}_{{k=0}, \ {t_k \in P^{(n)}}} 
(f(t_{k+1})-f(t_k))^2 \epsilon_{t_k},\]
then 
the sequence $\zeta_n$ converges weakly to a Radon measure $\zeta$ on 
$[0,T]$. Then $([f]_t)_{t \geq 0}$ is given by $[f]_t = \zeta([0,t])$. 
\end{definition}

The atomic part of $\zeta$ is given by squared jumps of $f$. 
Moreover the quadratic variation $([f]_t)_{t \geq 0}$ is simply the 
cumulative mass function of $\zeta$.

\begin{theorem}~(F\"{o}llmer~\cite{Follmer:79}) \label{t:follmer}
Suppose the price realisation $\fpr$ has a quadratic variation along 
$\pseq=(P^{(n)})_{n \geq 1}$ and $G$ is a twice continuously differentiable 
function from $\R^+$ to $\R$, then 
\[\int_0^T G'(f(t-)) df(t)=\lim_{n \uparrow \infty} 
\sum_{t=0}^{N^{(n)}-1} G'(f(t_k))(f(t_{k+1})-f(t_k))\] 
exists and 
\begin{eqnarray*}
G(f(T))-G(f(0)) &=& \int_0^T G'(f(s-))df(s) + \frac{1}{2} \int_{(0,T]} 
                         G''(f(s))d[f]^c_s \\ 
&&+ \sum_{s \leq T} \left[ G(f(s))-G(f(s-))-G'(f(s-))\Delta f(s) \right],
\end{eqnarray*}
and the series of jump terms is absolutely convergent. 
\end{theorem}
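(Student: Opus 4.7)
The plan is to derive F\"ollmer's identity via a second-order Taylor expansion on each cell of the partition $P^{(n)}$, telescope the resulting increments, and take the mesh-vanishing limit of the three resulting sums. The existence of the limit defining $\int_0^T G'(f(t-))\,df(t)$ will emerge as a by-product once the other three limits are shown to exist. Specifically, writing $\Delta_k f = f(t_{k+1}) - f(t_k)$, Taylor's theorem gives
\begin{equation*}
G(f(t_{k+1})) - G(f(t_k)) = G'(f(t_k))\Delta_k f + \tfrac{1}{2} G''(f(t_k))(\Delta_k f)^2 + r_k \, (\Delta_k f)^2,
\end{equation*}
where $r_k = r(f(t_k),f(t_{k+1}))$ with $r(x,y) \to 0$ as $y \to x$ uniformly on compacts (by continuity of $G''$). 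Since $f$ is c\`adl\`ag on $[0,T]$ its range is relatively compact, so $G'$, $G''$ and $r$ are bounded along $f$. Telescoping yields
\begin{equation*}
G(f(T)) - G(f(0)) = S_1^{(n)} + \tfrac{1}{2} S_2^{(n)} + R^{(n)},
\end{equation*}
with $S_1^{(n)} = \sum_k G'(f(t_k))\Delta_k f$, $S_2^{(n)} = \sum_k G''(f(t_k))(\Delta_k f)^2$, and $R^{(n)} = \sum_k r_k (\Delta_k f)^2$.

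The heart of the argument is to pass to the limit in $S_2^{(n)}$ and $R^{(n)}$. Using the decomposition $\zeta = \zeta^c + \sum_{0 < s \leq T}(\Delta f(s))^2 \epsilon_s$, where $\zeta^c(dt) = d[f]^c_t$, I would fix $\epsilon > 0$ and split the jumps of $f$ into the finitely many large jumps $s_1, \ldots, s_M$ with $|\Delta f(s_i)| > \epsilon$ and the rest. For $n$ large each large jump lies in a single cell $(t_{k^i}, t_{k^i+1}]$ with $f(t_{k^i}) \to f(s_i-)$, $f(t_{k^i+1}) \to f(s_i)$, and $(\Delta_{k^i} f)^2 \to (\Delta f(s_i))^2$. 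The remaining cells contribute, via weak convergence $\zeta_n \Rightarrow \zeta$ (applied after removing the large-jump atoms), to the integral against the continuous part of $\zeta$ plus a residual from small jumps controlled by the uniform modulus of continuity of $G''$ on the range of $f$. In the double limit $n \to \infty$, $\epsilon \downarrow 0$ this yields
\begin{equation*}
\tfrac{1}{2} S_2^{(n)} \longrightarrow \tfrac{1}{2}\int_0^T G''(f(s-))\,d[f]^c_s + \tfrac{1}{2}\sum_{0 < s \leq T} G''(f(s-))(\Delta f(s))^2.
\end{equation*}

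For $R^{(n)}$, cells away from any large jump have $|\Delta_k f| \leq \epsilon + o(1)$, so $r_k$ is uniformly small and $\sum_k r_k(\Delta_k f)^2$ is bounded by $(\sup r) \cdot \zeta_n([0,T])$, which vanishes in the same double limit; the finitely many cells around each large jump yield, via a refined Taylor identity applied over $[f(s_i-), f(s_i)]$, the quantity $G(f(s_i)) - G(f(s_i-)) - G'(f(s_i-))\Delta f(s_i) - \tfrac{1}{2}G''(f(s_i-))(\Delta f(s_i))^2$. The $G''$ terms here cancel against the corresponding jump atoms in $\tfrac{1}{2} S_2^{(n)}$, leaving precisely the jump sum of the theorem. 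Absolute convergence of that sum follows from the estimate $|G(f(s)) - G(f(s-)) - G'(f(s-))\Delta f(s)| \leq \tfrac{1}{2}\sup|G''|(\Delta f(s))^2$ and $\sum_s (\Delta f(s))^2 \leq [f]_T < \infty$. Once these three limits are established, convergence of $S_1^{(n)}$, and hence existence of $\int_0^T G'(f(t-))\,df(t)$, is forced by subtraction. The main obstacle is the joint handling of jumps in $S_2^{(n)}$ and $R^{(n)}$: weak convergence of $\zeta_n$ does not apply directly to the c\`agl\`ad integrand $G'' \circ f(\cdot-)$, so one must isolate finitely many large jumps, track the unique cell containing each for $n$ large, and show that small-jump and continuous-part contributions are insensitive to the partition; the remaining estimates are routine given the compactness of the range of $f$ and the continuity of $G'$ and $G''$.
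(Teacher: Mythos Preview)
The paper does not prove this theorem; it simply quotes it from F\"ollmer~\cite{Follmer:79} and uses it as a black box (the nearest the paper comes to the argument is in the proof of Lemma~\ref{l:d}, which mimics part of F\"ollmer's reasoning for the special quantity $V_H$ and still cites ``(9) in F\"ollmer'' for the key step). Your sketch is essentially F\"ollmer's original proof: second-order Taylor expansion on each cell, telescoping, separation of finitely many large jumps from the rest, use of the weak convergence $\zeta_n\Rightarrow\zeta$ on the remainder, and deduction of the existence of $\int_0^T G'(f(t-))\,df(t)$ by subtraction once the other limits are identified. The outline is correct and the obstacles you flag (the c\`agl\`ad integrand, the need to isolate large jumps so that each eventually sits in a single cell, the uniform smallness of $r_k$ on the complementary cells) are exactly the technical points that have to be handled; your control of the absolute convergence of the jump series via $\tfrac12\sup|G''|\sum_s(\Delta f(s))^2\le \tfrac12\sup|G''|\,[f]_T$ is also the standard estimate. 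One small remark: in the statement the integrand in the continuous-part integral is $G''(f(s))$ rather than $G''(f(s-))$, but since $d[f]^c$ charges no countable set the two integrals agree, so this is cosmetic.
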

Hence, provided $\psi$ is twice continuously differentiable on the support of 
$f$ and $f$ has a quadratic variation along $\mathcal P$, it follows 
immediately that the 
limit in (\ref{eq:limitofsums}) exists. In our setting 
$\psi''_\kappa(u)=\Phi(u,\kappa(u))$ for $u>1$, so 
that a sufficient condition for
$\psi''_\kappa(u)$ to be continuous on $(1,\infty)$ is that
$\kappa$ is continuous.
Further, on $u<1$, provided $k\equiv \kappa^{-1}$ is differentiable and 
$H_y$ exists, we have $\psi'(z)=\psi'(k(z))+H_y(k(z),z)$. Hence, 
sufficient conditions for $\psi$ to be twice continuously differentiable 
on $(0,1)$ are that $k$ is continuously differentiable, $\kappa$ is 
continuous and $H_{xy}$ and $H_{yy}$ are continuous.
Let $\bbarkap$ be the class of decreasing functions $\kappa: 
(f(0),\infty) \rightarrow (0,f(0))$ which are continuous and have an 
inverse $k$ which is continuously differentiable. 

\begin{corollary} Suppose that $H$ is an increasing variance kernel, and that 
$f$ has a quadratic variation.
Suppose $\kappa \in \bbarkap$ and $\psi = \psi_\kappa$. Then the limit 
in (\ref{eq:limitofsums}) exists. 
\end{corollary}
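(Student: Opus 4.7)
The strategy is to recognize the sum $\sum_{k=0}^{N^{(n)}-1}\psi'(f(t_k))(f(t_{k+1})-f(t_k))$ as the Riemann-type approximation appearing in F\"ollmer's path-wise It\^o formula (Theorem~\ref{t:follmer}) with $G=\psi_\kappa$. Since $f$ has a quadratic variation along $\pseq$ by hypothesis, it suffices to verify that $\psi_\kappa$ is twice continuously differentiable on $(0,\infty)$. I would handle the two half-lines $(f(0),\infty)$ and $(0,f(0))$ separately and then check that the derivatives match at the seam $x=f(0)$ where the two defining formulas meet.

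On $(f(0),\infty)$ the function is given explicitly by $\psi_\kappa(x)=\int_{f(0)}^x (x-u)\Phi(u,\kappa(u))\,du$, so one reads off $\psi_\kappa'(x)=\int_{f(0)}^x \Phi(u,\kappa(u))\,du$ and $\psi_\kappa''(x)=\Phi(x,\kappa(x))$. Continuity of $\psi_\kappa''$ here uses only that $H$ is regular (so $\Phi$ is continuous off the diagonal) and that $\kappa$ is continuous with $\kappa(u)<u$ for $u>f(0)$. On $(0,f(0))$ I would differentiate the implicit defining identity
\[
\psi_\kappa(z)=\psi_\kappa(k(z))+\psi_\kappa'(k(z))(z-k(z))+H(k(z),z);
\]
using the critical-point identity (\ref{eq:secondderiv}), the terms multiplied by $k'(z)$ cancel and one obtains the clean expression $\psi_\kappa'(z)=\psi_\kappa'(k(z))+H_y(k(z),z)$, whence
\[
\psi_\kappa''(z)=\bigl[\psi_\kappa''(k(z))+H_{xy}(k(z),z)\bigr]k'(z)+H_{yy}(k(z),z).
\]
This is continuous on $(0,f(0))$ because $k\in C^1$ by the definition of $\bbarkap$ and because $H_{xy},H_{yy}$ are continuous by regularity of $H$.

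The main (and only) nontrivial obstacle is matching at the seam $x=f(0)$, and this is where the variance-swap axioms $H(x,x)=0$, $H_y(x,x)=0$, $H_{yy}(x,x)=x^{-2}$ are used crucially. Differentiating $H(x,x)\equiv 0$ and $H_y(x,x)\equiv 0$ along the diagonal yields $H_x(x,x)=0$ and $H_{xy}(x,x)=-H_{yy}(x,x)=-x^{-2}$. A first-order Taylor expansion of $H_x(u,\cdot)$ about $u$ then gives the limiting value $\Phi(u,u):=\lim_{y\to u}H_x(u,y)/(u-y)=-H_{xy}(u,u)=u^{-2}$. Using $k(f(0))=f(0)$ one finds $\psi_\kappa(f(0)\pm)=0$ and $\psi_\kappa'(f(0)\pm)=0$, while
\[
\psi_\kappa''(f(0)+)=\Phi(f(0),f(0))=f(0)^{-2},
\]
\[
\psi_\kappa''(f(0)-)=\bigl[f(0)^{-2}-f(0)^{-2}\bigr]k'(f(0))+f(0)^{-2}=f(0)^{-2},
\]
so the one-sided limits of $\psi_\kappa''$ agree. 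Hence $\psi_\kappa\in C^2((0,\infty))$, and applying Theorem~\ref{t:follmer} with $G=\psi_\kappa$ yields the existence of the limit in (\ref{eq:limitofsums}).
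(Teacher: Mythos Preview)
Your proof is correct and follows the same approach as the paper, which in the paragraph immediately preceding the Corollary verifies that $\psi_\kappa\in C^2$ on each of $(f(0),\infty)$ and $(0,f(0))$ separately (computing $\psi''_\kappa(u)=\Phi(u,\kappa(u))$ above and $\psi'_\kappa(z)=\psi'_\kappa(k(z))+H_y(k(z),z)$ below) and then invokes F\"ollmer's theorem. You go further than the paper in explicitly checking the $C^2$ match at the seam $x=f(0)$, which the paper does not address; as a minor remark, this matching does not actually need the specific variance-swap value $H_{yy}(x,x)=x^{-2}$, since the cancellation $\Phi(f(0),f(0))+H_{xy}(f(0),f(0))=0$ already follows from differentiating $H_y(x,x)\equiv 0$ along the diagonal.
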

Now we want to consider $V_H(f,P_\infty)=\lim_n V_H(f,P^{(n)})$. 
\begin{lemma} \label{l:d}
Suppose $H$ is a variance swap kernel. If 
$\pseq=(P^{(n)})_{n \geq 1}$ is a dense sequence of partitions, and
$\fpr$ has a quadratic variation 
along $\mathcal P$, 
then $\lim_{n \uparrow \infty} V_H(f,P^{(n)})$ exists and satisfies 
\begin{equation} \label{eq:4.0}
V_H(f,P_{\infty}) = \int_{(0,T]} \frac{1}{f(t-)^2} d[f]_t + \sum_{0 < t \leq 
T} 
H(f(t-),f(t)) - \sum_{0 < t \leq T} \frac{1}{f(t-)^2} (\Delta f(t))^2.
\end{equation} 
\end{lemma}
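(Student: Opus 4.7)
My plan rests on two ingredients: a second-order Taylor expansion of $H$ about the diagonal, and a cutoff argument isolating the (finitely many) large jumps of $f$ from a regime dominated by small increments. From the defining property of a variance swap kernel, $H(x, x(1+\delta)) = \delta^2 + o(\delta^2)$, Taylor's theorem yields
\[
H(x,y) = \frac{(y-x)^2}{x^2} + r(x,y),
\]
where $r(x,y) = (y-x)^2 \rho(x,y)$ with $\rho$ continuous on $(0,\infty)^2$ and $\rho(x,x)=0$. In particular, for any compact $K \subset (0,\infty)$, $\omega_K(\eta) := \sup\{|\rho(x,y)| : x \in K,\ |y-x| \leq \eta\}$ tends to $0$ as $\eta \downarrow 0$.

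Next, for $\eta > 0$ the set $T_\eta = \{\tau \in (0,T]: |\Delta f(\tau)| \geq \eta\}$ is finite because $f$ is c\`adl\`ag on $[0,T]$. Standard oscillation control for c\`adl\`ag paths produces $\delta > 0$ such that every subinterval of $[0,T] \setminus T_\eta$ of length less than $\delta$ has oscillation below $2\eta$. For $n$ large enough that the mesh is less than $\delta$ and less than the minimal separation of points of $T_\eta$, each $\tau \in T_\eta$ lies in a unique ``jump interval'' of $P^{(n)}$, and all remaining ``small intervals'' carry increment at most $2\eta$. Using $f(t_k) \to f(\tau-)$, $f(t_{k+1}) \to f(\tau)$ and continuity of $H$, the sum over the finitely many jump intervals converges to $\sum_{\tau \in T_\eta} H(f(\tau-), f(\tau))$ as $n \to \infty$.

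On the small intervals apply the Taylor decomposition. The remainder is bounded by $\omega_K(2\eta)\zeta_n([0,T])$, which is at most $\omega_K(2\eta)([f]_T + o(1))$ for large $n$ and tends to $0$ as $\eta \to 0$. The quadratic main term $\sum_{\mathrm{small}} (f(t_{k+1})-f(t_k))^2/f(t_k)^2$ is the main technical obstacle: it is a Riemann-type sum against $\zeta_n$ whose piecewise-constant integrand $1/f(t_k)^2$ approximates the discontinuous function $1/f(t-)^2$. Using the oscillation bound on small intervals to show this integrand is uniformly close to $1/f(t-)^2$, together with weak convergence of $\zeta_n$ to $\zeta$ restricted to $[0,T] \setminus T_\eta$ (where the only residual discontinuities of $1/f(t-)^2$ are atoms of $\zeta$ of mass less than $\eta^2$), this sum converges to $\int \frac{1}{f(t-)^2} d[f]^c_t + \sum_{\tau \notin T_\eta} (\Delta f(\tau))^2 / f(\tau-)^2$.

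Finally, let $\eta \downarrow 0$. The jump sum $\sum_{\tau \in T_\eta} H(f(\tau-), f(\tau))$ increases to the absolutely convergent sum $\sum_{0 < t \leq T} H(f(t-), f(t))$ (since $H(f(t-),f(t)) \leq C(\Delta f(t))^2$ on the bounded range of $f$ and $\sum (\Delta f(t))^2 \leq [f]_T$), the residual sum $\sum_{\tau \notin T_\eta} (\Delta f(\tau))^2/f(\tau-)^2$ vanishes, and all the Taylor remainder errors vanish. We conclude
\[
V_H(f, P_\infty) = \int \frac{1}{f(t-)^2} d[f]^c_t + \sum_{0 < t \leq T} H(f(t-), f(t)),
\]
which matches (\ref{eq:4.0}) upon writing $\int \frac{1}{f(t-)^2} d[f]_t = \int \frac{1}{f(t-)^2} d[f]^c_t + \sum (\Delta f(t))^2/f(t-)^2$.
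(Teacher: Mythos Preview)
Your proof is correct and follows essentially the same route as the paper: Taylor-expand $H$ about the diagonal, isolate the finitely many large jumps via a cutoff parameter, control the small-interval remainder by the oscillation modulus times $[f]_T$, and let the cutoff tend to zero. The paper's argument is structurally identical (with cutoff classes $C_1(\epsilon)$, $C_2(\epsilon)$ playing the role of your $T_\eta$).

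The one place where the paper is a bit slicker is your handling of the quadratic main term over the small intervals. Rather than arguing directly, as you do, that $\sum_{\mathrm{small}} (f(t_{k+1})-f(t_k))^2/f(t_k)^2$ converges via a restricted weak-convergence argument against the discontinuous integrand $1/f(t-)^2$, the paper rewrites this sum as the \emph{full} sum $\sum_{\mathrm{all}}$ minus the finitely many jump-interval terms. The full sum converges to $\int_{(0,T]} f(t-)^{-2}\, d[f]_t$ by a direct appeal to F\"ollmer's result (his equation~(9)), since $y \mapsto H_{yy}(y,y)=2/y^2$ is uniformly continuous on the compact range of $f$; and the finite correction converges trivially to $\sum_{\tau \in T_\eta} (\Delta f(\tau))^2/f(\tau-)^2$. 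This add-and-subtract trick sidesteps the delicate part of your argument, though your version is also correct once the weak-convergence step is fleshed out.
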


\begin{proof} Our proof follows F\"{o}llmer~\cite{Follmer:79}.
Fix $\epsilon > 0$. Partition $[0,T]$ into two classes: 
a finite class $C_1=C_1(\epsilon)$ of jump times 
and a class $C_2=C_2(\epsilon)$ such that 
\begin{equation} \label{eq:jumpepsilon}
\sum_{s \in [0,T], \ s \in C_2(\epsilon)} (\Delta f(s))^2 \leq \epsilon^2.
\end{equation}
Then $\sum_{k=0}^{N^{(n)}-1} H(f(t_k),f(t_{k+1})) = \sum_1 
H(f(t_k),f(t_{k+1})) 
+ \sum_2 H(f(t_k),f(t_{k+1}))$, where $\sum_1$ indicates a sum over those 
$0 \leq k \leq N^{(n)} -1$ for which $(t_k,t_{k+1}]$ contains a jump of class 
$C_1$. It follows that 
\begin{equation} \label{eq:Hsumjumps}
\lim_{n \uparrow \infty} \sum_1 H(f(t_k),f(t_{k+1})) = \sum_{t \in C_1(\epsilon)} H(f(t-),f(t)).
\end{equation}
On the other hand, using the properties $H(x,x)=0$, $H_y(x,x)=0$ we have from 
Taylor's formula that $H(x,y) = \frac{1}{2} H_{yy}(x,x)(y-x)^2+r(x,y)$. Using 
the fact that $(f(t))_{0 \leq t \leq T}$ is a compact subset of $(0,\infty)$ 
we may assume that the remainder term satisfies $|r(x,y)| \leq 
R(|y-x|)(y-x)^2$ where $R$ is an increasing function on $[0,\infty)$ such 
that $R(c) \rightarrow 0$ as $c \rightarrow 0$. Then 
\begin{eqnarray} \label{eq:H2sumexp}
\sum_2 H(f(t_k),f(t_{k+1})) &=& \frac{1}{2} \sum_2 
H_{yy}(\fpr(t_k),\fpr(t_k))(\fpr(t_{k+1})-\fpr(t_k))^2 
+ \sum_2 r(f(t_k),f(t_{k+1})) \nonumber \\
&=& \frac{1}{2} \sum H_{yy}(f(t_k),f(t_k))(f(t_{k+1})-f(t_k))^2 \nonumber \\
&& \; - \frac{1}{2} \sum_1 H_{yy}(f(t_k),f(t_k))(f(t_{k+1})-f(t_k))^2 
\nonumber \\
&& \; + \sum_2 r(f(t_k),f(t_{k+1})).
\end{eqnarray}

Since $H_{yy}(f,f)=2/f^2$ is uniformly continuous over the bounded set of 
values $(f(t))_{0 \leq t \leq T}$, by $(9)$ in F\"{o}llmer~\cite{Follmer:79}, 
the first term in (\ref{eq:H2sumexp}) converges to $\int_{(0,T]} 
\frac{1}{f(t-)^2} d[f]_t$ and the second term converges to $-\sum_{s \in C_1} 
\frac{1}{f(t-)^2} (\Delta f(t))^2$. Using (\ref{eq:jumpepsilon}) and the fact 
that the remainder term satisfies $|r(x,y)| \leq R(|y-x|)(y-x)^2$ we have 
that the last term is bounded by $R(\epsilon) [f]_T$. Finally, letting 
$\epsilon \downarrow 0$ we conclude that $V_H(f,P_\infty)=\lim_n 
V_H(f,P^{(n)})$ exists 
and (\ref{eq:4.0}) follows.
\end{proof}

\begin{corollary} \label{cor:ctslimit}
$V_{H^R}(f, P_{\infty}) = \int_{(0,T]} f(t-)^{-2} d[f]_t$ and
$V_{H^L}(f, P_{\infty}) = [\log f]_T$.
\end{corollary}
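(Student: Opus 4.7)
The plan is to apply Lemma~\ref{l:d} directly to each of the two kernels and then simplify using the structure of the jump and continuous parts of the quadratic variation.

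First I would tackle $H^R$. Observe that $H^R(f(t-),f(t)) = (\Delta f(t)/f(t-))^2 = (\Delta f(t))^2/f(t-)^2$. Substituting into (\ref{eq:4.0}), the jump-sum term $\sum_{0<t\leq T} H^R(f(t-),f(t))$ cancels exactly the correction term $\sum_{0<t\leq T} (\Delta f(t))^2/f(t-)^2$, leaving
\[
V_{H^R}(f,P_\infty) = \int_{(0,T]} \frac{1}{f(t-)^2}\, d[f]_t,
\]
as claimed.

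For $H^L$, compute $H^L(f(t-),f(t)) = (\Delta \log f(t))^2$. The key auxiliary fact, already noted in Section~\ref{sec:motivation}, is the decomposition
\[
d[\log f]_t = \frac{1}{f(t-)^2}\, d[f]^c_t + (\Delta \log f(t))^2 \, \epsilon_t(dt),
\]
so that $[\log f]_T = \int_{(0,T]} f(t-)^{-2} d[f]^c_t + \sum_{0<t\leq T} (\Delta \log f(t))^2$. On the other hand, splitting $d[f]_t = d[f]^c_t + \sum (\Delta f(t))^2 \epsilon_t(dt)$, (\ref{eq:4.0}) becomes
\[
V_{H^L}(f,P_\infty) = \int_{(0,T]} \frac{d[f]^c_t}{f(t-)^2} + \sum_{0<t\leq T} \frac{(\Delta f(t))^2}{f(t-)^2} + \sum_{0<t\leq T} (\Delta \log f(t))^2 - \sum_{0<t\leq T} \frac{(\Delta f(t))^2}{f(t-)^2}.
\]
The two sums involving $(\Delta f(t))^2/f(t-)^2$ cancel, and what remains is precisely $[\log f]_T$ by the decomposition above.

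Neither step presents a real obstacle; the only point requiring a moment's care is the cancellation of the pure-jump terms in the $H^L$ case, which works because the lemma's correction term involves $(\Delta f)^2/f(t-)^2$ rather than $(\Delta \log f)^2$. This is exactly the same algebra that appeared in Section~\ref{sec:motivation} when comparing the classical continuous hedge to the realised variance, so the identity can be invoked without further justification.
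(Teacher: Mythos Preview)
Your proof is correct and is exactly the argument the paper intends: the corollary is stated without proof because it follows immediately from Lemma~\ref{l:d} by substituting $H^R$ and $H^L$ and using the identity $d[\log f]_t = f(t-)^{-2}\,d[f]^c_t + (\Delta\log f(t))^2$ already recorded in Section~\ref{sec:motivation}. Your write-up simply makes these cancellations explicit.
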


Combining (\ref{eq:1a}) with Theorem \ref{t:follmer} and Lemma \ref{l:d} it 
follows that for a path of finite quadratic variation and $\psi$ a 
twice-continuously differentiable function with $\psi(f(0))=0$, 
\begin{equation} \label{l:4.1}
V_H(f, P_{\infty} ) \geq \psi(f(T))-\int_0^T \psi'(f(t-))df(t).
\end{equation}
The left hand side is the payoff of the variance swap in the continuous limit. 
The expression on the right can be interpreted as the payoff of a semi-static 
hedging strategy $(\psi,-\psi')$ under continuous trading. From 
Definition \ref{d:2.9semistatic} for 
each of the partitions in the sequence we have that the price of the 
semi-static hedge is 
\begin{equation}
\label{eq:ctsprice}
\int_0^\infty \psi(x) \mu(dx)= \int_{f(0)}^\infty \psi''(x) 
C_\mu(x)dx+ 
\int_0^{f(0)} \psi''(z) (C_\mu(z) + f(0) - z) dz.
\end{equation} 
Since this value does not depend on the 
partition, in the
continuous-time setting we define the price of sub-hedge $(\psi,-\psi')$ 
to also be the expression given in (\ref{eq:ctsprice}).

\begin{corollary} \label{c:corF}

Suppose $H$ is an increasing variance swap kernel. A model-independent lower 
bound on the price of the continuous time limit of the variance swap with 
payoff $V_H(f)$ is  
\begin{equation} \label{eq:4.2}
\sup_\kappa \int_0^\infty \psi_\kappa(x) \mu(dx) = \int_0^\infty \psi_{\alpha_\mu}(x) \mu(dx)
\end{equation}
where $\alpha_\mu$ is the quantity arises in the Perkins embedding (Theorem \ref{t:hobsonpedersen}).
\end{corollary}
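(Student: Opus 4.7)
The plan is to string together Theorem~\ref{t:mainthm}(i)(a), the continuous-limit identity (\ref{l:4.1}), and Proposition~\ref{p:bestkappa}, with Corollary~\ref{c:kappapprox} used to bridge a smoothness gap.

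First I would fix $\kappa \in \bbarkap$ (the smooth subclass) and apply Theorem~\ref{t:mainthm}(i)(a) to $(\psi_\kappa,-\psi_\kappa')$. This yields, for every partition $P$ and every price realisation $f$, the path-wise inequality
\[
V_H(f,P) \geq \psi_\kappa(f(T)) - \sum_{k=0}^{N-1} \psi_\kappa'(f(t_k))\bigl(f(t_{k+1})-f(t_k)\bigr).
\]
Since $\kappa \in \bbarkap$ ensures $\psi_\kappa$ is twice continuously differentiable, Theorem~\ref{t:follmer} together with Lemma~\ref{l:d} lets me pass to the continuous-time limit along any dense partition sequence under which $f$ has quadratic variation, giving (\ref{l:4.1}) with $\psi = \psi_\kappa$. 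By Definition~\ref{d:2.9semistatic} the price of this semi-static sub-hedge is $\int_0^\infty \psi_\kappa(x)\,\mu(dx)$, and since every consistent martingale model $X$ has $X_T \sim \mu$ and makes the stochastic integral a martingale of zero mean, this integral is a model-independent lower bound on the price of the continuous-time variance swap.

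Next I would take the supremum over $\kappa \in \bbarkap$. Proposition~\ref{p:bestkappa} identifies the supremum over the larger class ${\mathcal K}$ as $\int \psi_{\alpha_\mu}(x)\mu(dx)$, attained at $\kappa = \alpha_\mu$. To transfer this maximum into the continuous-time setting, which is the main obstacle since $\alpha_\mu$ need not itself belong to $\bbarkap$, I would choose a sequence $\kappa_n \in \bbarkap$ with $\kappa_n \downarrow \alpha_\mu$ pointwise. Each $\kappa_n$ produces a valid continuous-time sub-hedge with model-independent lower bound $\int \psi_{\kappa_n} d\mu$, and by Corollary~\ref{c:kappapprox} these bounds increase monotonically to $\int \psi_{\alpha_\mu} d\mu$.

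Combining the two conclusions, the supremum of model-independent lower bounds achievable from the Class $\mathcal K$ family equals $\int_0^\infty \psi_{\alpha_\mu}(x)\,\mu(dx)$, which is the asserted identity. The only subtle point is the approximation step: one must verify that the continuous-time semi-static hedge and its price are continuous with respect to the monotone approximation, but both follow directly: the price continuity is Corollary~\ref{c:kappapprox}, and each $\kappa_n$-hedge is already valid path-wise, so the supremum of valid lower bounds in the continuous limit indeed attains the value predicted by the discrete-time Proposition~\ref{p:bestkappa}.
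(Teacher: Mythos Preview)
Your proposal is correct and follows essentially the same route as the paper: fix $\kappa \in \bbarkap$ to obtain a path-wise sub-hedge whose price $\int \psi_\kappa\,d\mu$ is a lower bound, invoke Proposition~\ref{p:bestkappa} to identify the supremum over ${\mathcal K}$ at $\alpha_\mu$, and use Corollary~\ref{c:kappapprox} with an approximating sequence $\kappa_n \downarrow \alpha_\mu$ in $\bbarkap$ to bridge the smoothness gap. The only minor difference is that you add a probabilistic justification (the stochastic integral has zero mean under a consistent model), whereas the paper relies purely on the path-wise sub-hedge interpretation via Definition~\ref{d:2.9semistatic}; both viewpoints are valid here.
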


\begin{proof} For any decreasing function $\kappa \in \bbarkap$ we can 
construct $\psi_\kappa$ such that $\int_0^\infty \psi_\kappa(x) \mu(dx)$ 
is the price of a sub-hedge for $V_H$ for any partition, and this 
continues to hold in the continuous-time limit. Moreover, by optimising 
over $\kappa$ we obtain a bound $\int_0^\infty \psi_{\alpha_\mu}(x) 
\mu(dx)$ which is the best bound of this form by Proposition 
\ref{p:bestkappa}. Note that even if $\alpha_\mu$ is not in class 
$\bbarkap$, by Corollary \ref{c:kappapprox} we can approximate it from 
above by a sequence of elements of class $\bbarkap$ such that in the 
limit we obtain the price $\int_0^\infty \psi_{\alpha_\mu}(x) \mu(dx)$ 
as a bound. \end{proof}

Our goal now is to show that this is a best bound in general and not just an 
optimal bound based on inequalities such as (\ref{eq:1a}) for $\psi \equiv 
\psi_\kappa$ and $\kappa$ a decreasing function. We do this by showing that 
there is a consistent model for which the price of the continuously monitored 
variance swap is equal 
to $\int_0^\infty \psi_{\alpha_\mu}(x) \mu(dx)$.

\begin{theorem} \label{t:theoremG}
There exists a consistent model such that 
\begin{equation} \label{eq:6.1pathwise}
V_H((X_t)_{0 \leq t \leq T},P_{\infty})  = 
\psi_{\alpha_\mu}(X_T)-\int_0^T \psi'_{\alpha_\mu}(X_{s-}) 
dX_s.
\end{equation}
\end{theorem}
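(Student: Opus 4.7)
The plan is to exhibit the required consistent model explicitly: take $X = Q^\mu$, the martingale constructed in Lemma~\ref{l:qmartingale}. Consistency ($X$ is a martingale with $X_T \sim \mu$) is already given there, so the work is in verifying the path-wise identity (\ref{eq:6.1pathwise}) for almost every realisation of $Q^\mu$.

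The central observation is that along each path, $Q^\mu$ has a very simple structure: it coincides with the strictly increasing continuous function $\Lambda(t)$ up to a stopping time $\tau^\ast = \inf\{t : \Hb_{\Lambda(t)} > \tau^P_\mu\}$, after which it stays constant at $B_{\tau^P_\mu}$. Hence $Q^\mu$ is continuous and of bounded variation on $[0,\tau^\ast)$ (so $[Q^\mu]^c \equiv 0$), and has at most one jump, at time $\tau^\ast$, from the pre-jump value $Q^\mu_{\tau^\ast-} = S_{\tau^P_\mu}$ to the post-jump value $Q^\mu_{\tau^\ast} = B_{\tau^P_\mu}$. By the defining property of the Perkins embedding (Theorem~\ref{t:hobsonpedersen}), either $B_{\tau^P_\mu} = S_{\tau^P_\mu}$ (no jump) or $B_{\tau^P_\mu} = \alpha_\mu(S_{\tau^P_\mu})$. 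So along a path with a jump, the pre-jump value $x = S_{\tau^P_\mu}$ lies in $(m,\infty)$ and the post-jump value is $\alpha_\mu(x)$.

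Applying Lemma~\ref{l:d} and using $[Q^\mu]^c \equiv 0$, the variance swap payoff collapses to $V_H(Q^\mu, P_\infty) = \sum_{0 < t \leq T} H(Q^\mu_{t-}, Q^\mu_t)$, which is either $0$ or $H(x, \alpha_\mu(x))$ for the pre-jump value $x$. On the hedging side, I would apply F\"{o}llmer's formula (Theorem~\ref{t:follmer}) to $\psi := \psi_{\alpha_\mu}$: again because $[Q^\mu]^c \equiv 0$ and $\psi(f(0))=0$, it reduces to
\begin{equation*}
\psi(Q^\mu_T) - \int_0^T \psi'(Q^\mu_{t-})\, dQ^\mu_t = \sum_{0 < t \leq T} \bigl[\psi(Q^\mu_t) - \psi(Q^\mu_{t-}) - \psi'(Q^\mu_{t-})\,\Delta Q^\mu_t\bigr].
\end{equation*}
If there is no jump, both sides are zero (the Riemann--Stieltjes integral against the continuous increasing function $\Lambda$ reconstructs $\psi(Q^\mu_T)$ exactly). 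If there is a jump at $\tau^\ast$ from $x$ to $\alpha_\mu(x)$, then by Definition~\ref{d:psi} applied at $z = \alpha_\mu(x)$ with $k(z) = x$, we have precisely
\begin{equation*}
\psi(\alpha_\mu(x)) - \psi(x) - \psi'(x)(\alpha_\mu(x) - x) = H(x, \alpha_\mu(x)),
\end{equation*}
which matches the single jump contribution to $V_H$.

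The main technical obstacle is the regularity of $\psi_{\alpha_\mu}$: F\"{o}llmer's formula in the form cited requires $\psi \in C^2$, while $\alpha_\mu$ need not be continuous and $\psi_{\alpha_\mu}$ need not lie in $\bbarkap$. To handle this I would approximate $\alpha_\mu$ from above by a sequence $\kappa_n \in \bbarkap$ with $\kappa_n \downarrow \alpha_\mu$, obtain the path-wise inequality (\ref{l:4.1}) with $\psi = \psi_{\kappa_n}$ along $Q^\mu$, and pass to the limit using monotone convergence of the prices (Corollary~\ref{c:kappapprox}) together with the consistency of $Q^\mu$. The lower bound from Corollary~\ref{c:corF} then pins the limiting expectation from below, while the jump-by-jump identity above provides the matching upper bound path-wise, forcing equality in (\ref{eq:6.1pathwise}) almost surely. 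A subsidiary technical point, which I would address by the same approximation, is that the integrals $\int_0^T \psi'_{\alpha_\mu}(Q^\mu_{t-})\,dQ^\mu_t$ must be interpreted as the limit of the Riemann sums in (\ref{eq:limitofsums}), and the argument shows these limits exist for the paths of $Q^\mu$.
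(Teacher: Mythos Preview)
Your approach is essentially the same as the paper's: you use the Perkins-based martingale $Q^\mu$ as the consistent model and verify (\ref{eq:6.1pathwise}) by the two-case analysis (no jump versus a single jump from $x=S_{\tau^P_\mu}$ to $\alpha_\mu(x)$), invoking Definition~\ref{d:psi} for the jump identity. The substantive difference is in how you evaluate the right-hand side. You route through F\"{o}llmer's formula, which forces a $C^2$ hypothesis on $\psi_{\alpha_\mu}$ and then an approximation argument via $\kappa_n\in\bbarkap$ and Corollary~\ref{c:kappapprox}. The paper instead observes that since each path of $Q^\mu$ is of bounded variation (continuous increasing on $[0,\tau^\ast)$, one jump, then constant), the integral $\int_0^T \psi'(X_{t-})\,dX_t$ is an elementary Riemann--Stieltjes integral:
\[
\int_0^{\tau^\ast}\psi'(\Lambda(t))\,d\Lambda(t)+\psi'(S_{\tau^P_\mu})(W_{\tau^P_\mu}-S_{\tau^P_\mu})
=\psi(S_{\tau^P_\mu})+\psi'(S_{\tau^P_\mu})(W_{\tau^P_\mu}-S_{\tau^P_\mu}),
\]
requiring only that $\psi'$ exist and be continuous on $[m,\infty)$ (which it is, since $\psi'(x)=\int_m^x\Phi(u,\alpha_\mu(u))\,du$ there). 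This direct computation sidesteps the regularity issue entirely, so your approximation detour --- and in particular the passage through expectations to ``pin'' a path-wise identity --- is unnecessary. Once you have the jump identity $\psi(\alpha_\mu(x))-\psi(x)-\psi'(x)(\alpha_\mu(x)-x)=H(x,\alpha_\mu(x))$, which you already wrote down, the path-wise equality is immediate.
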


\begin{proof} 
Recall Definition~\ref{d2.15} and note that we are given a set of call prices
and that in constructing a consistent model we are free to design an 
appropriate probability space $(\Omega, \mathcal{F}, 
\mathbb{F}=(\mathcal{F}_t)_{0 \leq t \leq T},
\Prob)$ as well as a stochastic process $(X_t)_{t \geq 0}$. 

Suppose we are given call prices $C(x)=C_\mu(x)$ for some $\mu$. 
Let $(\Omega, \mathcal{G}, \mathbb{G}=(\mathcal{G}_t)_{0 \leq t \leq T}, 
\Prob)$ support a Brownian motion $(W_u)_{u \geq 0}$ with initial value 
$W_0 =f(0)=\int_{\R^+} x \mu(dx)$ and suppose $\mathcal{G}_0$ contains a 
$U[0,1]$ 
random variable which is independent of $W$. (This last condition is necessary 
purely to ensure that the Perkins embedding of $\mu$ can be defined when $\mu$ 
has an 
atom at $f(0)$. If $\mu$ has no atom at $f(0)$ then we may take 
$\mathcal{G}_0$ to be trivial.)

Let $\tau_\mu^P$ be the Perkins embedding of $\mu$ in $W$. Write $S$ for the 
maximum process of $W$ so that $S_u=\max_{v \leq u} W_v$. Write $\Hb_x$ for 
the first hitting time by $W$ of $x$. Let $(\Lambda(t))_{0 \leq t \leq T}$ be 
a strictly increasing continuous function with $\Lambda(0)=\fpr(0)$ and 
$\lim_{t \uparrow 
T} \Lambda(t) = \infty$. Now define the left-continuous process 
$\tilde{X}=(\tilde{X}_t)_{0 
\leq t \leq T}$ via
\[ \tilde{X}_t = \left\{ \begin{array}{ll}
\Lambda(t) \hspace{10mm} & \mbox{$\Hb_{\Lambda(t)} \leq \tau_\mu^P$} \\
W_{\tau_\mu^P}  \; & \mbox{$\tau_\mu^P < \Hb_{\Lambda(t)}$.}
\end{array} \right. \]

Note that the condition $\Hb_{\Lambda(t)} \leq \tau_\mu^P$ can be 
re-written as 
$\Lambda(t) \leq S_{\tau_\mu^P}$ or equivalently $t \leq 
\Lambda^{-1}(S_{\tau_\mu^P})$. Define also 
$\tilde{\mathcal{F}}_t=\mathcal{G}_{\bar{H}_{\Lambda(t)}}$. 
Then 
$\tilde{X}$ 
is adapted to the filtration 
$\tilde{\mathbb{F}}=(\tilde{\mathcal{F}}_t)_{0 \leq 
t \leq T}$ 
and $\tilde{X}$ is a $\tilde{\mathcal{F}}$-martingale for which 
$\tilde{X}_T = 
W_{\tau_\mu^P} \sim 
\mu$.

In order to construct a right-continuous martingale with the same 
properties, for $t<T$ we set ${\mathcal F}_t = \cap_{u>t} \tilde{F}_t$ 
and $X_t = \lim_{u \downarrow t} \tilde{X}_u$, and for $t=T$ we set 
${\mathcal F}_T = \tilde{F}_T$ $X_T=\tilde{X}_T$.
Then $X$ is a right-continuous $\mathcal F$ martingale such that 
$(\Omega, 
\mathcal{F}, \mathbb{F}=(\mathcal{F}_t)_{0 
\leq 
t \leq 
T}, \Prob)$ is a consistent model.

Now we want to show that for this model (\ref{eq:6.1pathwise}) holds path-wise. 
Writing $\psi$ for $\psi_{\alpha_\mu}$, and $X_t$ as shorthand for each 
$X_t(\omega)$ we have for each $\omega$ 
\begin{eqnarray*}
\psi(X_T)- \int_0^T \psi'(X_{t-}) dX_t &=& \psi(W_{\tau^P_\mu}) - 
\int_{t=0}^{{\Lambda^{-1}}(S_{\tau_\mu^P})} \psi'(\Lambda(t)) d\Lambda(t)- \psi'(S_{\tau_\mu^P})(W_{\tau_\mu^P}-S_{\tau_\mu^P}) \\
&=& \psi(W_{\tau^P_\mu}) - \int_{f(0)}^{S_{\tau_\mu^P}} \psi'(u) du- 
\psi'(S_{\tau_\mu^P})(W_{\tau_\mu^P}-S_{\tau_\mu^P}) \\
&=& \psi(W_{\tau^P_\mu}) - \psi(S_{\tau_\mu^P})- 
\psi'(S_{\tau_\mu^P})(W_{\tau_\mu^P}-S_{\tau_\mu^P}).
\end{eqnarray*}

There are two cases. Either $W_{\tau_\mu^P}=S_{\tau_\mu^P}$, in which case 
this expression is equal to $0$ or, 
$W_{\tau_\mu^P}=\alpha_\mu(S_{\tau_\mu^P})$ and then the expression becomes 
\[\psi(\alpha_\mu(s))-\psi(s)-\psi'(s)(\alpha_\mu(s)-s) \equiv 
H(s,\alpha(s))\] at $s=S_{\tau_\mu^P}$, using 
Definition~\ref{d:psi}. In either case the right hand side of 
(\ref{eq:6.1pathwise}) is $H(S_{\tau_\mu^P},W_{\tau_\mu^P})$. For the left 
hand side of (\ref{eq:6.1pathwise}), $[X]_T^c=0$ and $(\Delta 
X_u)^2=(S_{\tau_\mu^P}-W_{\tau_\mu^P})^2 1_{ \{u = 
\Lambda^{-1}(S_{\tau_\mu^P}) \} } 
1_{ \{ W_{\tau_\mu^P} \neq S_{\tau_\mu^P} \}}$ so that from 
(\ref{eq:4.0}), 
$V_H(f,P_{\infty})=H(S_{\tau_\mu^P},W_{\tau_\mu^P})$. Hence 
(\ref{eq:6.1pathwise}) holds 
path-wise.

\end{proof}

\begin{corollary} \label{c:H}
Suppose $H$ is an increasing variance swap kernel. Then the highest model independent lower bound on the price of a variance swap is given by the expression in (\ref{eq:4.2}).
\end{corollary}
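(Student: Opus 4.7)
The plan is to combine the two prior results (Corollary \ref{c:corF} and Theorem \ref{t:theoremG}) via a standard duality argument: one result establishes that the quantity in (\ref{eq:4.2}) is a valid lower bound, and the other establishes that this lower bound is attained by a specific consistent model, so no strictly larger quantity can be a valid lower bound.

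First, I would recall the precise meaning of a \emph{model-independent lower bound}: a number $L$ is such a bound for the continuously monitored variance swap if, for every model $(\Omega,\mathcal{F},\mathbb{F},\Prob)$ supporting a process $X$ that is consistent with the given call prices in the sense of Definition~\ref{d2.15}, one has $L \leq \E[V_H((X_t)_{0 \leq t \leq T}, P_\infty)]$. Equivalently, the highest such lower bound is the infimum, over all consistent models, of $\E[V_H(X,P_\infty)]$.

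Next I would check the lower bound. By Corollary~\ref{c:corF}, for every consistent model the quantity $\int_0^\infty \psi_{\alpha_\mu}(x) \mu(dx)$ is a lower bound on the price of $V_H(f,P_\infty)$. Indeed, this value is realised as the price of the semi-static sub-hedge associated with (an approximating sequence of) $\alpha_\mu \in \bbarkap$, via Theorem~\ref{t:mainthm}(i)(a) and the path-wise bound (\ref{l:4.1}) established in the continuous limit. Taking expectations under any consistent model and using that the stochastic integral term is a martingale increment (handled pathwise via F\"{o}llmer's It\^{o} formula, Theorem~\ref{t:follmer}) shows that $\E[V_H(X,P_\infty)] \geq \int_0^\infty \psi_{\alpha_\mu}(x)\mu(dx)$.

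Finally I would invoke Theorem~\ref{t:theoremG}, which constructs a specific consistent model (built from the Perkins embedding $\tau_\mu^P$ together with the monotone time-change $\Lambda$) along which the inequality (\ref{l:4.1}) becomes a path-wise equality. Taking expectations on both sides of (\ref{eq:6.1pathwise}) under this consistent model gives $\E[V_H(X,P_\infty)] = \int_0^\infty \psi_{\alpha_\mu}(x)\mu(dx)$. Combining this with the previous step, any model-independent lower bound $L$ must satisfy $L \leq \int_0^\infty \psi_{\alpha_\mu}(x)\mu(dx)$, while this value is itself a lower bound; hence it is the highest, which is exactly the claim of Corollary~\ref{c:H}.

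The only subtlety, and the main potential obstacle, is justifying that the expectation of the stochastic integral $\int_0^T \psi'_{\alpha_\mu}(X_{s-})\,dX_s$ vanishes under the extremal model; this requires a suitable integrability / uniform integrability argument (for instance, controlling $\psi'_{\alpha_\mu}$ and using the optional stopping / uniform integrability built into the Perkins embedding). Once this is in place, the proof reduces to a two-line combination of Corollary~\ref{c:corF} and Theorem~\ref{t:theoremG}.
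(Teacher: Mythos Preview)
Your proposal is correct and follows exactly the approach taken in the paper: the corollary is stated without a separate proof and is intended as an immediate consequence of Corollary~\ref{c:corF} (the expression in (\ref{eq:4.2}) is a model-independent lower bound) together with Theorem~\ref{t:theoremG} (there exists a consistent model for which the pathwise inequality becomes an equality). Your observation about the integrability of the stochastic integral when passing from the pathwise identity (\ref{eq:6.1pathwise}) to expectations is apt; the paper leaves this point implicit as well.
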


\begin{corollary} \label{c:oneprice}
If $\Phi(u,y)$ does not depend on $y$ then the corresponding variance
swap is perfectly replicable by $(\psi,-\psi')$. 
For all consistent models the variation swap has price 
$\int_{\R^+} 
\psi(x) \mu(dx)$. 
\end{corollary}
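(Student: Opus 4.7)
The plan is to show that when $\Phi(u,y)$ does not depend on $y$, the inequality in $(\ref{eq:Hbound})$ becomes an equality, so the semi-static strategy $(\psi,-\psi')$ replicates $H$ pointwise and hence $V_H(f,P)$ pathwise on every partition. The starting observation is that in this setting $H$ is simultaneously an increasing kernel and a decreasing kernel in the sense of Definition \ref{d:A1}: condition (i) is trivial since $\Phi$ has no $y$-dependence, and because
\[
H(x,y) = \int_y^x (u-y)\Phi(u)\,du
\]
(obtained by integrating $H_x(u,y)=(u-y)\Phi(u)$ from $y$ to $x$ and using $H(y,y)=0$), we have $H_{yy}(x,y)=\Phi(y)$, which is independent of the first argument; hence condition (ii) also holds with equality.

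Next I would verify that $\psi_\kappa$ does not depend on $\kappa$. On $(f(0),\infty)$, Definition \ref{d:psi} reduces to $\psi_\kappa(x) = \int_{f(0)}^x (x-u)\Phi(u)\,du$, which plainly does not involve $\kappa$. On $(0,f(0))$, the defining identity $\psi_\kappa(z) = \psi_\kappa(k(z)) + \psi_\kappa'(k(z))(z-k(z)) + H(k(z),z)$ appears to depend on $k(z)$, but differentiating its right-hand side in $k$ and using $\psi_\kappa''(k)=\Phi(k)$ together with $H_x(k,z)=\Phi(k)(k-z)$ shows that the derivative vanishes identically. Writing $\psi$ for this common function, the same computation yields the pointwise identity
\[
H(x,y) = \psi(y) - \psi(x) - \psi'(x)(y-x) \qquad \text{for all } x,y \in (0,\infty).
\]
Alternatively, since $H$ is both increasing and decreasing, Theorem \ref{t:mainthm} parts (i)(a) and (ii)(b) applied simultaneously force equality in $(\ref{eq:Hbound})$ with $\delta=-\psi'$.

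Telescoping this identity over any partition $P=\{0=t_0<\cdots<t_N=T\}$ gives
\[
V_H(f,P) = \psi(f(T)) - \psi(f(0)) - \sum_{k=0}^{N-1}\psi'(f(t_k))(f(t_{k+1})-f(t_k)),
\]
which is precisely the payoff of the semi-static hedging strategy $(\psi,-\psi')$ since $\psi(f(0))=0$. This is perfect pathwise replication. For the pricing statement, fix any consistent model $(X_t)_{0\le t\le T}$ and take expectations: the martingale property makes each term $\E[\psi'(X_{t_k})(X_{t_{k+1}}-X_{t_k})]$ vanish, leaving $\E[V_H(X,P)] = \E[\psi(X_T)] = \int_{\R^+}\psi(x)\mu(dx)$, independent of both the partition and the choice of consistent model. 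The main obstacle, such as it is, is the modest verification that $\psi_\kappa$ really is $\kappa$-independent on $(0,f(0))$ despite the implicit definition; everything else is a telescoping identity followed by an application of the martingale property.
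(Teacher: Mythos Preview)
Your proof is correct. The paper states this corollary without proof, leaving it as an immediate consequence of the surrounding results; your argument makes explicit precisely the reasoning the paper implies via Example~\ref{ex:indepkernel} and Theorem~\ref{t:mainthm}, namely that a kernel with $y$-independent $\Phi$ is simultaneously increasing and decreasing, so the same $\psi_\kappa$ is both a sub- and super-hedge root and hence a perfect replicator.

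Your direct verification that $\psi_\kappa$ is $\kappa$-independent and that $H(x,y)=\psi(y)-\psi(x)-\psi'(x)(y-x)$ holds identically is a nice self-contained alternative that sidesteps Theorem~\ref{t:mainthm} entirely and makes the replication transparent for \emph{every} partition, not merely in a continuous limit. One small remark: for the final pricing step you take expectations and invoke the martingale property to kill the dynamic terms, which tacitly needs integrability of $\psi'(X_{t_k})(X_{t_{k+1}}-X_{t_k})$; the paper instead defines the price of the semi-static strategy directly via Definition~\ref{d:2.9semistatic} as $\int_{\R^+}\psi(x)\mu(dx)$, so no such appeal is needed. Either route yields the stated conclusion.
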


\begin{example} \label{ex:oneprice} Recall the definitions of the kernels
$H^B$ and $H^Q$ and Example \ref{ex:indepkernel}. $\Phi^B(u,y)=2u^{-2}$
and so $\psi'(u)=-2/u$ and $\psi(u)=-2 \log(u)$. Thus
$H^B(x,y)=\psi(y)-\psi(x)-\psi'(x)(y-x)$ and the strategy $(\psi,-\psi')$
replicates the payoff perfectly for any price realisation. The observation
that $H^B$ has one model-independent price was first made by Bondarenko in
\cite{Bondarenko:07}. Similarly, $H^Q(x,y)=\psi(y)-\psi(x)-\psi'(x)(y-x)$,
where $\psi(x)=x^2$. An alternative analysis of these two payoffs is due to
Neuberger \cite{Neuberger:10}. Neuberger introduces the aggregation property.
Translated into the notation of our setting, a kernel enjoys the aggregation
property if $\E[V_H(X,P^{(n)})]=\E[H(X_T-X_0)]$. Both 
Bondarenko~\cite{Bondarenko:07} and Neuberger~\cite{Neuberger:10}
advocate the use of $H^B$ due to the fact that its price is not sensitive to
the price path, but only to the value of $X_T$. \end{example}

\section{Non-zero interest rates}
\label{sec:nzir}

To date we have worked with forward prices. This has the implication that the 
dynamic part of a hedging strategy has zero cost. In this section we outline 
how 
our analysis can be extended to non-zero, but deterministic, interest rates.

Suppose that interest rates are deterministic. Let $D_t=D_t(T)$ be the 
discount factor over $[t,T]$ so that the asset price realisation ($s=(s_t)_{0 
\leq t \leq T}$) and the forward price realisation are related by $s(t)=D_t 
f(t)$. In the case of constant interest rates $D_t(T)=e^{-r(T-t)}$ so that 
$s(t)=e^{-r(T-t)}f(t)$.

Let $P$ be a partition of $[0,T]$. For $k \in
\{0,1,...,N-1\}$ write $s_k=s(t_k)$, $f_k=f(t_k)$ and $D_k=D_{t_k}(T)$. 
Set $D_{k,k+1}=D_{k+1}/D_k$. Note that if interest
rates are non-negative then $D_{k,k+1} \geq 1$.

Let $G$ be the kernel of a variation swap and write
$G_k(x,y)=G(D_k x,D_k y)$. 
Then the payoff of the 
variance swap is given by  
\[ V_{G}(s,P) = \sum_{k=0}^{N-1} G(D_k f_k, D_{k+1} f_{k+1}) 
= \sum_{k=0}^{N-1} G_k(f_k,D_{k,k+1} f_{k+1}). \]

\begin{proposition} \label{p:decomp}
Suppose that there exists a variation swap kernel $H$, functions $\eta$, 
$\epsilon$, $B$ and a constant $A \in \R$ such that for all $D>0$
\begin{equation} \label{eq:boundinterest}
G_k(x,yD) \geq A H(x,y)+\eta(y)-\eta(x)+\epsilon(x,k,D)(y-x)+B(k,D).
\end{equation}
Without loss of generality we may take $\eta(f(0))=0$.

Suppose that there exists a semi-static sub-hedging strategy $(\psi, 
\Delta)$ for the variation swap with kernel $H$.
Then 
\[V_G(s,P) \geq (A\psi + \eta)(f(T))+\sum_k 
[\epsilon(f_k,k,D_{k,k+1})+ \delta_{t_k}((f(t)_{t \leq 
t_k})](f_{k+1}-f_k) + 
\sum_k 
B(k,D_{k,k+1}),\]
and there is a model-independent sub-hedge and price lower bound for 
$V_G$.
\end{proposition}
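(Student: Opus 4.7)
The plan is to convert the pointwise inequality (\ref{eq:boundinterest}) into a pathwise bound on $V_G(s,P)$ by summing over the partition, and then to invoke the hypothesised sub-hedge of $V_H$. First, apply (\ref{eq:boundinterest}) at each step with $x=f_k$, $y=f_{k+1}$, $D=D_{k,k+1}$ to obtain
\[
G_k(f_k, D_{k,k+1} f_{k+1}) \geq A H(f_k, f_{k+1}) + \eta(f_{k+1}) - \eta(f_k) + \epsilon(f_k,k,D_{k,k+1})(f_{k+1}-f_k) + B(k,D_{k,k+1}).
\]
Summing over $k=0,\ldots,N-1$ and using $V_G(s,P)=\sum_k G_k(f_k, D_{k,k+1} f_{k+1})$ together with the telescoping identity $\sum_k[\eta(f_{k+1})-\eta(f_k)] = \eta(f(T))-\eta(f(0)) = \eta(f(T))$ (via the normalisation $\eta(f(0))=0$) gives
\[
V_G(s,P) \geq A V_H(f,P) + \eta(f(T)) + \sum_k \epsilon(f_k,k,D_{k,k+1})(f_{k+1}-f_k) + \sum_k B(k,D_{k,k+1}).
\]

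Next, insert the hypothesised sub-hedge $(\psi,\Delta)$ for $V_H$: by Definition~\ref{d:2.10subhedge} one has $V_H(f,P) \geq \psi(f(T)) + \sum_k \delta_{t_k}((f(t))_{t \leq t_k})(f_{k+1}-f_k)$. Multiplying by $A$ (which preserves the inequality provided $A \geq 0$, the case implicitly required if the bound on $G_k$ is to be combined with a \emph{sub}-hedge of $V_H$ rather than a super-hedge) and substituting yields the pathwise lower bound claimed in the proposition, once the constant $A$ is absorbed into the dynamic weights, as the displayed formula of the proposition does implicitly.

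Finally, recognise the right-hand side as the payoff of a semi-static hedging strategy in the sense of Definition~\ref{d:2.9semistatic}, with static leg $(A\psi + \eta)(f(T))$, time-dependent dynamic weights $\epsilon(f_k,k,D_{k,k+1}) + A\delta_{t_k}(\cdot)$, and a deterministic additive constant $\sum_k B(k,D_{k,k+1})$. Since $\psi$ is a candidate sub-hedge payoff (hence in $\Psi_0$, and so a difference of convex functions) and $\eta$ is assumed smooth enough to be synthesisable, the combination $A\psi + \eta$ lies in $\Psi$; its price is then read off from the call surface $C_\mu$ via the representation (\ref{eqn:staticpayoff}) and Definition~\ref{d:2.9semistatic}, and adding the deterministic constant produces an explicit model-independent lower bound for $V_G$. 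The whole argument is essentially bookkeeping once (\ref{eq:boundinterest}) is in place; the only point requiring care is ensuring that the decomposition $(A,H,\eta,\epsilon,B)$ actually exists for the given $G$ and that $A\psi + \eta$ remains synthesisable, both of which will be the focus of the examples worked out later in the section.
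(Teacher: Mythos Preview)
Your proof is correct and follows essentially the same approach as the paper: apply (\ref{eq:boundinterest}) termwise with $x=f_k$, $y=f_{k+1}$, $D=D_{k,k+1}$, sum, telescope $\eta$, and then invoke the sub-hedge inequality for $V_H$. Your observation that the step $AV_H(f,P)\geq A[\psi(f(T))+\cdots]$ requires $A\geq 0$ is a valid caveat which the paper leaves implicit (and which holds in all the examples, where $A=1$ or $A=\bar D^2$); you are also right that the displayed conclusion in the proposition tacitly absorbs the factor $A$ into the dynamic weight.
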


\begin{proof}
We have
\begin{eqnarray} \label{eq:bound}
V_{G}(s,P) &=& \sum_{k=0}^{N-1} G_k(f_k,D_{k,k+1} f_{k+1}) \nonumber \\
&\geq& \sum_k [A H(f_k,f_{k+1}) + \eta(f_{k+1})-\eta(f_k) + \epsilon(f_k,k,D_{k,k+1})(f_{k+1}-f_k) + B(k,D_{k,k+1})] \nonumber \\
&\geq& A[\psi(f(T)) + \sum_k \delta_{t_k}((f(t)_{t \leq t_k}) 
(f_{k+1}-f_k)]+ 
\eta(f(T))  
\nonumber \\
&& \hspace{10mm} + \sum_k \epsilon(f_k,k,D_{k,k+1})(f_{k+1}-f_k)+\sum_k 
B(k,D_{k,k+1}) \nonumber 
\end{eqnarray}
\end{proof}

\begin{remark} If we are content to assume that interest rates are 
non-negative then we only need (\ref{eq:boundinterest}) to hold for $D 
\geq 1$. \end{remark}

\begin{remark} The price for the floating leg associated with the hedge
is the price of the static vanilla portfolio with payoff
$(A\psi+\eta)(f(T))$ plus the constant $\sum_{k=0}^{N-1}
B(k,D_{k,k+1})$. \end{remark}

\begin{corollary}
Suppose $H$ is an increasing variance kernel, and $\psi$ is of Class 
$\mathcal K$. 
If 
(\ref{eq:boundinterest}) holds then we have a path-wise sub-hedge and a
model independent bound on the price of $V_G$.
\end{corollary}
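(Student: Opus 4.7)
The plan is to chain two results already proved in the paper. First, since $H$ is an increasing variance kernel and $\psi$ is of Class $\mathcal K$, Theorem \ref{t:mainthm}(i)(a) tells us that $(\psi, -\psi')$ is a semi-static sub-hedge for the variation swap with kernel $H$, so that for any partition $P$ and price realisation $f$,
\[
V_H(f,P) \;\geq\; \psi(f(T)) - \psi(f(0)) - \sum_k \psi'(f(t_k))\bigl(f(t_{k+1})-f(t_k)\bigr).
\]

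Next, with this sub-hedge in hand, the hypothesis (\ref{eq:boundinterest}) is precisely what is needed to invoke Proposition \ref{p:decomp} with $\delta_{t_k}((f(t))_{t\leq t_k}) = -\psi'(f(t_k))$. The conclusion of the proposition gives directly
\[
V_G(s,P) \;\geq\; (A\psi + \eta)(f(T)) + \sum_k \bigl[\epsilon(f_k,k,D_{k,k+1}) - \psi'(f_k)\bigr](f_{k+1}-f_k) + \sum_k B(k,D_{k,k+1}).
\]

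It then remains to read off the right-hand side as the payoff of a model-independent semi-static hedging strategy. The vanilla portion $(A\psi + \eta)(f(T))$ is synthesised from calls and puts via the decomposition (\ref{eqn:staticpayoff}), and its price is pinned down by the traded call prices $C(K)$. The dynamic portion has $\delta_{t_k}(x) = \epsilon(x, k, D_{k,k+1}) - \psi'(x)$, a Markov dynamic strategy in which the $k$-dependence enters only through the deterministic discount factors and hence is causally adapted; this is admissible as a dynamic strategy in the sense of our earlier definitions (it is no longer time-homogeneous, but that was only a convenience in the zero-interest-rate setting). Finally $\sum_k B(k,D_{k,k+1})$ is a deterministic constant that merely shifts the total cost. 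Combining, the price of the floating leg of $V_G$ is bounded below by the price of the vanilla portfolio $(A\psi+\eta)$ plus the constant $\sum_k B(k,D_{k,k+1})$, giving the claimed model-independent lower bound.

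There is essentially no substantive obstacle: the proof is a pure assembly of earlier results. The only thing to check is that $A\psi + \eta$ is indeed synthesisable, which is a mild regularity requirement on $\eta$ (expressibility as a difference of convex functions with appropriate integrability against $\mu$) that will be verified case by case whenever this corollary is applied to a concrete kernel $G$ arising from non-zero interest rates.
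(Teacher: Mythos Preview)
Your proposal is correct and follows exactly the approach implicit in the paper: the corollary is stated without proof there because it is an immediate combination of Theorem~\ref{t:mainthm}(i)(a) (which furnishes the required semi-static sub-hedge $(\psi,-\psi')$ for $H$) and Proposition~\ref{p:decomp} (which upgrades any such sub-hedge to one for $V_G$ under hypothesis~(\ref{eq:boundinterest})). Your additional remarks about the Markov but time-inhomogeneous nature of the resulting dynamic strategy and the mild synthesisability requirement on $\eta$ are accurate and helpful clarifications.
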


In the setting of increasing or decreasing variance kernels
the bound in (\ref{eq:bound}) will be tight provided 
$(\psi,-\psi')$ is 
a tight semi-static hedge for $V_H(f,P)$ and there is equality in 
Equation (\ref{eq:boundinterest}).

\begin{example}\label{eg:HR}
Suppose $G(x,y)=H^R(x,y)=\frac{(y-x)^2}{x^2}$. Then $G_k(x,y)=G(x,y)$, so that 
$\epsilon(x,k,D)$ and $B(k,D)$ will not depend on $k$. Moreover,
\begin{eqnarray*}
G(x,yD) &=& \frac{1}{x^2} (Dy-Dx+Dx-x)^2 \\ 
&=& D^2 \left(\frac{y-x}{x}\right)^2 + D\frac{(D-1)}{x}(y-x)+(D-1)^2 
\end{eqnarray*}
Suppose that interest rates are non-negative so that $D_{k,k+1} \geq 1$. 
Then (\ref{eq:boundinterest}) holds for $A=1$, $\eta=0$, 
$\epsilon(x,D)=D(D-1)/x$ and $B(D)=(D-1)^2$.

Note that there is an inequality in (\ref{eq:boundinterest}) for 
$A=1$. If $D_{k,k+1}$ is independent of $k$ 
(the natural example is to assume
that
interest rates are constant and the partition is uniform, in which case
$d= \log D_{k,k+1} = {rT/N}$)
then we can have equality by 
taking $A=e^{2rT/N}$. In that case we have an improved bound, but the 
improvement becomes negligible in the limit $N \uparrow \infty$.
\end{example}

\begin{example}\label{eg:HL}
Suppose $G(x,y)=H^L(x,y)=(\log(y)-\log(x))^2$. Then $G_k(x,y)=G(x,y)$ 
and $G(x,yD)=(\log 
D+\log y -\log x )^2=H^L(x,y)+2\log D (\log y -\log x)+(\log D)^2.$

Suppose now that the partition is such that $D_{k,k+1}$ is independent 
of $k$, and set $d=\log D_{k,k+1}$. 
Then Equation (\ref{eq:boundinterest}) holds with equality 
for
$A=1$, $\eta(y)=2d\log y$, $\epsilon=0$ and $B(D)=d^2$. 
\end{example}

\begin{example}\label{eg:HB}
Suppose $G(x,y)=H^B(x,y)=-2(\log y -\log x) - (y/x-1))$. Then 
$G_k(x,y)=G(x,y)$ and
\begin{eqnarray*}
G(x,yD) &=& -2(\log y -\log x +\log D)+2D(y-x)+2(D-1) \\
&=& H^B(x,y)+2(D-1)(y/x-1)+H^B(1,D). 
\end{eqnarray*}
Then Equation (\ref{eq:boundinterest}) holds with equality for $A=1$, 
$\eta(y)=0$, 
$\epsilon(x,D)=2(D-1)/x$, $B(D)=H^B(1,D)$.
\end{example}

We can consider the limit as the partition becomes dense, in which case 
the bounds for the variance swap become tight.
For definiteness we will assume that we cave a sequence of uniform 
partitions with mesh size tending to zero, and that interest rates are 
constant, though this can be 
weakened for the squared return and Bondarenko kernels.

Then, for each of the three examples above we have that 
$\sum_{k=0}^{N-1} B(k, D_{k,k+1}) = N B(e^{rT/N}) \rightarrow 0$. 
Further, in each case $\eta(y) \rightarrow 0$, and $A=1$. 
Then 
in the limit the lower bound on the price of the variance 
swap based on the price realisation $s$ is the same as the upper and 
lower bounds for the variance swap defined relative to the forward 
price $f$. Thus, for variance swaps based on frequent monitoring, the bounds 
we have calculated in earlier sections based on the forward price may 
also be used for undiscounted price processes.

\subsection{Super-hedges and upper bounds}
\begin{corollary}
Suppose there exists $H$, $\eta$, $\epsilon$, $B$, and $A$ such that
\begin{equation} \label{eq:boundinterestrev}
G_k(x,yD) \leq A H(x,y)+\eta(y)-\eta(x)+\epsilon(x,k,D)(y-x)+B(k,D),
\end{equation}
and suppose
that there exists a semi-static super-hedging strategy $(\psi,
\Delta)$ for the variation swap with kernel $H$. Then there is a
corresponding  model-independent super-hedge and price upper bound for
$V_G$.
\end{corollary}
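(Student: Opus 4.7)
The plan is to mirror the argument used for Proposition~\ref{p:decomp}, flipping every inequality. All the structural ingredients are identical; only the direction of the bounds needs to be reversed, and the resulting composite object is interpreted as a super-hedge rather than a sub-hedge.

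First, I would sum the assumed path-wise inequality (\ref{eq:boundinterestrev}) over $k = 0, \ldots, N-1$, obtaining
\[
V_G(s,P) \;\leq\; A \sum_{k=0}^{N-1} H(f_k,f_{k+1}) \;+\; \sum_{k=0}^{N-1}\bigl[\eta(f_{k+1})-\eta(f_k)\bigr] \;+\; \sum_{k=0}^{N-1}\epsilon(f_k,k,D_{k,k+1})(f_{k+1}-f_k) \;+\; \sum_{k=0}^{N-1} B(k,D_{k,k+1}).
\]
The $\eta$-sum telescopes to $\eta(f(T))-\eta(f(0))$, and as in Proposition~\ref{p:decomp} we may assume $\eta(f(0))=0$ without loss of generality.

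Next, I would invoke the assumed semi-static super-hedge $(\psi,\Delta)$ for $V_H$. By Definition~\ref{d:2.10subhedge} this gives
\[
\sum_{k=0}^{N-1} H(f_k,f_{k+1}) \;\leq\; \psi(f(T)) \;+\; \sum_{k=0}^{N-1}\delta_{t_k}\bigl((f(t))_{0 \leq t \leq t_k}\bigr)(f_{k+1}-f_k).
\]
Multiplying by $A$ (which is non-negative in all the cases of interest, matching the convention implicit in Proposition~\ref{p:decomp}) and substituting into the previous display yields
\[
V_G(s,P) \;\leq\; (A\psi + \eta)(f(T)) \;+\; \sum_{k=0}^{N-1}\bigl[A\,\delta_{t_k}((f(t))_{t \leq t_k}) + \epsilon(f_k,k,D_{k,k+1})\bigr](f_{k+1}-f_k) \;+\; \sum_{k=0}^{N-1} B(k,D_{k,k+1}).
\]
This is precisely a semi-static super-hedging strategy for $V_G$: the static payoff $(A\psi+\eta) \in \Psi$ is synthesisable from calls and puts, the dynamic part is the THMD-type strategy $A\delta_{t_k}+\epsilon(\cdot,k,D_{k,k+1})$, and $\sum_k B(k,D_{k,k+1})$ is a deterministic constant determined by the partition and interest rates.

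Finally, the model-independent upper bound on the price of $V_G$ is read off as the cost of the static vanilla portfolio with payoff $(A\psi+\eta)(f(T))$ (computed via Definition~\ref{d:2.9semistatic}) plus the deterministic constant $\sum_{k=0}^{N-1} B(k,D_{k,k+1})$; the dynamic component contributes zero cost since it consists of forward positions. There is no substantive obstacle: the whole argument is a mechanical mirroring of Proposition~\ref{p:decomp}, and the only point requiring mild care is keeping track of the implicit sign convention $A \geq 0$ so that multiplying the super-hedging inequality for $V_H$ by $A$ preserves the direction.
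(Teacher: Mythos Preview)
Your proposal is correct and matches the paper's approach exactly: the paper does not supply an explicit proof for this corollary, treating it as an immediate consequence of reversing the inequalities in Proposition~\ref{p:decomp}. Your careful note that $A \geq 0$ is needed to preserve the direction when multiplying the super-hedging inequality is a detail the paper leaves implicit (and which holds in all the examples considered).
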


The analysis of the kernels $H^R, H^L, H^B$ and upper bounds  
is similar to that in Examples~\ref{eg:HR}---\ref{eg:HB} above. 
For the kernel $H^B$, the choices listed in Example~\ref{eg:HB} give equality 
in (\ref{eq:boundinterestrev})
and can be used equally for upper bounds.
Provided 
that we have an upper bound for $D_{k,k+1}$, so that $D_{k,k+1} \leq \bar{D}$ 
uniformly in $k$, for the kernel $H^R$ 
we may take $A=\bar{D}^2$, $\eta=0$, $\epsilon(x,D)=D(D-1)/x$ and 
$B(D)=(D-1)^2$. Finally, for $H^L$, provided interest rates are non-negative, 
we can write
\[ G(x,yD)
=H^L(x,y)+2\log D (\log y -\log x)+(\log D)^2
\leq H^L(x,y) +2 \frac{\log D}{x} (y-x) +(\log D)^2  \]
so that (\ref{eq:boundinterestrev}) holds for
$A=1$, $\eta=0$, $\epsilon(x,D)= 2 (\log D )/x$ and
$B(D)=(\log D)^2$. Note that, unlike for the lower bound in 
Example~\ref{eg:HL}, for the upper bound we do not need to assume that 
$D_{k,k+1}$ is 
independent of $k$.

\begin{remark} In his analysis of lower bounds for the kernel $H^L$, 
Kahal\'e~\cite{Kahale:11} does not need to assume the partition is uniform and 
that interest rates are constant (or more generally that $D_{k,k+1}$ is 
constant), and can allow for arbitrary finite partitions and deterministic 
interest rates.  Our results complement his results nicely. Although we need 
the assumption that $D_{k,k+1}$ is constant to recover Kahal\'e's result in 
the setting of lower bounds and the kernel $H^L$, in all other cases of 
study (upper 
bounds for $V_{H^L}$ and upper and lower bounds for $V_{H^R}$ and $V_{H^B}$) 
our methods also allow for arbitrary partitions and non-constant but 
deterministic interest 
rates.
\end{remark}

\section{Numerical Results}
\label{sec:numerical}

Given a continuum of call prices, it is possible to calculate the model
independent bounds for the prices of variance swaps. When the implied terminal
distribution of the asset price is simple it is sometimes possible to
calculate the monotone functions associated with the Perkins embedding
explicitly (see Example 5.4) and to obtain a closed form integral expression
for the model independent upper and lower bounds. For more realistic and
complex target laws, the monotone functions and bounds can still be calculated
numerically. The case when the terminal law is lognormally distributed is of
particular practical interest. 

A standard time frame for a volatility swap is
30 days or one month ($T=1/12$), which is the time frame used for the widely 
quoted 
VIX index. Figure~\ref{fig:num} plots the upper and lower bounds 
for the
prices of variance swaps based on the kernels $H_R$ and $H_L$ relative to the
cost of $-2 \log$ contracts (the Neuberger/Dupire price of the standard hedge
or `VIX price') against the volatility parameter of the lognormal (terminal)
distribution centered at $1$. More precisely, the bounds are plots of 
\[ \sigma
\rightarrow {\E[\psi_{\kappa,H}(X_{\sigma/ 
\sqrt{12}})]}/{\E[-2\log X_{\sigma/\sqrt{12}}}], \hspace{10mm} \mbox{and} 
\hspace{10mm}
\sigma
\rightarrow {\E[\psi_{\ell,H}(X_{\sigma/
\sqrt{12}})]}/{\E[-2\log X_{\sigma/ \sqrt{12}}}], \]
where
$X_\sigma \equiv e^{\sigma N - \sigma^2/2}$ is the lognormal random variable 
with volatility parameter
$\sigma$ and $H = H^R$ or $H^L$. Here, $\psi_{K,H}$ is the function given in 
Definition~\ref{d:psi} and $\kappa$ is
chosen according to Proposition~\ref{p:bestkappa} (with $\ell$ chosen 
similarly).
Thus the upper
bound for the kernel $H_L$ and the lower bound for the kernel $H_R$ correspond
to the decreasing function $\kappa$ associated with the Perkins embedding,
while the other two bounds are constructed with the increasing function
$\ell$ associated with the reversed Perkins embedding. 

Note that the price of a variance swap in the Black-Scholes model (as given by 
$\E [ - 2 \log X_{\sigma \sqrt{T}} ])$ is an increasing function of 
volatility. The upper and lower bounds are also increasing functions of 
volatility, and, as can be seen in the figure, they also become wider as 
volatility increases, when expressed as a ratio against the no-jump case. For 
reasonable values of volatility, and for both kernels, the impact of jumps is 
to affect the price by a factor of less than two, and for the kernel $H^L$ the 
bounds are even tighter. The observation that the bounds for the kernel $H_R$ 
are wider than those for the kernel $H_L$ is partly explained by considering 
the leading term in the expansion of the hedging error (see Section 3.2). We 
have $J_R(x) \approx {2x^3}/{3}$ whereas $J_L(x) \approx -{x^3}/{3}$ so that 
the magnitude of the leading error term for $H_R$ is twice that of the leading 
error term for $H_L$. Note that for the optimal martingales the jumps are not 
local, so this approximation becomes less relevant as $\sigma$ increases.

\begin{figure}[H]\label{Fig.3}
\begin{center}
\includegraphics[height=10cm,width=15cm]{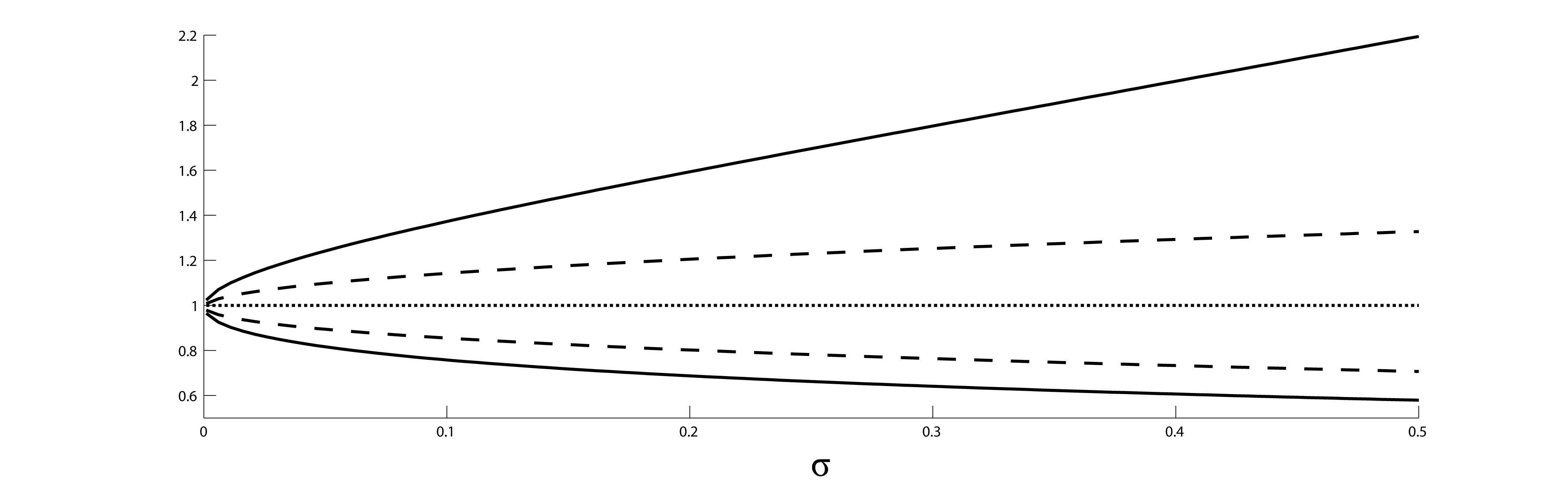}
\caption{Model independent upper and lower bounds for the prices of variance 
swaps based on the kernels $H_L$ (solid lines) and on $H_R$ (dashed lines)
relative to the price of $-2 \log$ contracts (dotted line) in the case when
the terminal distribution is lognormal with volatility between $0$ and $0.5$. 
Here $T=1/12$ and we work with variance swaps on forward prices. }
\label{fig:num}
\end{center}
\end{figure}

\section{Summary and concluding remarks}
\label{sec:remarks}

This article developed from an attempt to express the results of 
Kahal\'{e}~\cite{Kahale:11} on no-arbitrage lower bounds for the prices of 
variance swaps in the framework of model-independent hedging, in which 
extremal 
models and 
prices are associated with extremal solutions of the Skorokhod embedding 
problem. Beginning with Hobson~\cite{Hobson:98}, the focus in this literature 
is on hedging, and on finding pathwise inequalities relating the payoff of the 
exotic, path-dependent derivative and the payoff of a static vanilla call 
portfolio combined with the gains from trade from an investment in the 
underlying security. In the context of variance swaps we find that the lower 
bound is associated 
with a martingale price process which can be expressed as a time-change of the 
Perkins solution of the Skorokhod embedding problem.
This embedding has appeared previously in finance 
in the construction of model-independent bounds for the prices of barrier 
options (Brown et al~\cite{BrownHobsonRogers:01}).

We approach the problem of finding hedging strategies in a more general 
setting than Kahal\'{e}~\cite{Kahale:11} in that we consider a variety 
of kernels in the definition of the
variance swap. The ability to consider general kernels 
allows us to emphasise the dependence of the payoff on the 
presence and character of the jumps, and to show that the nature of this
dependence is strongly influenced by the form of the kernel.
Bondarenko~\cite{Bondarenko:07} and Neuberger~\cite{Neuberger:10} argue that
the finance industry should consider defining variance swaps using the 
kernel $H^B$ as then they
can be replicated perfectly, even in the presence of jumps, recall
Example \ref{ex:oneprice}. The counterargument is that variance swaps provide
value precisely because they are not redundant in this way. Sophisticated
investors want to be able to take positions on the likely presence and
direction of jumps. This is possible if the variance swap is defined using the
kernel $H^R$ or $H^L$, but not using $H^B$.

Kahal\'e~\cite{Kahale:11} only considers the kernel $H^L$, and lower bounds 
and sub-replicating strategies. On the other hand he works directly with the 
undiscounted asset price, and does not give special attention to  
contracts 
written on the 
forward price. He introduces the class of $V$-convex 
functions which have the property that each such function gives a lower bound 
on the price of the variance swap, and an associated sub-hedge. He then 
proceeds to show that functions $\psi$ of Class $\mathcal L$ (in our notation) 
are $V$-convex. In this way he can deduce a lower bound on the price of a 
variance swap. Further, for a particular choice of decreasing function he can 
show that this lower bound can be attained in the continuous time limit under 
a well-chosen stochastic model --- hence the bound he attains must be a best 
bound.

In contrast, initially we consider contracts based on the forward price. This 
simplifies the analysis significantly and reduces the search for candidate 
sub-hedge payoffs to a search for functions satisfying (\ref{eq:psimin}). The 
condition (\ref{eq:psimin}) is considerably simpler than the corresponding 
condition for V-convexity in Kahal\'e~\cite[Equation (3.1)]{Kahale:11}. The 
fact that we have a more transparent representation of the key property allows 
us to find candidate super-hedge payoffs quite easily and allows us to extend 
the analysis to general variation swap kernels provided they have a 
monotonicity property. Moreover, we can easily develop upper bounds to 
complement the lower bounds. Only later do we introduce interest rates and 
variance swaps written on the undiscounted asset price, at which point we find 
simple inequalities which extend our bounds 
to the general case. In the limit of a dense sequence of 
partitions the same bounds are optimal in both the undiscounted and forward 
price settings. We believe that the two-stage approach brings insight, not 
least because in the forward case there is a direct link to martingales and 
solutions of the Skorokhod embedding problem, and because inequalities such as 
(\ref{eq:boundinterest}) allow us to quantify the price difference between 
contacts written on the undiscounted and forward prices for discrete 
monitoring.

A further contribution of this article is to provide a derivation of bounds on 
the prices of variance swaps without any recourse to probability. This 
involves construction of a class of hedges parameterised by monotone 
functions, and the choice of an optimal element in this class for a given set 
of call prices, together with F\"ollmer's non-probabilistic It\^o calculus. 
Price trajectories for which the bound is path-wise tight have at most one 
jump, after which the trajectory is constant. Probability is only required to 
show that these trajectories correspond to a stochastic model for the price 
process. The relationship between the optimality of the cheapest hedge, 
derived in a purely non-proabilistic fashion, and the optimality of the 
Perkins embedding provides a pleasing completeness to the story.

\bibliography{general}

\begin{thebibliography}{10}

\bibitem{AzemaYor:79a}
J.~Az{\'e}ma and M.~Yor.
\newblock {U}ne solution simple au probl\`eme de {S}korokhod.
\newblock In {\em S\'eminaire de Probabilit\'es, XIII (Univ. Strasbourg,
  Strasbourg, 1977/78)}, volume 721 of {\em Lecture Notes in Math.}, pages
  90--115. Springer, Berlin, 1979.

\bibitem{BickWillinger:1994}
A.~Bick and W.~Willinger.
\newblock Dynamic spanning without probabilities.
\newblock {\em Stochastic Processes and their Aplications}, 50:349--374, 1994.

\bibitem{Bondarenko:07}
O.~Bondarenko.
\newblock Variance trading and market price of variance risk.
\newblock Working paper, 2007.

\bibitem{BreedenLitzenberger:78}
D.T. Breeden and R.H. Litzenberger.
\newblock Prices of state-contingent claims implicit in option prices.
\newblock {\em J. Business}, 51:621--651, 1978.

\bibitem{BroadJain:08}
M.~Broadie and Ashish. Jain.
\newblock The effect of jumps and discrete sampling on volatility and variance
  swaps.
\newblock {\em Int. J. of Th. and App. Finance}, 11(8):761--791, 2008.

\bibitem{BrownHobsonRogers:01}
H.~Brown, D.G. Hobson, and L.C.G Rogers.
\newblock Robust hedging of barrier options.
\newblock {\em Math. Finance}, 11(3):285--314, 2001.

\bibitem{CarrCorso:01}
P.~Carr and A.~Corso.
\newblock Covariance contracting for commodities.
\newblock {\em Energy and Power Risk Management}, April:42--45, 2001.

\bibitem{CarrLee:20092}
P.~Carr and R.~Lee.
\newblock Volatility derivatives.
\newblock {\em Annual Rev. Financ. Econ.}, 1:313--339, 2009.

\bibitem{CarrLee:10}
P.~Carr and R.~Lee.
\newblock Variation and share-weighted variation swaps on time-changed
  {L}{\'e}vy processes.
\newblock Preprint, 2010.

\bibitem{CarrLeeWu:2009}
P.~Carr, R.~Lee, and L.~Wu.
\newblock Variance swaps on time-changed {L}{\'e}vy processes.
\newblock Preprint, 2010.

\bibitem{CoxWang:11}
A.~Cox and J.~Wang.
\newblock Root's barrier: construction, optimality and applications to variance
  options.
\newblock Preprint, 2011.

\bibitem{DavisHobson:07}
M.H.A. Davis and D.G. Hobson.
\newblock The range of traded option prices.
\newblock {\em Mathematical Finance}, 17(1):1--14, 2007.

\bibitem{DemeterfiDerman:99}
K.~Demeterfi, E.~Derman, M.~Kamal, and J.~Zou.
\newblock A guide to volatility and variance swaps.
\newblock {\em The {J}ournal of {D}erivatives}, 6(4):9--32, 1999.

\bibitem{DemeterfiDerman:992}
K.~Demeterfi, E.~Derman, M.~Kamal, and J.~Zou.
\newblock More than you ever wanted to know about volatility swaps.
\newblock {\em {G}oldman-{S}achs {Q}uantitive {S}trategies {R}esearch {N}otes},
  1999.

\bibitem{DubinsGilat:1978}
L.D. Dubins and D.~Gilat.
\newblock On the distribution of maxima of martingales.
\newblock {\em Proceedings of the American Mathematical Society}, 68:337--338,
  1978.

\bibitem{Dupire:1992}
B.~Dupire.
\newblock Arbitrage pricing with stochastic volatility.
\newblock {\em Soci\'et\'e G\'en\'erale, Options Division, Paris}, 1992.

\bibitem{Follmer:79}
H.~F\"{o}llmer.
\newblock Calcul d'{I}t\^o sans probabilites.
\newblock In {\em S\'eminaire de Probabilit\'es, XV (Univ. Strasbourg,
  Strasbourg, 1981)}, volume~15 of {\em Lecture Notes in Math.}, pages
  143--150. Springer, Berlin, 1981.

\bibitem{Hobson:98}
D.G Hobson.
\newblock Robust hedging of the lookback option.
\newblock {\em Finance and Stochastics}, 2:329--347, 1998.

\bibitem{Hobson:10}
D.G. Hobson.
\newblock The {S}korokhod embedding problem and model independent bounds for
  option prices.
\newblock In {\em Paris-Princeton Lecture Notes on Mathematical Finance}.
  Springer, 2010.

\bibitem{HobsonKlimmek:11a}
D.G. Hobson and M.~Klimmek.
\newblock Maximising functionals of the joint law of the maximum and terminal
  value in the {S}korokohd embedding problem.
\newblock Preprint, 2010.

\bibitem{HobsonPedersen:02}
D.G. Hobson and J.~L. Pedersen.
\newblock The minimum maximum of a continuous martingale with given initial and
  terminal laws.
\newblock {\em Ann. Probab.}, 30(2):978--999, 2002.

\bibitem{JarrowProtter:10}
R.~Jarrow, Y.~Kchia, M.~Larsson, and P.~Protter.
\newblock Discretely sampled variance and volatility swaps versus their
  continuous approximations.
\newblock Preprint, 2011.

\bibitem{Kahale:11}
N.~Kahal\'e.
\newblock Model-independent lower bound on variance swaps.
\newblock Preprint, 2011.

\bibitem{Martin:11}
I.~Martin.
\newblock Simple variance swaps.
\newblock Preprint, 2011.

\bibitem{Neuberger:1994}
A.~Neuberger.
\newblock The {L}og {C}ontract.
\newblock {\em Journal of Portfolio Management}, 20(2):74--80, 1994.

\bibitem{Neuberger:10}
A.~Neuberger.
\newblock Realized skewness.
\newblock Working paper, 2010.

\bibitem{Perkins:86}
E.~Perkins.
\newblock The {C}ereteli-{D}avis solution to the ${H}\sp 1$-embedding problem
  and an optimal embedding in {B}rownian motion.
\newblock In {\em Seminar on {S}tochastic {P}rocesses, 1985 (Gainesville, Fla.,
  1985)}, pages 172--223. Birkh\"auser Boston, Boston, MA, 1986.

\bibitem{Platen:2010}
E.~Platen and L.~Chang.
\newblock A cautions note on the design of volatility derivatives.
\newblock ArXiv, http://arxiv.org/abs/1007.2968v1, July 2010.

\bibitem{Skorokhod:65}
A.~V. Skorokhod.
\newblock {\em Studies in the theory of random processes}.
\newblock Translated from the Russian by Scripta Technica, Inc. Addison-Wesley
  Publishing Co., Inc., Reading, Mass., 1965.

\bibitem{ZhuLian:11}
S.~Zhu and G-H. Lian.
\newblock A closed-form exact solution for pricing variance swaps with
  stochastic volatility.
\newblock {\em Math. Finance}, 11:233--256, 2011.

\end{thebibliography}
\end{document}